\newif\ifANON
\ANONtrue
\ANONfalse  

\documentclass[11pt]{article}
\usepackage[margin=1in]{geometry}
\usepackage{graphicx}
\graphicspath{ {../} }
\usepackage{amsmath,amssymb,amsthm}
\usepackage{thm-restate}
\usepackage{hyperref}
\usepackage{todonotes}

\hypersetup{colorlinks=true,allcolors=blue}
\usepackage{cleveref}
\usepackage[normalem]{ulem}

\theoremstyle{plain}
\newtheorem{theorem}{Theorem}[section]
\newtheorem{lemma}[theorem]{Lemma}
\newtheorem{claim}[theorem]{Claim}
\newtheorem{corollary}[theorem]{Corollary}
\newtheorem{proposition}[theorem]{Proposition}

\theoremstyle{definition}
\newtheorem{definition}[theorem]{Definition}

\theoremstyle{remark}
\newtheorem{remark}[theorem]{Remark}

\newcommand{\remove}[1]{}

\newcommand{\R}{\mathbb{R}}
\newcommand{\CALP}{\mathcal{P}}
\newcommand{\calP}{\mathcal{P}}
\newcommand{\calQ}{\mathcal{Q}}
\newcommand{\calR}{\mathcal{R}}
\newcommand{\norm}[1]{\lVert#1\rVert}
\newcommand{\bigpar}[1]{\bigl(#1\bigr)}
\newcommand{\bigsqpar}[1]{\bigl[#1\bigr]}
\DeclareMathOperator{\ddim}{ddim}
\DeclareMathOperator{\diam}{diam}
\DeclareMathOperator{\dist}{dist}
\DeclareMathOperator{\poly}{poly}
\DeclareMathOperator{\cost}{cost}
\DeclareMathOperator{\E}{\mathbb{E}}
\newcommand{\OPT}{\mathrm{OPT}}

\newcommand{\sumbad}{\sum_{(x,y)\in S}}

\usepackage{xcolor}

\title{Dimension Reduction and Coresets for Min-Sum \texorpdfstring{$k$}{k}-Clustering}

\ifANON
\author{Anonymous Authors}
\date{}
\else
\author{Asaf Dallal
    \thanks{Email: \texttt{asaf.dallal@weizmann.ac.il}
    }
    \qquad
    Robert Krauthgamer
        \thanks{The Harry Weinrebe Professorial Chair of Computer Science.
            Work partially supported by the Israel Science Foundation grant \#1336/23.
            Email: \texttt{robert.krauthgamer@weizmann.ac.il}
        }
    \\  Weizmann Institute of Science
}
\fi

\begin{document}
\maketitle

\begin{abstract}
The min-sum $k$-clustering problem asks to partition an input of $n$ points 
into $k$ clusters so as to minimize the total pairwise distance within clusters.
It is a natural alternative to center-based objectives when the data 
is not well captured by a small set of representative points. 
We study for this problem two standard sketching methods, 
dimension reduction and coresets,
which represent the input more compactly while preserving 
the min-sum cost of every $k$-partition up to $1\pm\varepsilon$ factor. 

Our first contribution addresses dimension reduction in Euclidean space. 
The Johnson-Lindenstrauss (JL) Lemma readily implies that 
a random linear map to target dimension $t=O(\varepsilon^{-2}\log n)$ preserves, 
with high probability, the min-sum cost of all $k$-partitions. 
We show that target dimension
$t=O(\varepsilon^{-2}(\ddim\cdot\log(1/\varepsilon)+\log k+\log\log n))$
suffices for Euclidean inputs of doubling dimension at most $\ddim$, 
and moreover this bound is nearly optimal. 
We also show that target dimension $t=O(\varepsilon^{-2}(\log(1/\varepsilon)+\log k+\log\Delta))$
suffices for Euclidean inputs of aspect ratio at most $\Delta$. 

Our second contribution is to design the first $\varepsilon$-coreset
for metric min-sum $k$-clustering. 
It is actually a movement-based coreset,
has size $(1/\varepsilon)^{O(\ddim)} k\log n$ 
for inputs of doubling dimension at most $\ddim$, 
and can be constructed in randomized near-linear-time.
We further use these coresets to design streaming algorithms for the problem.
\end{abstract}

\section{Introduction}

Clustering is a fundamental task in geometry and data analysis, and it is commonly formulated as partitioning a dataset $X$ of points into $k$ groups,
called clusters, so that points in the same group are similar to one another.
We study min-sum $k$-clustering, where the cost of each cluster is computed 
directly from the pairwise distances between its points,
in contrast to center-based clustering, like $k$-median and $k$-means,
where the cost of each cluster is computed using a center point.  This min-sum objective can model better clusters that are coherent but not well represented by centers; in particular, it does not impose a
Voronoi (nearest-neighbor) structure on the partition. 
For instance, min-sum 
$k$-clustering has been used to recover a grouping of data from noisy pairwise similarity information, 
and to cluster large collections of protein sequences~\cite{Voevodski2011MinSumProtein, Voevodski2012ActiveClustering}.

Formally, in the min-sum \(k\)-clustering problem (also called $k$-min-sum), 
the input is an $n$-point metric space \((X,\dist)\)
and the goal is to find a $k$-partition \(\calP=\{P_1,\ldots,P_k\}\) of \(X\), 
that minimizes the objective  
\begin{equation} \label{eq:minsum}
        \cost(\calP)
    =
    \sum_{i=1}^k
    \sum_{x,y\in P_i}
    \dist(x,y).
\end{equation}
To simplify notation, we sum over all ordered pairs, which includes every pair $\{x,y\}\subseteq P_i$ twice.
The input $X$ can actually be a subset of points from a larger metric space, 
such as $\R^d$ equipped with a distance function that is derived from an $\ell_p$ norm or from a kernel function. 
A well-known example where the min-sum objective is highly effective consists of points spread evenly over two co-centric rings that form two natural clusters,
even though each ring has no natural center point that is close to its own points and far from the other ring, as illustrated in \Cref{fig:ring-example}.
This effectiveness has led to increased interest in min-sum $k$-clustering,

including algorithms in various models and metric spaces, 
and even in (non-metric) squared-Euclidean distance~\cite{Schulman00, BCR01, CzumajSohler2010, banerjee2020min, NaderiRS24, AmanSP24, lee2024approximability}.

Nevertheless, the known algorithmic tools for this objective
  are still lagging behind those for center-based objectives,
  which may explain why the min-sum objective is used much less in practice.
  
Notice that the total distance between all pairs of points in $X$ is split into inter-cluster pairs and intra-cluster pairs, 
thus minimizing the objective~\eqref{eq:minsum} is equivalent to maximizing its complement, 
which is the (metric) Max-$k$-Cut problem and whose special case \(k=2\) is the famous Max-Cut problem.
Note, however, that this equivalence does not extend to multiplicative approximation of the optimum.

\begin{figure}[t]
    \centering
    \includegraphics[width=0.82\linewidth]{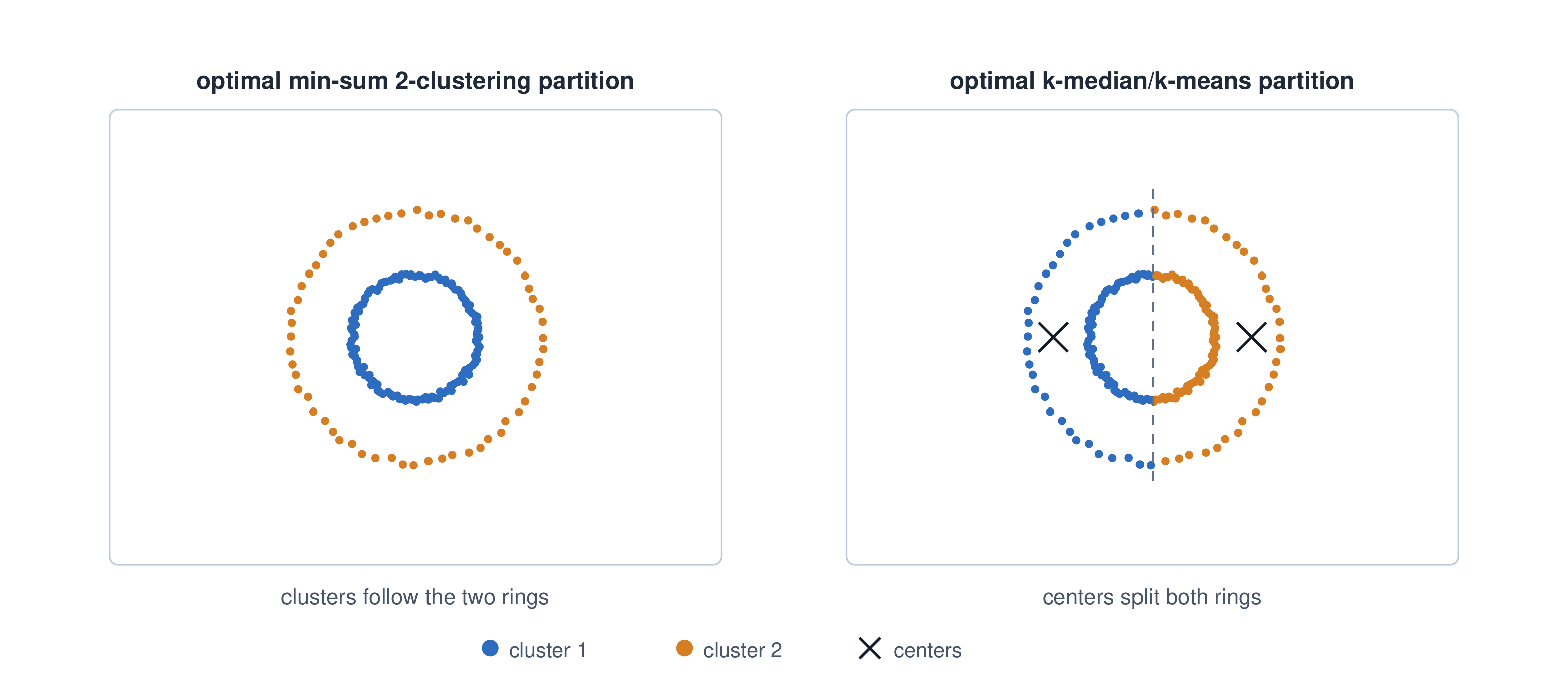}
    \caption{A min-sum $k$-partition can differ significantly from a \(k\)-median or \(k\)-means partition.  Here \(k=2\) in Euclidean $\R^2$; the outer ring has
    radius \(\alpha>1\) times that of the inner ring and contains \(1/\alpha\)
    as many points. In the min-sum optimum, each ring is assigned its own cluster, whereas the center-based optimum is induced by two centers and cuts
    both rings.
    }
    \label{fig:ring-example}
\end{figure}

A sketch is a small or simple representation of the input that approximately preserves some quantities of interest.
Sketching is a fundamental technique in algorithm design,
and is especially important for processing large datasets, 
where it can significantly improve the running time, storage, and/or communication costs,
which strongly depend on the size of the input representation.

Many different sketching methods were developed,
and they have found numerous applications,
from geometric optimization and numerical linear algebra to streaming
and distributed computation~\cite{Muthukrishnan05,Woodruff14,Phillips17}.
We study sketches for min-sum \(k\)-clustering and seek a for-all guarantee: 
\emph{the sketch preserves the cost of all \(k\)-partitions simultaneously.} 
This would follow easily from the much stronger requirement that all pairwise distances are preserved (multiplicatively). 
We study two common sketching approaches: 
dimension reduction for inputs in Euclidean $\R^d$, which reduces the ambient dimension \(d\), and 
coresets for metric inputs, which reduce the dataset size \(n\). 
Both of them are defined via a map from the input metric to a more "compact" metric, 
in which the sketch is evaluated. 
Via this map, every $k$-partition of $X$ induces a $k$-partition of its image and vice versa, as follows. 
Let \(g:X\to Y\) be a map to a metric space $(Y, \rho)$. For a set
\(A\subseteq X\), let \(g(A)\) be the multiset $\{g(x) \mid x\in A\}$ 
and for a $k$-partition \(\calP=\{P_1,\ldots,P_k\}\) of \(X\), let
\[
    g(\calP)=\{g(P_1),\ldots,g(P_k)\}.
\]
We view this as a $k$-partition of $g(X)$, even though the multisets $g(P_i)$ and $g(P_j)$ for $i\neq j$ might not be disjoint.

A useful way to measure the intrinsic geometry of a metric space $X$ is its \emph{doubling
dimension}, denoted by $\ddim(X)$,
defined as the smallest \(D\geq1\) such that every metric ball 
can be covered by at most \(2^D\) balls of half the radius \cite{GKL03}.
This notion is effective even for inputs in Euclidean $\R^d$, 
as their doubling dimension is at most $O(d)$ but can sometimes be significantly smaller than $d$.
Another important parameter is the \emph{aspect ratio} of a set $S\subseteq X$, defined as
\[
\Delta=\frac{\diam(S)}{\min_{x\neq y\in S}\dist(x,y)},
\]
where \(\diam(S)=\max_{x,y\in S}\dist(x,y)\).
Many sketching results in the literature depend on the doubling dimension or aspect ratio of the input $X$, 
and are thus particularly effective when they are low.

\paragraph{Dimension reduction.}

A standard sketching method is oblivious dimension reduction:
When the input is a high-dimensional Euclidean point set, 
it is useful to project it to a low-dimensional space 
before running downstream applications like a clustering algorithm.
This is a very powerful algorithmic tool,
because the efficiency of many geometric algorithms
depends heavily on the ambient dimension. 
The Johnson-Lindenstrauss (JL) Lemma~\cite{johnson1984extensions} states that 
every \(n\)-point Euclidean metric can be mapped (embedded) into \(\mathbb{R}^t\), 
for \(t=O(\varepsilon^{-2}\log n)\), 
while preserving all pairwise distances within factor \(1\pm\varepsilon\). 
It is known that the above bound on the dimension $t$ is tight in general 
for preserving all pairwise distances~\cite{larsen2016optimality}. 

A standard oblivious construction, which we use below, 
is to apply a random linear map \(f:\mathbb{R}^d\to\mathbb{R}^t\), 
given by a matrix whose entries are independent Gaussian random variables with appropriate scaling, 
and this is known to succeed with high probability. 

The JL guarantee on all pairwise distances clearly implies 
that also the min-sum cost of every partition is preserved up to the same factor. 
However, preserving all the pairwise distances might be an excessive
requirement, and indeed dimension reduction tailored to clustering and
related geometric optimization problems has been studied
extensively.
In particular, for center-based clustering such as $k$-means and $k$-median, 
it suffices to use dimension $t$ that depends on $k$ instead of $n$, 
namely, $t=O(\varepsilon^{-2}\log(k/\varepsilon))$~\cite{Boutsidis2010,CohenEMMP15,DBLP:conf/stoc/BecchettiBC0S19, MakarychevMR19}.

The picture changes however when the center set is restricted: 
for $k$-median with discrete centers (medoids), the target dimension also depends 
on the doubling dimension and on a \(\log\log n\) term \cite{JKSSY26}. 
For min-sum \(k\)-clustering, no analogous dimension-reduction result 
was previously known, and it was open to obtain any guarantee beyond the JL bound.
The main obstacle is the indirect dependence of the objective on cluster sizes, 
as each point \(x\) in cluster \(C\) contributes \(O(|C|)\) terms to the objective. 
In particular, placing \(x\) in a smaller cluster (that has the same average distance from $x$) 
will decrease the contribution of \(x\) to the min-sum objective. 
This is very different from center-based objectives, 
where each point contributes only one term to the objective, its distance to its assigned center. 
Thus the known center-based arguments do not apply, at least not directly.

\paragraph{Coresets.}

Another standard sketching method is a coreset,
where the input $X$ is replaced by a smaller (but weighted) set. 
We propose the following definition, which guarantees correspondence between $k$-partitions of the input $X$ and these of the coreset. Notice that it applies to a metric input (not necessarily Euclidean). 

\begin{definition}[Movement-Based Coreset]\label{def:movement-based-coresets}
Let \((X,\dist)\) be a finite metric space.  
A weighted set \((S,w)\), with \(S\subseteq X\) 
is a \emph{movement-based $\varepsilon$-coreset} 
for min-sum \(k\)-clustering on \(X\) 
if there exists a map \(\pi:X\to S\) that is consistent with the weights, 
namely, $w(s)=|\pi^{-1}(s)|$ for all $s\in S$, 
such that
\[
    \forall \text{$k$-partition $\calP$ of $X$},
    \qquad
    \cost(\pi(\calP))
    \in
    (1\pm\varepsilon)\cost(\calP).
\]
\end{definition}

We call \(\pi\) a \emph{movement rule}; it need not be stored explicitly as part
of the coreset, i.e., the size of the coreset is $|S|$. 
It is easy to see that a $k$-partition of $S$ that splits the weight of a point \(s\in S\) among multiple parts can be replaced, without increasing the min-sum cost, 
by a $k$-partition that does not split the weight of $s$ 
(i.e., its entire weight is assigned to a single part); see \Cref{clm:unsplitting-representatives}. 
Consequently, it suffices to solve the clustering problem on the small coreset $(S,w)$ 
and pull back its solution through the movement rule, 
to a solution for the original input $X$. 
The drawback is that the coreset must be interpreted together with a movement rule, however,
in our constructions this rule is simply the nearest-neighbor (in $S$), and hence $\pi^{-1}(s)$ can be computed from $X$ and $S$, and need not be stored explicitly. 

\Cref{def:movement-based-coresets} is analogous to the usual strong coreset guarantee for center-based
clustering~\cite{Har-PeledM04,frahling2005coresets,BravermanFLSY17}, 
where a weighted set $S$ represents the input $X$ 
if it preserves the cost of every possible choice of centers.  
The key difference is that a solution for min-sum \(k\)-clustering
is not a set of centers but rather a partition of \(X\),
and we thus need a movement rule that specifies a correspondence 
between partitions of \(X\) and partitions of the coreset.
A downside of this movement-based definition is that it does not support 
sampling, reweighting, or similar modifications of the dataset. 
The same approach of a movement rule was used in previous coreset definitions,
particularly for Max-Cut in~\cite{frahling2005coresets}
and for balanced $k$-median in~\cite{BateniBLM14} (called a "mapping coreset"),
although their accuracy requirements are different and not suitable for our min-sum objective.

Previously, Czumaj and Sohler~\cite{CzumajSohler2010} proposed 
a very different definition of a coreset for min-sum $k$-clustering. 
While it is also based on a movement rule from the input $X$ to the coreset $S$,
their coreset $S$ is equipped with modified distances 
and thus need not be a submetric of $X$
(for example, $S$ might not be Euclidean even if $X$ is).
Moreover, they do not require that the min-sum cost of every partition is preserved, 
and in this sense their definition is weaker than ours.
Altogether, the two definitions are incomparable.

\subsection{Our Results}
\label{sec:our-results}

Our first two results are dimension-reduction bounds for min-sum \(k\)-clustering,
one in terms of the doubling dimension of the input, 
and one in terms of its aspect ratio. 
No bounds were previously known for this problem. 
Throughout, a Gaussian JL map \(f:\R^d\to\R^t\) refers to a random linear map 
given by a matrix whose entries are independent \(N(0,1/t)\) random variables.

\begin{restatable}[Euclidean Dimension Reduction for Doubling Sets]{theorem}{MainJLDDim}
\label{thm:main-jl-ddim}
Let $X\subseteq\R^d$ be an instance of min-sum $k$-clustering in Euclidean space, and let
$0<\varepsilon<1/4$.  
Let $f:\R^d\to\R^t$ be a Gaussian JL map, where
$
    t\ge
    c\varepsilon^{-2} \bigpar{\ddim(X)\log(1/\varepsilon)+\log k+\log\log n}
$,
for a sufficiently large constant $c>0$.  
Then, with probability at least $4/5$,
\[
    \forall \text{$k$-partition $\calP$ of $X$}, 
    \qquad 
    \cost\bigpar{f(\CALP)}
    \in 
    (1\pm\varepsilon)\cost(\CALP).
\] 
\end{restatable}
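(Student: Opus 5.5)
We will compare $\cost(f(\calP))$ with $\cost(\calP)$ by charging each pair $(x,y)$ with $x,y$ in the same part according to its distortion. Call a pair \emph{bad} if $\norm{f(x)-f(y)}\notin(1\pm\varepsilon)\norm{x-y}$. The good pairs contribute within $1\pm\varepsilon$ automatically. So it suffices to show that, with probability $4/5$, for every $k$-partition the total bad-pair error $\sum_{(x,y)\in S}\bigl|\norm{f(x)-f(y)}-\norm{x-y}\bigr|$ over intra-cluster bad pairs $S$ is at most $\varepsilon\cost(\calP)$.

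\textbf{Step 1: pointwise bounds.} The plan is to adapt the doubling-dimension JL analysis used for $k$-median with discrete centers (as in~\cite{JKSSY26}). Fix a point $x$ and a scale $r$. By a net argument over $(\varepsilon r)$-nets of $B(x,r)\cap X$, which have size $(1/\varepsilon)^{O(\ddim)}$, and a chaining/Gaussian-tail bound, the following holds except with probability $e^{-\Omega(\varepsilon^2 t)}$: every $y\in X$ with $\norm{x-y}\approx r$ has $\norm{f(x)-f(y)}\in(1\pm O(\varepsilon))\norm{x-y}$, up to an additive $\varepsilon r$ coming from the net error. This additive term is controlled since $\ddim\log(1/\varepsilon)\lesssim\varepsilon^2 t$. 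We also need an upper-tail statement. With probability $1-e^{-\Omega(\varepsilon^2 t)}$, no point of $X$ is stretched by a factor greater than $1+\varepsilon$ relative to any scale. Moreover, the probability of expansion by a factor $\lambda$ decays like $e^{-\Omega(\lambda^2 t)}$, so large expansions are rare enough to sum. Call $x$ \emph{bad at scale $r$} if this fails.

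\textbf{Step 2: reduce to few relevant scales per point.} This is the step where the $\log\log n$ and $\log k$ terms enter. For a point $x$ in cluster $P$, its contribution is $\sum_{y\in P}\norm{x-y}$. This sum is dominated, up to factor $1\pm\varepsilon$, by $O(\log n)$ geometric distance scales around $x$. Only scales within a $\poly(n/\varepsilon)$ ratio of the average distance matter, since pairs at much smaller distances contribute negligibly relative to $\cost$. A union bound over these $O(\log n)$ scales costs a factor $\log n$ in failure probability, and taking logs gives the $\log\log n$ term. To make the guarantee hold for all partitions simultaneously, we avoid a union bound over partitions, which would cost $n\log k$. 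Instead, we argue that the expected total bad contribution is at most $\varepsilon\cdot\cost$-type quantities that hold uniformly. The $\log k$ term should handle the cluster-size issue: a bad point $x$ placed in a small cluster could contribute disproportionately.

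\textbf{Step 3: charging bad points (main obstacle).} For each bad point $x$, we bound its excess $\sum_{y\in P}\bigl(\norm{f(x)-f(y)}-\norm{x-y}\bigr)_+$ by comparing it to the cost contribution of $x$. By the triangle inequality, $\norm{f(x)-f(y)}\le\norm{f(x)-f(x')}+\norm{f(x')-f(y)}$ for a nearby good proxy $x'$. The difficulty is that the adversarial partition may put $x$ in a cluster where all other points are far, so the charging must be relative to $\sum_{y\in P}\norm{x-y}$, not to a fixed quantity. To handle this, I would define for each $x$ a random variable $B_x$: the maximum, over all scales, of the normalized distortion between $x$ and points at that scale. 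Then every cluster's excess is at most $\sum_{x\in P}B_x\cdot\frac{1}{|P|}\sum_{y\in P}\norm{x-y}$-type terms.

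It remains to show that $\sum_x B_x\cdot(\text{contribution of }x)\le\varepsilon\cost$ for all partitions. I would get this by showing the set of points with large $B_x$ is sparse and well spread. Then, in any $k$-partition, each cluster contains only an $\varepsilon$-fraction of bad points in weighted sense, by a Markov argument over $k$ clusters. This last uniformity-over-partitions argument is the part I expect to be hardest.

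The lower-bound direction, contraction, would be handled symmetrically and is easier, since contraction is bounded by $\norm{x-y}$ itself.
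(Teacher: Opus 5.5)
There is a genuine gap, and it is exactly the step you flag as hardest: nothing in your outline gives a bound that holds for partitions chosen \emph{after} $f$ is revealed.

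\textbf{Why your outline does not give uniformity.} Since $t$ has no $\log n$ term, each point is bad with probability $p\approx(\log n)(1/\varepsilon)^{O(\ddim)}e^{-\Omega(\varepsilon^2 t)}\gg 1/n$. So there are many bad points, and their badness is correlated: one contracted direction spoils many pairs at once. The adversary can gather the bad points into one cluster. It can also put a bad point in the same cluster as a heavy, tight group, as in the lower-bound constructions, where a single contracted direction carries a constant fraction of the cost.

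A Markov argument cannot rule this out. The quantities you want to control, $B_x\sum_{y\in P}\norm{x-y}$, depend on the adaptively chosen cluster $P$. Linearity of expectation plus Markov works only when the adaptive part is an indicator multiplying a weight fixed in advance.

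Step~2 has the same defect. Your $O(\log n)$ scales around $x$ are defined relative to the average distance from $x$ to its own cluster, which depends on the partition. So they cannot be fixed before $f$ is drawn and union bounded over.

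Contraction is not easier either. Bounding a bad pair's error by $\norm{x-y}$ is useless when bad pairs make up a constant fraction of a cluster's cost, and contraction is precisely what the lower bounds exploit. Finally, the single nearby good proxy $x'$ in Step~3 need not exist inside the cluster.

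\textbf{What the paper does instead.} Its proof uses three ingredients your outline lacks.

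First, it proves a guarantee for one random subset $C$ that may depend on $f$. The sparse-subgraph theorem of~\cite{MakarychevMR19} gives $C'\subseteq C$ whose distortion graph is $\theta$-everywhere sparse. Hence $\cost(C')$ is preserved by averaging each pair over its at least $(1-2\theta)|C'|$ common undistorted neighbours, with no locality needed. Moreover, $\Pr[u\in C\setminus C']\le\theta$ for every fixed $u$, however $C$ was chosen.

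Second, pairs touching $C\setminus C'$ are charged to partition-independent quantities.
\begin{itemize}
\item The paper fixes an optimal partition with $1$-median centers $c_j^*$. It builds $(\varepsilon r_i/100)$-nets of $B(c_j^*,r_i)$ for $r_i=20\OPT/2^i$, $i\le O(\log n)$. This is a fixed set of $k\varepsilon^{-O(\ddim)}\log n$ points. Union bounding over its pairs, together with the local ball-tail property, is where $\ddim\log(1/\varepsilon)+\log k+\log\log n$ comes from.
\item Each pair is oriented toward the endpoint $y$ whose optimal cluster is larger. The rounding error is then $O(\varepsilon)\bigl(\norm{x-y}+\norm{y-c^*(y)}\bigr)$ plus a negligible floor.
\item Since the weights $\norm{y-c^*(y)}$ are fixed, $\E\bigl[\sum_{(x,y)\in S}\norm{y-c^*(y)}\bigr]\le 2\theta k\OPT$ by \Cref{lem:k-median-centers-cost-upper-bound}.
\end{itemize}

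Third, uniformity over partitions comes from \Cref{lem:bad-cluster-extraction}. If some partition is violated, then some single cluster is violated by $\frac{\varepsilon}{k}\cost(\calP)$. Taking $C$ to be that cluster contradicts the single-subset guarantee once $\theta\lesssim 1/k^2$, which is the second source of the $\log k$ term.

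Without substitutes for these ingredients, your Step~3 does not go through.
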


It is interesting to identify other parameters that can be used to guarantee dimension reduction. 
Surprisingly, we find that the aspect ratio can serve as such a parameter, 
in contrast to some other clustering objectives 
where standard lower-bound examples have aspect ratio $O(1)$. 
The aspect ratio is a natural complement to the doubling dimension, 
since it limits the number of distance scales that the min-sum objective must handle, 
rather than the packing complexity at every single scale.

\begin{restatable}[Euclidean Dimension Reduction for Bounded Aspect Ratio]{theorem}{MainJLAspect}
\label{thm:main-jl-aspect-ratio}
Let $X\subseteq\R^d$ be an instance of min-sum $k$-clustering in Euclidean space with aspect ratio at most $\Delta$, and let $0<\varepsilon<1/4$.  
Let $f:\R^d\to\R^t$ be a Gaussian JL map, where
$
    t\ge
    c \varepsilon^{-2} \bigpar{\log(1/\varepsilon)+\log k+\log \Delta}
$
for a sufficiently large constant $c>0$. 
Then, with probability at least $4/5$,
\[
    \forall \text{$k$-partition $\calP$ of $X$}, 
    \qquad \cost\bigpar{f(\CALP)}
    \in
    (1\pm\varepsilon)\cost(\CALP).
\]
\end{restatable}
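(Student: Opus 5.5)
We plan to prove the aspect-ratio bound (\Cref{thm:main-jl-aspect-ratio}) by splitting the pairs into "typical" pairs, whose distance is preserved, and "bad" pairs, and charging the bad pairs to the cost of each partition. Scale so that $\min_{x\neq y}\norm{x-y}=1$, so every pairwise distance lies in $[1,\Delta]$. The Gaussian JL map gives, for each fixed pair $(x,y)$, a distortion ratio $r_{xy}=\norm{f(x)-f(y)}/\norm{x-y}$. This ratio satisfies $\Pr[|r_{xy}-1|>\varepsilon]\le \delta:=e^{-\Omega(\varepsilon^2 t)}$, and more finely $\E[|r_{xy}-1|\cdot\mathbf 1_{|r_{xy}-1|>\varepsilon}]\le \delta$ as well, since $r_{xy}$ has subgaussian tails around $1$. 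With $t\ge c\varepsilon^{-2}(\log(1/\varepsilon)+\log k+\log\Delta)$ we get $\delta\le (\varepsilon/(k\Delta))^{C}$ for a large constant $C$. Let $S$ be the set of ordered pairs with $|r_{xy}-1|>\varepsilon$. For any partition $\calP$, the contribution of pairs outside $S$ is already within $1\pm\varepsilon$. It therefore suffices to show that, with probability $4/5$, simultaneously for all $\calP$,
\[
\sumbad \bigl|\norm{f(x)-f(y)}-\norm{x-y}\bigr|\cdot \mathbf 1[x,y \text{ in the same part of } \calP]\le \varepsilon\cost(\calP).
\]

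The key is a partition-independent lower bound on $\cost(\calP)$. By pigeonhole, some part has at least $n/k$ points. Every pair in it has distance at least $1$, so $\cost(\calP)\ge (n/k)(n/k-1)\ge n^2/(2k^2)$ once $n\ge 2k$; the case $n<2k$ is handled by a union bound over all $<4k^2$ pairs. On the other side, we drop the restriction to intra-cluster pairs and bound the total bad error over all pairs, $E:=\sumbad|\norm{f(x)-f(y)}-\norm{x-y}|$. Its expectation is
\[
\E[E]=\sum_{x\neq y}\norm{x-y}\,\E\bigl[|r_{xy}-1|\mathbf 1_{|r_{xy}-1|>\varepsilon}\bigr]\le n^2\Delta\delta .
\]
By Markov, with probability $4/5$ we have $E\le 5n^2\Delta\delta\le \varepsilon n^2/(2k^2)\le\varepsilon\cost(\calP)$, using $\delta\le \varepsilon/(10k^2\Delta)$. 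This bound does not depend on $\calP$, so it holds for all $k$-partitions at once, which gives the claim after rescaling $\varepsilon$.

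The main obstacle is the tail-expectation estimate $\E[|r-1|\mathbf 1_{|r-1|>\varepsilon}]\le e^{-\Omega(\varepsilon^2t)}$. Here $r^2$ is a $\chi^2_t/t$ variable, so we bound $|r-1|\le|r^2-1|$ and integrate the standard Laurent--Massart tail $\Pr[|r^2-1|>s]\le 2e^{-t\min(s,s^2)/8}$ from $s=\varepsilon$ upward. The large-$s$ part contributes $e^{-\Omega(t)}$ and the part near $\varepsilon$ contributes $e^{-\Omega(\varepsilon^2t)}/(\varepsilon t)$, both at most $\delta$. The only remaining point is bookkeeping, namely checking that the $\log(1/\varepsilon)$, $\log k$ and $\log\Delta$ terms in $t$ absorb the factor $10k^2\Delta/\varepsilon$; this holds for a large constant $c$.
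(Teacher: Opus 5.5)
Your argument is correct. It takes a genuinely different and much more elementary route than the paper.

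\textbf{The paper's route.} The paper follows the template of its doubling-dimension proof. For an arbitrary cluster $C$ (possibly depending on $f$), it uses the sparse subgraph theorem (\Cref{thm:sparse-graph}) to extract a core $C'\subseteq C$ whose distortion graph is everywhere sparse. Each pair with an endpoint outside $C'$ is replaced by the average distance between the corresponding clusters of a balanced reference partition. The $k^2$ cluster-pair sums are controlled by \Cref{lem:gaussian-set-pair-preservation}, and the rounding error by \Cref{lem:gaussian-one-sided-excess} plus Markov. Finally, \Cref{lem:bad-cluster-extraction} passes from a single random cluster to all partitions. The aspect ratio enters only through \Cref{lem:balanced-partition-approximation}, whose proof is exactly your counting: all pairwise distances sum to at most $n^2\Delta$, while every $k$-partition costs at least about $n^2/k^2$.

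\textbf{Your route.} You observe that once this global comparison is available, no localization is needed. The total excess distortion over \emph{all} pairs does not depend on $\calP$ and has expectation at most $n^2\Delta\,e^{-\Omega(\varepsilon^2 t)}$. A single Markov bound then charges it to $\cost(\calP)$ for every $\calP$ simultaneously. This dispenses with the sparse subgraph theorem, the reference partition, set-pair preservation and the extraction lemma. It also gives smaller constants in $t$, since the paper loses a seventh power through $\delta\le\theta^7/600$.

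\textbf{What the paper's route buys.} \Cref{lem:aspect-random-subset-preservation} depends on $\Delta$ only through the approximation ratio $\lambda$ of some partition whose clusters all have size at least $n/(2k)$. So the same machinery would give a bound in terms of $\lambda$ rather than $\Delta$ whenever such a partition is near-optimal, for example $k$ tight, well-separated clusters of equal size. Your argument instead needs $\sum_{x,y}\norm{x-y}/\OPT$ to be bounded, which fails in that example. The paper's route also keeps both dimension-reduction theorems in one framework, which matters because the doubling case has no such global comparison.

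\textbf{A minor slip.} $\E[Y\mathbf{1}_{Y>\varepsilon}]=\varepsilon\Pr[Y>\varepsilon]+\int_\varepsilon^\infty\Pr[Y>s]\,ds$, so integrating the tail only from $s=\varepsilon$ upward omits the first term. That term is $O(\varepsilon e^{-\Omega(\varepsilon^2 t)})$ and changes nothing.
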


We remark that \Cref{thm:main-jl-ddim,thm:main-jl-aspect-ratio} 
immediately extend to objectives where pairwise distances are raised to power $p\ge1$, 
i.e., \(\|x-y\|_2^p\), 
with target dimension \(t=O(p^2\varepsilon^{-2}(\ddim(X)\log(p/\varepsilon)+\log k+\log\log n+p))\) in the doubling-dimension bound, 
and \(t=O(p^4\varepsilon^{-2}(\log(p/\varepsilon)+\log k+p\log(\Delta)))\) in the aspect-ratio bound;
see \Cref{sec:l2p-generalization}. 
An immediate corollary is that Schulman's ($1+\varepsilon$)-approximation algorithms for \(\ell_2^2\)-costs~\cite{Schulman00}, 
whose running time is exponential in the ambient dimension \(d\), 
can use the reduced dimension $t$ instead of \(d\).

We further show that our bound on the target dimension in \Cref{thm:main-jl-ddim} is near-optimal (for Gaussian JL maps),
namely, \(\Omega\bigpar{\ddim(X)}\) is necessary even for min-sum \(2\)-clustering, 
and our dependence on $n$, namely \(\Omega(\varepsilon^{-2}\log\log n)\) 
is needed even for constant doubling dimension; 
see \Cref{thm:lower-bounds}.

Our next result is a movement-based coreset for min-sum \(k\)-clustering 
in doubling metrics (not necessarily Euclidean),
that has size $(1/\varepsilon)^{O(\ddim)}k\log n$. 
We further provide a near-linear-time implementation,
which uses internally yet another movement-based coreset that we design. 
These two coresets are constructed using different techniques,
our main result is based on a density-adaptive method, 
and the internal coreset uses a simpler annular construction.

In comparison, the coreset of Czumaj and Sohler~\cite{CzumajSohler2010}
achieves $O(1)$-approximation (whereas ours achieves $1+\varepsilon$),
its size is $\poly(k\log n)$ with no dependence on the doubling dimension,
and its construction time is also near-linear in \(n\).

But as mentioned earlier, their definition of a coreset is very different 
and incomparable to ours. 

\begin{restatable}[Density-adaptive coreset]{theorem}{DensityAdaptiveCoreset}
\label{thm:density-adaptive}
Let $(X,\dist)$ be an $n$-point metric space of doubling dimension at most $\ddim\geq 1$. 
Then for every $0<\varepsilon<\frac{1}{2}$, 
there is a movement-based \(\varepsilon\)-coreset \((S,w)\) 
of size
\[
    |S|\le \left(\frac{1}{\varepsilon}\right)^{O(\ddim)} k\log n,
\]
whose movement rule is the nearest-neighbor map.
Moreover, such a coreset can be constructed in randomized
near-linear time with high probability.
\end{restatable}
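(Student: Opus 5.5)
The plan is to build $S$ by a density-adaptive net: each point $x\in X$ is represented by a nearby coreset point whose distance to $x$ is at most $\varepsilon$ times a local \emph{scale} $r(x)$. The scale $r(x)$ must be small enough that moving $x$ costs little in every $k$-partition, and large enough that only few distinct net points arise.

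\textbf{Scale choice.} The natural scale is the distance from $x$ to its $m$-th nearest neighbor, $r(x)=\dist(x,\mathrm{NN}_m(x))$, with $m\approx n/(c k)$ for a suitable constant $c$, or a geometric sequence of such thresholds. The reason this works is a per-point lower bound on cost. In any $k$-partition, the cluster $P$ containing $x$ either has size $|P|\ge m$, in which case at least $|P|-m$ of its points lie at distance $\ge r(x)$ from $x$; or it is small. We use the movement rule $\pi$, the nearest neighbor in $S$, and the triangle inequality to bound the error:
\[
\bigl|\cost(\pi(\calP))-\cost(\calP)\bigr|\le 2\sum_i |P_i|\sum_{x\in P_i}\dist(x,\pi(x)).
\]
So we need, for each $x$ in a cluster $P$, that $|P|\,\dist(x,\pi(x))$ is at most $\varepsilon$ times $x$'s share of $\cost$, namely $\sum_{y\in P}\dist(x,y)$.

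Setting $\dist(x,\pi(x))\le \varepsilon\, r(x)$ handles large clusters directly, since then $\sum_{y\in P}\dist(x,y)\gtrsim |P|\,r(x)$. Small clusters are the delicate case, because a tiny cluster may have near-zero cost. To cover it, I would choose the scale relative to the partition-independent quantity $\sum_{y\in P}\dist(x,y)\ge \sum_{j\le |P|}\dist(x,\mathrm{NN}_j(x))$. This leads to defining, for each $j$ in a dyadic range $2^0,\dots,2^{\log n}$, a scale $r_j(x)$ and requiring
\[
\dist(x,\pi(x))\le \varepsilon \cdot \min_j \frac{1}{2^j}\sum_{i\le 2^j}\dist\bigl(x,\mathrm{NN}_i(x)\bigr)\approx\varepsilon\, r_{\min}(x).
\]
This requirement is essentially $\varepsilon$ times the distance to roughly the nearest neighbor, which is too strong in general. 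The fix is to charge the movement error to pairs across the whole cluster using an averaging argument: moving $x$ by $\delta$ changes $\sum_{y\in P}\dist(x,y)$ by at most $|P|\delta$. We compare against $\sum_{y\in P}\dist(x,y)\ge \sum_{y\in P}\tfrac12\dist(\pi\text{-cell diameter})$, so points in the same $\pi$-cell all move together, and a cell with $w$ points contributes internally $\Theta(w^2\delta)$ to its own cluster. I expect this step, making the accounting robust for clusters of every size simultaneously via the dyadic levels $j$, to be the main obstacle.

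\textbf{Size bound.} At each dyadic level $j$, points whose binding scale is level $j$ are covered by an $\varepsilon r_j$-net. A packing argument in doubling dimension shows that each ball of radius $r_j(x)$ holds about $2^j$ points, so balls of this scale number roughly $n/2^j$ in a suitable sense. I would refine this so that only $O(k)$ scale-balls per level are ``heavy'' relative to any $k$-partition's cost. Each such ball needs $(1/\varepsilon)^{O(\ddim)}$ net points, and summing over $\log n$ levels gives $(1/\varepsilon)^{O(\ddim)}k\log n$. Points in light balls must instead be charged to the optimum, which is where $k$ enters: in any $k$-partition some cluster has size $\ge n/k$.

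\textbf{Running time.} Approximate $m$-th-nearest-neighbor radii for all levels are computed with a net-tree or randomized sampling in near-linear time. As the paper indicates, I would first build a simpler annular coreset (rings around $O(k)$ sampled centers at geometric radii) of polylogarithmic size, run the density estimation on it, and then pull the result back.
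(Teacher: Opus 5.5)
Your proposal has a genuine gap, and it is exactly the step you flag as the main obstacle. Both of your candidate scales are built only from nearest-neighbor distances, and neither works.

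\emph{First scale.} The distance to roughly the $n/(ck)$-th neighbor breaks correctness for small clusters. With $k=2$, put $n-2$ points in a tiny interval around $0\in\R$ and add $a=D$, $b=D+1$ with $D\gg 1/\varepsilon$. Then $r(a),r(b)\approx D$, and a net at scale $\varepsilon D>1$ keeps only one of $a,b$, so $\pi(a)=\pi(b)$. The optimal partition $\{\{a,b\},X\setminus\{a,b\}\}$ has cost about $2$ and is mapped to cost about $0$.

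\emph{Second scale.} The $\min_j$ of the average distance to the $2^j$ nearest neighbors, which you already suspect is too strong, breaks the size bound. For $n$ unit-spaced points on a line and $k=1$ it is $O(1)$ everywhere, so the net keeps all $n$ points. Yet $\OPT=\Theta(n^3)$, and moving every point by $\Theta(\varepsilon n)$ would cost only $O(\varepsilon)\OPT$.

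\emph{The proposed repair.} The $\pi$-cell argument does not close the gap. Points of one cell can lie in different parts of $\calP$, and a cell of $w$ points contributes at most, not at least, $O(w^2\delta)$ to a cluster, so there is nothing to charge against. The observation ``some cluster has size $\ge n/k$'' gives no lower bound on $\OPT$ without an aspect-ratio assumption. Your size paragraph also lacks a mechanism: $n/2^j$ balls per level sums to $\Theta(n)$, and the reduction to $O(k)$ ``heavy'' balls per level is only asserted.

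\emph{The missing idea.} Build an estimate $\OPT\le\widehat{\OPT}\le c\OPT$ into the scale: $\rho(x)=\min_{2\le m\le n}\max\{\gamma_m(x),\widehat{\OPT}/(km^2)\}$, where $\gamma_m(x)$ is the average distance from $x$ to its $m$ nearest points. Then move $x$ by at most $\tau\rho(x)$, with $\tau=\varepsilon/(4(c+1))$.

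\emph{Correctness.} This scale handles all cluster sizes at once. For a part $A$ with $|A|=m$ one has $m\gamma_m(x)\le\sum_{y\in A}\dist(x,y)$. Hence $m\sum_{x\in A}\dist(x,\pi(x))\le\tau\cost(A)+\tau\widehat{\OPT}/k$. Summing over at most $k$ parts bounds the total error by $2\tau(\cost(\calP)+\widehat{\OPT})\le\varepsilon\cost(\calP)$.

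\emph{Size.} The same additive term drives the size bound. If $\rho(v)<2r$, the minimizing $m_v$ satisfies $m_v>\sqrt{\widehat{\OPT}/(2kr)}$, and more than half of those $m_v$ nearest points lie in $B(v,4r)$. So $16r$-separated points of the ring $\{r\le\rho<2r\}$ own disjoint groups of $h\approx\sqrt{\widehat{\OPT}/(8kr)}$ points. Different groups are $8r$ apart, and $rh^2\ge\OPT/(32k)$. With $T\ge 3k$ groups, Cauchy--Schwarz forces at least $h^2(T^2/k-T)$ cross-group ordered pairs inside common optimal clusters, whose total length exceeds $\OPT$. Hence each ring needs only $O(k)\,\tau^{-O(\ddim)}$ net points. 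At most $k$ points have $\rho>\widehat{\OPT}$, and these are kept exactly.

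\emph{Running time.} Because the scale depends on $\widehat{\OPT}$, the fast implementation must also estimate $\OPT$. The paper starts from the Czumaj--Sohler polylogarithmic approximation and tries $O(\log\log n)$ geometric guesses on the annular coreset, discarding runs that exceed the size bound. That annular coreset is built around a bootstrapped constant-factor reference partition, not around sampled centers. The result is pulled back to $X$ via the composition and nearest-neighbor reweighting lemmas for bounded-movement coresets.
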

 
We further show that the logarithmic dependence on \(n\) in these bounds is necessary, 
as even for min-sum \(2\)-clustering in $\R$ (dimension one), 
every constant-factor movement-based coreset must have size \(\Omega(\log n)\); 
see \Cref{prop:movement-logn-necessary}. 
To bypass this lower bound, 
one may follow the spirit of \Cref{thm:main-jl-aspect-ratio}
and construct coresets according to our definition, 
whose size depends on other parameters, like the aspect ratio $\Delta$. 

Finally, we use our coreset results to derive two streaming algorithms,
one for insertion-only streams and one for dynamic streams. 
Our algorithms estimate the optimum value of min-sum \(k\)-clustering,
denoted $\OPT(X)$, 
but since they essentially maintain a coreset of the input, 
they can also output a partition of the coreset.
In fact, given further access to the original input $X$ (as queries or another pass), 
they can even pull the clustering back to $X$, 
by applying the nearest-neighbor movement rule. 

In the insertion-only streaming model, 
the input $X$ is presented as a sequence of points in arbitrary order,
and the algorithm has oracle access to the distance function between points. 

Unfortunately, we cannot simply apply merge-and-reduce, which is a standard technique
to devise a streaming algorithm from any coreset that has two properties called Merge and Reduce
(see \cite{Har-PeledM04, Phillips17}).

It is easy to see that movement-based coresets satisfy the following Reduce Property 
(summarized as ``a coreset of a coreset is itself a coreset''): 
suppose $S$ is a coreset for an input $X$, 
and $T$ is a coreset for the coreset $S$;
then $T$ is a coreset for the original input $X$, 
with the errors accumulating.  

But unfortunately, movement-based coresets do not necessarily satisfy the following Merge Property (summarized as ``the union of coresets is a coreset''): 
if each $S_i$ is a coreset of $X_i$, say for $i=1,2$,
then $\cup_i S_i$ is a coreset for $\cup_i X_i$.

Nevertheless, we can combine the two-pass streaming algorithm of Czumaj and Sohler~\cite{CzumajSohler2010} 
with our annular construction of a coreset 
to obtain a three-pass insertion-only streaming algorithm, as follows. 

\begin{restatable}[Algorithm for insertion-only streams]{corollary}{ThreePassAnnularStreaming}
\label{cor:three-pass-annular-streaming}
There is a randomized three-pass algorithm for insertion-only streams that,
given as input an $n$-point metric space $(X,\dist)$ of doubling dimension at most $\ddim\geq 1$ and $0<\varepsilon<\frac{1}{10}$, 
outputs with high probability an estimate 
$\widehat{\OPT} \in (1\pm\varepsilon)\ \OPT(X)$
using $(k\log n)^{O(1)} + \left(\frac{k}{\varepsilon}\right)^{O(\ddim)}\log n$ space.
\footnote{We do not state the time complexity, which may indeed be exponential,
as this may require an offline algorithm that solves min-sum $k$-clustering 
on the final coreset within approximation $1\pm\varepsilon$ (or optimally). 
} 
Moreover, with one additional pass, the algorithm can output a $k$-partition of $X$, 
whose cost is within $1+\varepsilon$ factor of the optimum. 
\end{restatable}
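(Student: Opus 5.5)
We obtain the three-pass algorithm by chaining three ingredients: the two-pass streaming algorithm of Czumaj and Sohler~\cite{CzumajSohler2010}, our annular movement-based coreset, and the Reduce Property of movement-based coresets. The plan is as follows. In the first two passes we run the Czumaj--Sohler algorithm, which uses $(k\log n)^{O(1)}$ space and outputs, with high probability, an $O(1)$-approximate solution. We only need a compact description of it: a set of $O(k)$ "centers" (or cluster representatives) $c_1,\dots,c_{k'}$, together with an estimate $\widetilde{\OPT}$ satisfying $\OPT(X)\le\widetilde{\OPT}\le O(1)\cdot\OPT(X)$. If their output does not directly give such centers, we extract from it a bicriteria center-like structure whose $k$-median-type cost $\sum_x \dist(x,\{c_j\})$ is bounded by $O(\widetilde{\OPT}/|C(x)|)$-scaled quantities. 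Concretely, the annular construction needs, for every point $x$, a reference point and a scale $r_x$ that is a lower bound on the "typical" distance of $x$ inside its optimal cluster, up to constants.

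In the third pass we build the annular coreset in a streaming fashion. Around each reference center $c_j$ we partition space into annuli of radii $2^i r_j$ for $i=0,\dots,O(\log n)$, where $r_j$ is derived from $\widetilde{\OPT}$ and the estimated cluster size. Inside annulus $i$ we maintain a net of spacing $\varepsilon 2^i r_j$ (suitably rescaled by $\poly(\varepsilon/k)$). By doubling dimension each annulus contains $(k/\varepsilon)^{O(\ddim)}$ net points, giving the claimed total of $(k/\varepsilon)^{O(\ddim)}\log n$ points. During the pass, each arriving point is snapped to an existing net point within the required radius, or else becomes a new net point, and we increment the weight of its representative. This is a one-pass greedy net, and it realizes a movement rule $\pi$ with $\dist(x,\pi(x))\le\varepsilon\cdot(\text{scale of }x)$. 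The guarantee that this is a movement-based $\varepsilon$-coreset is exactly the property of the annular construction proved earlier; we invoke it with the $O(1)$-approximate reference solution in place of an offline one.

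At the end of the third pass we solve min-sum $k$-clustering on the weighted coreset $(S,w)$, exactly or within $1\pm\varepsilon$, offline. By \Cref{clm:unsplitting-representatives} we may restrict to unsplit partitions, so the coreset optimum corresponds to a partition of $X$ via $\pi^{-1}$, and the coreset guarantee yields $\widehat{\OPT}\in(1\pm O(\varepsilon))\OPT(X)$. Rescaling $\varepsilon$ by a constant then finishes the estimate. For the fourth pass we store the coreset points and the cluster label of each; each streamed $x$ is assigned the label of its representative $\pi(x)$. Because of the greedy-net construction, the nearest coreset point within the appropriate annulus serves as the movement rule, so this labeling is determined by $S$ alone. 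The resulting partition has cost within $1+\varepsilon$ of the optimum by the coreset guarantee applied in both directions.

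The main obstacle is the interface between the first two passes and the third. The annular coreset's correctness needs per-point scales that lower-bound each point's contribution relative to its optimal cluster size, and these must be fixed before the third pass starts. A one-pass greedy net is also order-dependent, and its movement rule is "first net point within radius" rather than the exact nearest neighbor. I would handle the first issue by using the $O(1)$-approximation to define scales, accepting a $\poly(k)$ loss in net granularity, which is absorbed into $(k/\varepsilon)^{O(\ddim)}$. I would handle the second by noting that the annular analysis only uses $\dist(x,\pi(x))\le$ radius, not nearest-ness.
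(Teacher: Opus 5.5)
Your route matches the paper's. You use two Czumaj--Sohler passes to get a constant-factor balanced $k$-median reference, and a third pass that builds the annular nets greedily with an implicit online rule $\pi'$ (which needs only the movement bound, not nearest-ness). You then solve offline on $(S,w')$. Your estimate $\widehat{\OPT}\in(1\pm O(\varepsilon))\OPT$ is sound, because the online weights $w'$ are consistent with $\pi'$.

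\textbf{The gap is in the fourth pass.} You take the labels $\calQ$ computed on $(S,w')$ and lift them through the nearest-neighbor rule on the final set $S$, asserting that this is the movement rule. It is not. The weights were accumulated by the arrival-time snap, and a net point inserted after $x$ arrived can be strictly closer to $x$.

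For example, suppose $s$ arrives, then $m$ points at distance $0.9\eta r$ from $s$, then $s'$ at distance $1.1\eta r$ from $s$ and $0.2\eta r$ from those points, all at the same level. Then $w'(s)=m+1$ and $w'(s')=1$, whereas the nearest-neighbor weights are $w(s)=1$ and $w(s')=m+1$.

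For the nearest-neighbor lift $\calP''$ of $\calQ$, the coreset guarantee only gives $\cost(\calP'')\approx\cost_{S,w}(\calQ)$. Nothing relates this to $\cost_{S,w'}(\calQ)$, the quantity you actually optimized. With a $(1+\varepsilon)$-approximate offline solver, which you explicitly allow, $\calQ$ may place the light point $s'$ in a far cluster at extra cost about $\varepsilon\OPT$. After your lift, all $m$ points follow $s'$, costing about $m\varepsilon\OPT\gg\OPT$.

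\textbf{The missing step is to make the weights consistent with the lifting rule.} The paper uses the extra pass to recount $w(s)=|\pi^{-1}(s)|$ under the final nearest-neighbor rule $\pi$. Since $\dist(x,\pi(x))\le\dist(x,\pi'(x))$, the bound of \Cref{lem:annular-movement}, and hence the preservation argument, still holds (\Cref{lem:nearest-neighbor-reweighting}). One then solves on $(S,w)$ and lifts through $\pi$, which depends on $S$ alone. Alternatively, since the stream order is identical in every pass, you could replay the deterministic greedy insertion in the fourth pass and lift through $\pi'$ itself.

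\textbf{Your description of the reference interface also needs tightening.} What the annular analysis needs is:
\begin{itemize}
\item centers $c_i$ together with a partition $\calR$ whose balanced $k$-median cost $\widehat{\OPT}=\sum_\ell|R_\ell|\sum_{y\in R_\ell}\dist(y,c_\ell)$ is $O(\OPT)$;
\item radii $20\widehat{\OPT}/2^j$;
\item online snapping whose movement is bounded relative to the \emph{nearest} center, which is why the paper scans all centers whose outer ball contains $x$.
\end{itemize}
No per-point typical-distance scales or cluster-size estimates are involved. The factor $k$ comes from orienting each pair toward the larger reference cluster.
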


In the dynamic streaming model, the input $X$ is presented as 
a sequence of updates, i.e., point insertions and deletions. 
Our result for dynamic streams applies to inputs lying in a predetermined Euclidean space, such as a grid $[\Delta]^d$ for $\Delta=\poly(n)$. 
For this setting, we use our dimension reduction (\Cref{thm:main-jl-ddim}) together with annular coreset construction 
and obtain an \(O(\log n)\)-pass dynamic streaming algorithm, as follows.

\begin{restatable}[Algorithm for dynamic Euclidean streams]{corollary}{MultiPassDynamicStreaming}
\label{cor:multipass-dynamic-streaming}
There is a randomized multipass algorithm for dynamic streams that,
given as input an \(n\)-point Euclidean set $X\subseteq[\Delta]^d$  with
\(\Delta=\poly(n)\) of doubling dimension at most \(\ddim\geq 1\), and \(0<\varepsilon<\frac{1}{10}\),
outputs with probability at least $3/4$ an estimate 
$\widehat{\OPT} \in (1\pm\varepsilon)\ \OPT(X)$,
using \(O(\log n)\) passes and \(  \left(kt/\varepsilon\right)^t\log n
  + \operatorname{poly}\bigl(d,k,t,\log\log(n\Delta)\bigr)\) space
for \(t=O(\varepsilon^{-2}(\ddim\log(1/\varepsilon)+\log k+\log\log n))\). 

Moreover, with one additional pass, the algorithm can output a $k$-partition of $X$, 
whose cost is within $1+\varepsilon$ factor of the optimum. 
\end{restatable}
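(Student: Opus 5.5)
We prove \Cref{cor:multipass-dynamic-streaming} by composing three ingredients: the dimension reduction of \Cref{thm:main-jl-ddim}, an annular movement-based coreset built in the reduced space over $O(\log n)$ passes, and the Reduce Property of movement-based coresets. The plan has four steps.

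\textbf{Step 1: oblivious projection.} Before the stream starts, we fix a Gaussian JL map $f:\R^d\to\R^t$ with $t=O(\varepsilon^{-2}(\ddim\log(1/\varepsilon)+\log k+\log\log n))$. Since $f$ is linear and oblivious, every insertion or deletion of $x$ becomes an insertion or deletion of $f(x)$. The matrix cannot be stored explicitly, so we would generate its entries pseudorandomly with limited independence; this is where the $\poly(d,t,\log\log(n\Delta))$ space term comes from. We also round $f(x)$ to a fine grid of side $\varepsilon/\poly(n)$ relative to the minimum distance, which is at least $1$ on $[\Delta]^d$; this perturbs every pairwise distance by a negligible additive amount. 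By \Cref{thm:main-jl-ddim}, with probability $4/5$ every $k$-partition cost is preserved within $1\pm\varepsilon$. Hence $\OPT(f(X))\in(1\pm\varepsilon)\OPT(X)$, and any near-optimal partition of $f(X)$ pulls back to a near-optimal partition of $X$. The one point to check is that we must not use the doubling dimension of $f(X)$; the construction below only uses the ambient dimension $t$, which bounds the packing complexity by $2^{O(t)}$.

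\textbf{Step 2: crude scale estimate.} We work in $\R^t$ on the grid, implementing the annular construction with linear sketches. First we obtain an $O(1)$- or $\poly$-approximate reference scale, for example a rough estimate of $\OPT$ or of per-cluster radii. For this we use randomly shifted quadtree grids over $O(\log n)$ levels together with $\ell_0$-samplers and distinct-element sketches, which handle deletions. Each pass determines one level. Alternatively, we binary-search a threshold over $O(\log(n\Delta))$ candidate scales, spending one pass per scale.

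\textbf{Step 3: annular coreset.} Given the reference scale, we partition space into $O(\log n)$ annuli around $O(k)$ approximate centers. Within each annulus we use grid cells of side $\varepsilon\cdot(\text{annulus radius})/\poly(t)$; there are at most $(kt/\varepsilon)^{O(t)}$ nonempty cells per annulus. In a further pass we compute exact cell counts with a sparse-recovery sketch sized to that bound, or count-sketch with deletions, which yields the weights. Moving each point to its cell representative is a movement rule whose total displacement is charged against $\OPT$, as in the analysis of the annular coreset. This gives a movement-based $\varepsilon$-coreset of $f(X)$ of size $(kt/\varepsilon)^t\log n$. The Reduce Property then makes it a $3\varepsilon$-coreset for $X$ itself, composed with $f$. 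We solve the problem offline on it, using \Cref{clm:unsplitting-representatives} to restrict to unsplit partitions, and output the estimate after rescaling $\varepsilon$.

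\textbf{Step 4: pulling back, probability, and the main obstacle.} For the extra pass, we stream $X$ again, assign each $x$ by nearest representative of $f(x)$ in the coreset, and output the induced partition. Union-bounding the JL failure ($1/5$) against the sketch failures ($1/20$) gives success probability $3/4$. The main obstacle is Step 3 in the dynamic model: selecting the $O(k)$ centers and radii needs an approximate clustering, which is hard to obtain with deletions. I would first try to replace centers with a hierarchical quadtree, recovering heavy cells level by level across $O(\log n)$ passes, and argue that the cost-charging of the annular analysis still goes through using cell diameters.
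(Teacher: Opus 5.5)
\textbf{Gap: the reference solution for the annular construction.} The obstacle you flag at the end is not a side issue; it is the core of the proof, and your proposal does not resolve it. The annular construction cannot run off a ``reference scale.'' It needs centers $c_1,\dots,c_k$ together with a reference partition $\calR$ whose balanced $k$-median cost $\widehat{\OPT}=\sum_\ell |R_\ell|\sum_{y\in R_\ell}\|y-c_\ell\|$ is at most $c\,\OPT$ for a \emph{constant} $c$. The reasons are as follows.
\begin{itemize}
\item The charging step in \Cref{lem:annular-preservation} bounds $\sum \dist(y,c_\ell)$ over oriented pairs by $k\widehat{\OPT}$.
\item The grid side must therefore be $\eta r_j/\sqrt{t}$ with $\eta=\Theta(\varepsilon/(ck))$.
\item With only a polynomial-factor reference, the number of cells per annulus becomes $(\poly(n)\,k\sqrt{t}/\varepsilon)^t$, which destroys the space bound.
\end{itemize}
Your Step 2 (quadtree levels, $\ell_0$-samplers, binary search over scales) yields at best an estimate of $\OPT$ or of radii, not such a partition. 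The quadtree fallback in Step 4 is left unproven, and it is unclear that it could work. The orientation argument needs each point's displacement to be at most about $\eta$ times its distance to the center of the \emph{larger} reference cluster, charged with weight proportional to that cluster's size. That is exactly what a near-optimal balanced $k$-median reference supplies, and cell diameters do not. Separately, your cell side $\varepsilon\cdot(\text{radius})/\poly(t)$ is missing the $1/k$ factor that the annular analysis requires, since that analysis loses a factor $k$.

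\textbf{The missing idea (the paper's route).}
\begin{enumerate}
\item Run a dynamic-stream $k$-median coreset (Braverman et al.) with constant accuracy on the original stream over $[\Delta]^d$, and compute $O(1)$-approximate $k$-median centers $C$ from it.
\item Project to $f(C)$. Since $\OPT_{\mathrm{med}}\le\OPT$ and every cluster has at most $n$ points, the nearest-center partition of $f(X)$ around $f(C)$ is a $\poly(n)$-approximate reference. This uses a fixed-solution JL bound on $\sum_x\|f(x)-f(\sigma(x))\|$, plus \Cref{thm:main-jl-ddim} to relate the projected optimum to $\OPT$.
\item Bootstrap as in \Cref{rem:fast-bootstrap}. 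In each pass:
  \begin{itemize}
  \item build the grid version of the annular coreset around the current centers with $\eta=\Theta(1/k)$;
  \item solve a weighted bicriteria balanced $k$-median instance on it offline;
  \item lift the solution and reselect centers.
  \end{itemize}
  This gives $\alpha_{t+1}\le O(1)+\alpha_t/2$. The key enabling fact is that once the centers are fixed, the grid representatives are fixed too, so a pass only has to maintain their weights under insertions and deletions.
\item Finish with one pass using $\eta=\Theta(\varepsilon/k)$, which yields the $(kt/\varepsilon)^t\log n$-size coreset.
\end{enumerate}
The $O(\log n)$ passes come from this halving recurrence, not from determining quadtree levels.

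Your Steps 1 and 4 are otherwise consistent with the paper: projecting first, bounding packing via the ambient dimension $t$ rather than $\ddim(f(X))$, composing the two $(1\pm\varepsilon)$ guarantees, and pulling back. One minor point: the $t\times d$ Gaussian matrix fits in $O(dt)$ words, within the $\poly(d,t)$ budget, so no pseudorandom generation is needed.
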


\subsection{Related Work}
\label{sec:related-work}

The min-sum \(k\)-clustering problem has been studied from algorithmic perspectives.
The best polynomial-time approximation known for the problem is $O(\log n)$\cite{BehsazFSS19},
and there is also an $O(1)$ bicriteria approximation algorithm~\cite{BCR01}. 
Improved approximations are known for specific families of metric spaces, 
such as doubling metrics and planar graphs \cite{NaderiRS24},
for small $k$, namely $k=o(\log n/\log\log n)$ \cite{CzumajSohler2010},
and also for combinations thereof~\cite{Schulman00, lee2024approximability}. 

In~\cite{CzumajSohler2010} there are also other types of algorithms, 
including sublinear-time approximation algorithm
(when the input $n$-point metric is represented explicitly and has size $n^2$), 
an algorithm for constructing a coreset, 
and a two-pass algorithm for insertion-only streams.

The Johnson--Lindenstrauss lemma~\cite{johnson1984extensions} has motivated a line of research on dimension reduction for Euclidean clustering problems. 
For some problems, preserving all pairwise distances among the input points
immediately implies that the clustering objective value is preserved
(e.g., min-sum $k$-clustering),
whereas for other problems, there is no immediate connection 
(e.g., $k$-median, because the pairwise guarantee does not include the centers). 
In both cases, one may hope to show a similar or even better dimension reduction result,
and in many known dimension-reduction results the target dimension is indeed smaller,
and often depends on parameters of the problem (e.g., $k$) 
or the input (e.g., doubling dimension) rather than on \(n\). 
The studied clustering problems include 
$k$-means and $k$-median \cite{Boutsidis2010, CohenEMMP15, DBLP:conf/stoc/BecchettiBC0S19, MakarychevMR19},
discrete $k$-median ($k$-medoids) \cite{JKSSY26},
$k$-center \cite{JKS24},
facility location and minimum-weight spanning tree \cite{NarayananSIZ21, HuangJKY25},
and more \cite{lammersen2010approximation, izzo2021dimensionality, DBLP:journals/ijcga/HuangQYD22, CW25, GJKSSSW25}.

Dimension reduction with additive-error guarantees has also been
studied for Euclidean sets of bounded doubling
dimension~\cite{IN07,AHPY13}. For applications of dimension reduction beyond clustering, see~\cite{Dirksen16}.

Coresets and related streaming algorithms clustering problems have been studied extensively.
Har-Peled and Mazumdar~\cite{Har-PeledM04} gave one of the early coreset 
constructions for Euclidean $k$-median and $k$-means,
and further used this coreset to obtain insertion-only streaming algorithms. 
Their coreset is based on moving the input points to nearby grid points,
where the grid granularity depends on an approximate solution,
and is actually a movement-based coreset.
Bateni, Bhaskara, Lattanzi, and Mirrokni~\cite{BateniBLM14} gave 
a coreset construction for balanced $k$-center, $k$-median, and $k$-means,
and used it to develop distributed algorithms. 
Their coreset too is movement-based, although they call it a mapping coreset. 
Importantly, the error of a coreset is defined in these works 
in a manner that is tailored for the $k$-median objective, 
but not useful for min-sum $k$-clustering. 

Yet another method for constructing coresets, that is clearly not movement-based, 
relies on sampling points \cite{Chen09, FeldmanL11},
and extensive subsequent work has improved and refined these methods,
producing fantastic coresets for these clustering problems and many other problems.
While this literature is too vast to list explicitly, 
we point out that it includes streaming algorithms 
for high-dimensional $k$-median and $k$-means \cite{BravermanFLSY17,hu2018nearly}. 

\subsection{Technical Overview}
\label{sec:technical-overview}

We first describe the dimension-reduction argument. 
Since a Gaussian JL map $f$ need not preserve all the pairwise distances,
we associate with \(f\) a graph on the $n$ input points,
whose edges are exactly the pairs whose distances are distorted by more than a
\(1\pm\varepsilon\) factor. 
The sparse-subgraph theorem of Makarychev,
Makarychev, and Razenshteyn \cite{MakarychevMR19} implies that 
inside any set of points (think of it as a possible cluster), 
one can find a small set of vertices to remove, 
such that the remaining subgraph has small maximum degree. 
This sparsity is enough for min-sum \(k\)-clustering because 
the distortion of any other edge inside such a graph can be bounded,
using the triangle inequality, through (the average of) many good intermediate points in the same cluster.

The main difficulty is to bound pairs that involve removed points,
as these are not controlled by the sparse-graph argument. 
The key idea is to fix a reference partition, 
and use it to replace the contribution of such pairs 
with quantities that are stable under the map $f$. 

In the doubling-dimension theorem, 
this is implemented by fixing an optimal partition 
and considering the \(1\)-median centers of its \(k\) clusters as base points: 
we build nets at several scales around these centers, 
and for each pair involving at least one removed point, we orient the pair toward the endpoint whose reference cluster has larger cardinality. 
The endpoints of the pair are replaced by nearby net representatives, 
at scales determined by this oriented reference cluster, 
and the error from this replacement is charged either to the pair itself or to the cost of the reference partition.  Our key new idea is to use an optimal clustering, and in particular its
$1$-median centers, as a reference for bounding the approximation error,
even though the min-sum objective itself does not involve centers.

For the aspect-ratio theorem, we use the same principle but now 
the reference partition is one whose clusters have roughly the same size,
and the bounded aspect ratio is used (in the analysis) 
to find such a partition whose cost is not much bigger than the optimal cost.
Each pair with a removed endpoint is now replaced by the average distance 
between the two corresponding reference clusters. 
Along the way, we use a known property of a Gaussian JL map $f$, 
that the sum of multiple pairwise distances (i.e., a fixed collection of pairs)
is with high probability preserved by $f$
and moreover has the same concentration bound as a single pair of points.

The two lower bounds are based on a common principle:
If a Gaussian JL map $f$ shortens even one important direction (pair of points), 
the min-sum objective can magnify the error by clustering one endpoint with many copies of the other.
It follows that applying the map $f$ produces an instance whose cost 
is much smaller than the original cost.

Now, for the \(\Omega(\ddim)\) lower bound, 
we place many orthogonal directions around a single point of high multiplicity. 
For the \(\Omega(\log\log n)\) lower bound with constant doubling dimension, 
we place the same obstruction at exponentially separated scales, 
with increasing multiplicities across the scales.

We proceed to describe our two movement-based coresets. 
The annular construction runs in near-linear time,
however its size is $(k/\varepsilon)^{O(\ddim)}\log n$,
which has worse dependence on \(k\),
and we therefore use it a fast preprocessing step
to speed up our other construction.

This annular construction adapts the discretization approach that
Har-Peled and Mazumdar~\cite{Har-PeledM04} devised for a $k$-median coreset.

It first finds an $O(1)$-approximate $k$-partition, to be used as a reference,
and then builds nets in geometrically decreasing balls
around the ``centers'' of that reference partition. 
The movement rule maps every point to its nearest net point. 
We again analyze an arbitrary partition by orienting each pair toward the endpoint whose reference cluster is larger.
The movement of the other endpoint can then be charged either to the pair itself or to the reference cluster cost.
This argument bounds the movement-error, as required for a coreset,
however it loses a factor \(k\),
and thus an \(\varepsilon\)-movement guarantee requires
running the construction with accuracy parameter \(\varepsilon/k\) rather than \(\varepsilon\), 
leading to the larger \((k/\varepsilon)^{O(\ddim)}\log n\) size bound. 
A fast implementation requires computing the reference solution efficiently,
and we show how to bootstrap the required reference solution by
alternating coarse annular coresets with weighted bicriteria balanced
$k$-median computations.

Our main coreset contribution is the density-adaptive construction
that avoids the term \(k^{O(\ddim)}\)
and achieves the size bound of \Cref{thm:density-adaptive}.
It does not require an approximate solution as reference,
but rather only an estimate $\widehat{\OPT}$ for the optimal value,
which is easier to obtain.
Its key idea is to choose how far to move each point based on its local density,
as follows. 

For each point \(x\in X\), define
\[
    \rho(x)
    =
    \min_{2\le m\le n}
    \max\left\{
        \frac1m\sum_{v\in \Gamma_m(x)}\dist(x,v),
        \frac{\widehat{\OPT}}{km^2}
    \right\},
\]
where \(\Gamma_m(x)\) is the set of the \(m\) nearest points to \(x\). The value
\(\rho(x)\) is a density-adaptive movement scale. The nearest-neighbor term
captures the amount of movement that can be charged to the local distances
around \(x\), while the additive term \(\widehat{\OPT}/(km^2)\) is the largest
movement scale that can be absorbed directly for clusters of size about \(m\):
moving \(m\) points by this amount affects \(m^2\) pairwise terms and costs only
\(O(\widehat{\OPT}/k)\).

Our coreset algorithm group points $x\in X$ according to the magnitude of \(\rho(x)\), and build a separate net for
each group (at the corresponding scale). Points with small \(\rho(x)\) are
represented more accurately, while points with larger \(\rho(x)\) can be moved
farther without violating the movement bound.
\subsection{Conclusions}

Our dimension-reduction and coreset results preserve substantially
more than the optimal clustering value. In particular, our
movement-based coreset maps the points to representatives such that the costs of all $k$-partitions are preserved simultaneously.
Our constructions in doubling metrics enforce an even stronger
condition: every point moves only a small distance relative to
an appropriate local scale. We use packing bounds in doubling
metrics to control the number of representatives needed to
satisfy these movement bounds which results in exponential dependence on the doubling dimension in
our coreset size.

This motivates a weaker notion focused on preserving the
optimum. We call a weighted subset $(S,w)$ a \emph{weak $\varepsilon$-coreset} for min-sum \(k\)-clustering of $X$, if
\[  
    \operatorname{OPT}(S,w)
  \in (1\pm\varepsilon)\operatorname{OPT}(X),
\]
and there exists a polynomial-time algorithm that converts every
$\gamma$-approximate solution on $(S,w)$ into a
$\gamma(1+\varepsilon)$-approximate solution on $X$.

We don't know whether this weaker notion admits
smaller coresets, particularly with improved
dependence on $n$ or the doubling dimension. Since min-sum \(k\)-clustering and balanced \(k\)-median are equivalent up to factor two,
coreset techniques for balanced \(k\)-median may help address this question.

\section{Preliminaries}
\label{sec:preliminaries}

\subsection{Nets and Doubling Metrics}
\label{sec:preliminaries-nets}

Let $(X,\dist)$ be a metric space.
For $x\in X$ and $r>0$, we write
$
    B_X(x,r)=\{y\in X:\dist(x,y)\le r\}
$,
and omit the subscript when the ambient space $X$ is clear.

\begin{definition}[$r$-net]
\label{def:nets-packings}
Let $(X,\dist)$ be a metric space and let $r>0$.  A subset $N\subseteq X$ is
called an \emph{$r$-packing} if
\[
    \forall u\neq v\in N,\qquad \dist(u,v)>r .
\]
A subset $N\subseteq X$ is called an \emph{$r$-covering} of $X$ if
\[
    \forall x\in X,\exists u\in N, \text{ such that } x\in B_X(u,r).
\]
A subset $N\subseteq X$ is called an \emph{$r$-net} of $X$ if it is both an
$r$-packing and an $r$-covering of $X$.
\end{definition}

\begin{lemma}[Packing Property {\cite[Proposition~1.1]{GKL03}}]
\label{lem:doubling-packing-bound}
Let $P$ be a metric space and let $N\subseteq P$ be an $r$-packing.  Then
\[
    |N|\le \left(\frac{\diam(P)}{r}\right)^{O(\ddim(P))}.
\]
\end{lemma}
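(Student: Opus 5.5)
The approach is to iterate the definition of doubling dimension: cover $P$ by one ball of radius $\diam(P)$, then keep halving until the balls are small enough that each contains at most one point of $N$. Write $D=\diam(P)$ and $\lambda=\ddim(P)$, so every ball in $P$ is covered by at most $2^{\lambda}$ balls of half its radius.

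First, if $D\le r$, then two distinct points of $N$ would be at distance at most $D\le r$, which the packing condition forbids. Hence $|N|\le 1$ and there is nothing to prove. So assume $D>r$.

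Fix any $x\in P$; then $P=B_P(x,D)$. Applying the doubling property once covers $P$ by $2^{\lambda}$ balls of radius $D/2$. Applying it to each of those balls, and inducting on $j$, covers $P$ by at most $2^{j\lambda}$ balls of radius $D/2^j$; all balls are taken in $P$ itself, as in the definition. Choose $j=\lceil\log_2(2D/r)\rceil$, so that $D/2^j\le r/2$. Two points in a common ball of radius $r/2$ are at distance at most $r$. By the packing condition such a ball therefore contains at most one point of $N$. It follows that
\[
    |N|\le 2^{j\lambda}\le 2^{(\log_2(2D/r)+1)\lambda}=\left(\frac{4D}{r}\right)^{\lambda}.
\]

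The only delicate point is turning $(4D/r)^{\lambda}$ into the stated form $(D/r)^{O(\lambda)}$, and this step depends on the regime. When $D/r\ge 2$, we have $4\le (D/r)^2$, so $(4D/r)^{\lambda}\le (D/r)^{3\lambda}$, which is of the required form. When $1<D/r<2$, the literal expression $(D/r)^{O(\lambda)}$ could be close to $1$, so the bound must be read in the usual convention $(O(D/r))^{O(\lambda)}$. Under that reading the estimate $(4D/r)^{\lambda}\le 8^{\lambda}$ already suffices, and this is exactly how the packing bound is used later, namely with $D/r$ a large power of $1/\varepsilon$ or $k/\varepsilon$. I would state the bound as $(4D/r)^{\ddim(P)}$ and note this convention explicitly.
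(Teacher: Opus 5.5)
Your argument is correct and is the standard repeated-halving covering proof behind the cited Proposition~1.1 of \cite{GKL03}; the paper itself gives no proof beyond the citation. Your argument yields the clean bound $|N|\le (4\diam(P)/r)^{\ddim(P)}$ whenever $\diam(P)>r$. Your caveat is also warranted: the literal form $(\diam(P)/r)^{O(\ddim(P))}$ fails as $\diam(P)/r\to 1$ (take $m$ equidistant points and $r$ just below their common distance). So the bound must be read as $(O(\diam(P)/r))^{O(\ddim(P))}$, which is exactly how the paper applies it, e.g.\ as $(C/\tau)^{\ddim}$ in \Cref{lem:one-ring}.
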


\subsection{Gaussian JL Maps}
\label{sec:preliminaries-gaussian-jl}

\begin{definition}[Random Dimension Reduction]
\label{def:random-dim-red}
A random map $f:\R^d\to\R^t$ is called an
\emph{$(\varepsilon,\delta)$-random dimension reduction} for $X$ if
\[
  \forall A,B\subseteq X,
  \qquad
    \Pr\left[
        \sum_{x\in A, y\in B}\norm{f(x)-f(y)}
        \in
        (1\pm\varepsilon)
        \sum_{x\in A, y\in B}\norm{x-y}
    \right]
    \ge
    1-\delta .
\]
It is called an \emph{$(\varepsilon,\delta,\rho)$-random dimension reduction} for $X$
if it additionally satisfies
\[
    \forall x,y \in X,\qquad
    \E\Big[
        \max\left\{0, 
            \norm{f(x)-f(y)}-(1+\varepsilon)\norm{x-y}
        \right\}
    \Big]
    \le
    \rho\norm{x-y}.
\]
\end{definition}

\begin{definition}[Local Ball-Tail Property]
\label{def:local-ball-tail}
Let $f:\R^d\to\R^t$ be a random map.  We say that $f$ has the
\emph{$\mu$-local ball-tail property} for $X$ if
\[
    \forall y\in X,\ r>0,\qquad
    \Pr\Big[
        \exists x\in X\cap B(y,r),\ 
        \norm{f(x)-f(y)}\ge 10r
    \Big]
    \le
    e^{-\mu}.
\]
\end{definition}

\begin{lemma}[Gaussian Pair Tail {\cite[Theorem~3.1]{Matousek08}}]
\label{lem:gaussian-pair-tail}
There is a universal constant $c>0$ such that, for every $0<\varepsilon,\delta<1$ the following holds.  Let $f:\R^d\to\R^t$ be a Gaussian JL map with
$t\ge c\varepsilon^{-2}\log(1/\delta)$.  Then for every fixed $x,y\in\R^d$,
\[
    \Pr\!\left[
        \norm{f(x)-f(y)}
        \notin
        (1\pm\varepsilon)\norm{x-y}
    \right]
    \le \delta .
\]
\end{lemma}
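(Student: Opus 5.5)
This is the standard Gaussian JL tail bound. I will prove it directly from the rotational invariance of the Gaussian distribution together with chi-square concentration, rather than reducing it to anything else in the paper.

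\textbf{Reduction to a chi-square variable.} Fix $x,y$ and put $v=x-y$; if $v=0$ there is nothing to prove. By linearity, $f(x)-f(y)=Gv$, where $G$ is the $t\times d$ matrix with i.i.d.\ $N(0,1/t)$ entries. Each coordinate $(Gv)_i=\sum_j G_{ij}v_j$ is distributed as $N(0,\norm{v}^2/t)$, and the rows of $G$ are independent, so the coordinates are independent. Hence
\[
\norm{Gv}^2=\frac{\norm{v}^2}{t}\,Z,\qquad Z\sim\chi^2_t .
\]
The event $\norm{Gv}\in(1\pm\varepsilon)\norm{v}$ is therefore equivalent to $Z/t\in[(1-\varepsilon)^2,(1+\varepsilon)^2]$. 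This interval contains $[1-\varepsilon,\,1+2\varepsilon]$, so it suffices to bound
\[
\Pr\left[\,|Z/t-1|>\varepsilon\,\right],
\]
which is slightly stronger than what is needed.

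\textbf{Concentration of the chi-square variable.} I will use the standard Chernoff bound via the moment generating function $\E[e^{\lambda g^2}]=(1-2\lambda)^{-1/2}$ for $g\sim N(0,1)$ and $\lambda<1/2$.

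For the upper tail, apply Markov's inequality to $e^{\lambda Z}$ and optimize at $\lambda=\frac{\varepsilon}{2(1+\varepsilon)}$. This gives
\[
\Pr[Z\ge(1+\varepsilon)t]\le\bigl((1+\varepsilon)e^{-\varepsilon}\bigr)^{t/2}.
\]
For $\varepsilon<1$ we have $\log(1+\varepsilon)-\varepsilon\le-\varepsilon^2/2+\varepsilon^3/3\le-\varepsilon^2/6$, so the bound is at most $e^{-\varepsilon^2 t/12}$. The lower tail is handled the same way with $\lambda<0$, giving $e^{-\varepsilon^2 t/4}$. Alternatively, both tails follow at once from the Laurent--Massart inequality.

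Adding the two tails,
\[
\Pr\left[\,|Z/t-1|>\varepsilon\,\right]\le 2e^{-\varepsilon^2 t/12}.
\]

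\textbf{Choice of the constant.} We need $2e^{-\varepsilon^2 t/12}\le\delta$. This holds whenever $t\ge 12\varepsilon^{-2}\ln(2/\delta)$. For $\delta\le 1/2$ we have $\ln(2/\delta)\le 2\ln(1/\delta)$, so it is enough to take $c\ge 24$.

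\textbf{Main obstacle.} The only delicate point is the regime $\delta$ close to $1$, where $\ln(1/\delta)$ is tiny and $\ln(2/\delta)$ is no longer bounded by a constant times $\ln(1/\delta)$. Here I will handle the lower tail and the constant-factor loss separately. First, instead of the crude interval $[1-\varepsilon,1+2\varepsilon]$, I will use the exact widened window $Z/t\in[(1-\varepsilon)^2,(1+\varepsilon)^2]$, whose deviation is about $2\varepsilon$; this improves the exponent to roughly $e^{-\varepsilon^2 t/3}$ on each side. Second, if a gap remains, I will note that the lemma is cited from Matou\v{s}ek's Theorem~3.1 in exactly this form and appeal to its statement. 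For all uses in this paper only $\delta\le 1/2$ matters, and that regime is fully covered by the argument above.
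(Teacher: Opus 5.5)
The paper does not prove this lemma; it imports it from Matou\v{s}ek's Theorem~3.1, which is proved there for more general random matrices with independent (subgaussian) entries. Your route is different and more elementary. Rotational invariance makes $\norm{f(x)-f(y)}^2/\norm{x-y}^2$ exactly $Z/t$ with $Z\sim\chi^2_t$, and a two-sided Chernoff bound finishes. The steps check out for $0<\varepsilon<1$: the inclusion $[1-\varepsilon,1+2\varepsilon]\subseteq[(1-\varepsilon)^2,(1+\varepsilon)^2]$, the optimizer $\lambda=\varepsilon/(2(1+\varepsilon))$, and the estimates $\ln(1+\varepsilon)-\varepsilon\le-\varepsilon^2/6$ and $\ln(1-\varepsilon)+\varepsilon\le-\varepsilon^2/2$. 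So you obtain the lemma with the explicit constant $c=24$ for every $\delta\le 1/2$. Your approach gives a self-contained proof with a concrete constant but relies essentially on Gaussianity; the citation covers more general entry distributions.

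The regime $1/2<\delta<1$ is a genuine gap, and your first proposed remedy cannot close it. Because $\delta$ may be arbitrarily close to $1$, the statement is equivalent to $\Pr[\text{failure}]\le e^{-\varepsilon^2t/c}$ for all $t$ and $\varepsilon$. When $\varepsilon^2t$ is small, this right-hand side is about $1-\varepsilon^2t/c<1$. A union bound of the form $2e^{-a\varepsilon^2t}$, by contrast, is at least $1$ as soon as $\varepsilon^2t\le(\ln 2)/a$, however much widening the window improves $a$.

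What is missing is an anti-concentration estimate. For $\varepsilon^2t<24\ln 2$ you must show $\Pr[(1-\varepsilon)^2\le Z/t\le(1+\varepsilon)^2]\ge\varepsilon^2t/c$. The true order is $\min\{1,\varepsilon\sqrt t\}$, which follows from a lower bound on the $\chi^2_t$ density near its mean. For $\varepsilon^2t\ge 24\ln 2$ your bound already gives $2e^{-\varepsilon^2t/12}\le e^{-\varepsilon^2t/24}$, so only the small-$\varepsilon^2t$ regime needs this extra step.

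Alternatively, state the lemma only for $\delta\le1/2$. As you observe, the paper invokes it only with $\delta\le\theta^7/600$, so nothing downstream changes. Falling back on the citation for the remaining regime is what the paper does, but then that part is not proved by your argument.
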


\begin{lemma}[Gaussian Set-Pair Preservation {\cite[Subsection~A.1]{Danos21}}]
\label{lem:gaussian-set-pair-preservation}
There is a universal constant $c>0$ such
that, for every $0<\varepsilon,\delta<1$ the following holds.  Let $X\subseteq\R^d$ be finite, and let
$f:\R^d\to\R^t$ be a Gaussian JL map with
$
t\ge c\varepsilon^{-2}\bigpar{\log(1/\varepsilon)+\log(1/\delta)}
$.
Then $f$ is an $(\varepsilon,\delta)$-random dimension reduction for $X$.
\end{lemma}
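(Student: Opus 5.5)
The plan is to use Gaussian concentration for Lipschitz functions, applied to the whole sum at once rather than to each pair. Fix $A,B\subseteq X$ and write the Gaussian JL map as $f(z)=\frac{1}{\sqrt t}Hz$, where $H\in\R^{t\times d}$ has i.i.d.\ $N(0,1)$ entries. For each pair $(x,y)\in A\times B$ with $x\neq y$, put $w_{xy}=\norm{x-y}$ and $u_{xy}=(x-y)/w_{xy}$, and let $W=\sum_{x\in A,y\in B}w_{xy}$. By linearity,
\[
    S(H)=\sum_{x\in A,y\in B}\norm{f(x)-f(y)}=\frac{1}{\sqrt t}\sum_{x\in A, y\in B} w_{xy}\norm{Hu_{xy}} .
\]
We must show $S(H)\in(1\pm\varepsilon)W$ with probability at least $1-\delta$. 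If $W=0$ there is nothing to prove.

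\textbf{Step 1 (Lipschitz bound).} For a unit vector $u$, the map $H\mapsto\norm{Hu}$ is $1$-Lipschitz with respect to the Frobenius norm, since $\bigl|\norm{Hu}-\norm{H'u}\bigr|\le\norm{(H-H')u}\le\norm{H-H'}_F$. Therefore $S$ is Lipschitz in $H$ (viewed as a standard Gaussian vector in $\R^{td}$) with constant $L=W/\sqrt t$. The Gaussian concentration inequality (Tsirelson--Ibragimov--Sudakov) then gives
\[
    \Pr\bigl[|S-\E S|>\tfrac{\varepsilon}{2}W\bigr]\le 2\exp\!\bigl(-\varepsilon^2 W^2/(8L^2)\bigr)=2e^{-\varepsilon^2t/8}.
\]
The point of this step is that the deviation scale is $W$, not the sum of per-pair deviations. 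This is why the tail matches that of a single pair.

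\textbf{Step 2 (expectation).} For a unit vector $u$, $Hu\sim N(0,I_t)$, so $\E\norm{Hu}=\E\chi_t$. This satisfies $\sqrt{t-1}\le\E\chi_t\le\sqrt t$, where the upper bound follows from Jensen and the lower bound from the standard Gamma-ratio estimate (alternatively, from $\mathrm{Var}(\chi_t)\le 1$, which is itself a consequence of the Gaussian Poincaré inequality). Hence
\[
    (1-1/t)\,W\le\sqrt{1-1/t}\,W\le \E S\le W,
\]
so $|\E S-W|\le W/t\le\frac{\varepsilon}{2}W$ as soon as $t\ge 2/\varepsilon$.

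\textbf{Step 3 (combine).} Take $t\ge c\varepsilon^{-2}(\log(1/\varepsilon)+\log(1/\delta))$ with $c$ large. Then both $t\ge2/\varepsilon$ and $2e^{-\varepsilon^2t/8}\le\delta$ hold. By the triangle inequality, with probability at least $1-\delta$ we get $|S-W|\le\varepsilon W$, which is exactly the $(\varepsilon,\delta)$-random dimension reduction property of \Cref{def:random-dim-red}.

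The step needing the most care is Step 1, namely recognizing that the sum of norms should be treated as a single Lipschitz function of $H$ instead of union-bounding over pairs; once that is done the argument is routine. The $\log(1/\varepsilon)$ term in the stated bound is not needed by this route; it only makes the constraint $t\ge 2/\varepsilon$ hold automatically.
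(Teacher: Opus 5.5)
Your proof is correct, and it takes a genuinely different, self-contained route from the one the paper relies on.

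The paper does not prove this lemma; it imports it from \cite{Danos21}. Its power-$z$ discussion shows the reduction it has in mind. The sum $\sum_{x\in A,y\in B}\norm{x-y}$ is half the $1$-median cost of the symmetric multiset $Z_{A,B}=\{\pm(x-y)\}$. By symmetry and convexity, the optimal center of $Z_{A,B}$ is the origin both before and after the linear map $f$. So the dimension-reduction theorem of \cite{MakarychevMR19} with a single cluster applies as a black box. That route needs no new probabilistic argument and carries over directly to $z$-th powers of distances, which the paper needs later. Its cost is that it inherits the $\log(1/\varepsilon)$ term of the general theorem.

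Your argument instead treats the entire weighted sum as one $(W/\sqrt t)$-Lipschitz function of the Gaussian matrix. This makes transparent why the sum concentrates exactly like a single pair. Combined with $\sqrt{t-1}\le\E\chi_t\le\sqrt t$, it gives the slightly stronger sufficient condition $t\ge 8\varepsilon^{-2}\log(2/\delta)$. For $z$-th powers you would need an extra step, e.g.\ concentrating the Lipschitz function $\bigl(\sum_{x,y}\norm{x-y}^z\norm{Hu_{xy}}^z\bigr)^{1/z}$ and controlling its bias.

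One caveat in Step 3. Deducing both $t\ge 2/\varepsilon$ and $2e^{-\varepsilon^2t/8}\le\delta$ from $t\ge c\varepsilon^{-2}(\log(1/\varepsilon)+\log(1/\delta))$ requires $\log(1/\varepsilon)+\log(1/\delta)$ to be bounded below by an absolute constant. When both $\varepsilon$ and $\delta$ tend to $1$, the stated lower bound on $t$ tends to $0$, and your two inequalities no longer follow. This is harmless for two reasons. The deduction does hold for all $\varepsilon\in(0,1)$ and all $\delta\le 1/2$ once $c\ge 16$. Also, the paper only applies the lemma with $\delta\le\theta^7/600$. Still, you should state this restriction explicitly, or read $\log(1/\delta)$ as $\log(2/\delta)$.
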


\begin{lemma}[Gaussian One-Sided Excess {\cite[Lemmas~C.1 and C.2]{MakarychevMR19}}]
\label{lem:gaussian-one-sided-excess}
There is a universal constant $c>0$ such
that, for every $0<\varepsilon,\rho<1$ the following holds.  Let $f:\R^d\to\R^t$ be a Gaussian JL map with
$
t\ge c\varepsilon^{-2}\bigpar{\log(1/\varepsilon)+\log(1/\rho)}
$.
Then
\[
    \forall x,y \in \R^d,\qquad
    \E\Big[
        \max\left\{0, 
            \norm{f(x)-f(y)}-(1+\varepsilon)\norm{x-y}
        \right\}
    \Big]
    \le
    \rho\norm{x-y}.
\]
\end{lemma}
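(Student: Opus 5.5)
The plan is to reduce the lemma to a one-dimensional tail estimate for the norm of a standard Gaussian vector, and then integrate that tail.

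\textbf{Step 1 (reduction to a Gaussian norm).} Since $f$ is linear, $f(x)-f(y)=f(x-y)$. If $x=y$ both sides vanish, so we may assume $x\neq y$ and write $u=(x-y)/\norm{x-y}$. Each coordinate of $f(u)$ is $\sum_j G_{ij}u_j$, where the entries $G_{ij}$ are independent $N(0,1/t)$ random variables. These coordinates are independent $N(0,\norm{u}^2/t)=N(0,1/t)$ random variables. Hence $f(x)-f(y)$ has the same distribution as $\norm{x-y}\cdot g/\sqrt t$, where $g\sim N(0,I_t)$. Set $Z=\norm{g}/\sqrt t$. It then suffices to show
\[
  \E\bigl[\max\{0,Z-(1+\varepsilon)\}\bigr]\le\rho .
\]

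\textbf{Step 2 (tail of $Z$).} I would use Gaussian concentration for Lipschitz functions. The map $g\mapsto\norm{g}$ is $1$-Lipschitz, and by Jensen $\E\norm{g}\le\sqrt{\E\norm{g}^2}=\sqrt t$. Therefore, for every $s\ge0$,
\[
  \Pr\bigl[\norm{g}\ge\sqrt t+s\sqrt t\bigr]\le e^{-ts^2/2},
  \qquad\text{that is,}\qquad
  \Pr[Z\ge 1+s]\le e^{-ts^2/2}.
\]
If one prefers to avoid Lipschitz concentration, a Laurent--Massart or Chernoff bound for $\chi^2_t$ gives the same estimate up to constants in the exponent. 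The constant $c$ absorbs that loss.

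\textbf{Step 3 (integrating the tail).} By the layer-cake formula,
\[
  \E\bigl[(Z-1-\varepsilon)_+\bigr]
  =\int_{\varepsilon}^{\infty}\Pr[Z\ge 1+s]\,ds
  \le\int_{\varepsilon}^{\infty}e^{-ts^2/2}\,ds .
\]
On $s\ge\varepsilon$ we have $s^2\ge\varepsilon s$, so the integral is at most
\[
  \int_\varepsilon^\infty e^{-t\varepsilon s/2}\,ds=\frac{2}{t\varepsilon}\,e^{-t\varepsilon^2/2}.
\]
Now use the hypothesis $t\ge c\varepsilon^{-2}(\log(1/\varepsilon)+\log(1/\rho))$ with $c$ large:
\begin{itemize}
  \item $t\varepsilon\ge c/\varepsilon\ge 2$, so the prefactor $2/(t\varepsilon)$ is at most $1$;
  \item $t\varepsilon^2/2\ge\log(1/\rho)$, so $e^{-t\varepsilon^2/2}\le\rho$.
\end{itemize}
This gives the bound $\rho$, and multiplying back by $\norm{x-y}$ finishes the proof. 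The $\log(1/\varepsilon)$ term in the hypothesis is not really needed here; it only makes the condition compatible with \Cref{lem:gaussian-set-pair-preservation}.

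\textbf{Main obstacle.} The only real work is the sub-Gaussian upper tail of $\norm{g}/\sqrt t$ around $1$ in Step 2. I expect it to go through directly via Lipschitz concentration together with the Jensen bound $\E\norm{g}\le\sqrt t$. The rest is routine integration.
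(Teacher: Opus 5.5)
The paper gives no argument of its own for this lemma; it cites \cite{MakarychevMR19}. Your route is the standard self-contained proof of such a bound: reduce exactly to $Z=\norm{g}/\sqrt t$, get a sub-Gaussian upper tail from Lipschitz concentration together with $\E\norm{g}\le\sqrt t$, and integrate via layer-cake. Steps~1 and~2 are correct, and so is the identity $\E[(Z-1-\varepsilon)_+]=\int_\varepsilon^\infty\Pr[Z\ge 1+s]\,ds$.

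One step in Step~3 is false as written.

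\emph{The gap.} The hypothesis only gives $t\varepsilon\ge c\varepsilon^{-1}\bigl(\log(1/\varepsilon)+\log(1/\rho)\bigr)$. Your claim $t\varepsilon\ge c/\varepsilon$ silently assumes $\log(1/\varepsilon)+\log(1/\rho)\ge 1$, which fails when $\varepsilon$ and $\rho$ are both close to $1$. For any fixed $c\ge1$, take $\varepsilon=\rho=e^{-1/(4c)}$. The required lower bound on $t$ is then $\tfrac12 e^{1/(2c)}\le 1$, so $t=1$ is allowed. Your final estimate $\frac{2}{t\varepsilon}e^{-t\varepsilon^2/2}$ then exceeds $2e^{-1/2}>1>\rho$, so the chain does not close.

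\emph{The repair.} The lemma is still true there, and the fix is one line. It is exactly where the $\log(1/\varepsilon)$ term you called unnecessary gets used. Bound the integrand by $(s/\varepsilon)e^{-ts^2/2}$ instead of $e^{-t\varepsilon s/2}$. Using only $t\ge 1$ and $e^{-t\varepsilon^2/2}\le(\varepsilon\rho)^{c/2}$, this gives
\[
  \int_\varepsilon^\infty e^{-ts^2/2}\,ds
  \le \frac{e^{-t\varepsilon^2/2}}{t\varepsilon}
  \le \frac{(\varepsilon\rho)^{c/2}}{\varepsilon}
  = \varepsilon^{c/2-1}\rho^{c/2}
  \le \rho
  \qquad (c\ge 2).
\]

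\emph{On dropping $\log(1/\varepsilon)$.} The statement does survive without that term. However, $\rho$ bounded away from $0$ then needs a separate estimate, for example $\E[(Z-1)_+]\le\tfrac12\E[(Z^2-1)_+]\le\tfrac14\sqrt{2/t}$. Your argument as written does not supply it.
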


\begin{lemma}[Gaussian Local Ball-Tail in Doubling Sets {\cite{IN07}}]
\label{lem:gaussian-local-ball-tail}
There is a universal constant $c>0$ such that
if $X\subseteq\R^d$ has doubling dimension at most $\ddim\ge1$,
and $f:\R^d\to\R^t$ is a Gaussian JL map with $t\ge c(\ddim+\mu)$,
then $f$ has the $\mu$-local ball-tail property for $X$.
\end{lemma}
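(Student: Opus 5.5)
The plan is a standard chaining argument over nets of the ball. Because $f$ is linear, and both the hypothesis and the conclusion are invariant under translation and scaling, we may fix $y\in X$ and $r>0$ and normalize so that $y=0$ and $r=1$. Let $Z=X\cap B(y,1)$. This is a subset of a doubling set, so $\ddim(Z)\le 2\ddim$. Since $Z$ is finite, it suffices to bound $\norm{f(x)-f(y)}$ for all $x\in Z$ simultaneously, through a chain of net points.

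\textbf{Building the chains.} Set $N_0=\{y\}$. For $j\ge1$, let $N_j$ be a $2^{-j}$-net of $Z$, built greedily so that $N_{j}\subseteq N_{j+1}$. By \Cref{lem:doubling-packing-bound}, $|N_j|\le 2^{O(j\cdot\ddim)}$. Since $Z$ is finite, $N_J=Z$ for some finite $J$.
\begin{itemize}
\item For each $x\in Z$, set $x_J=x$ and let $x_{j}$ be the nearest point of $N_j$ to $x_{j+1}$. Then $\norm{x_{j+1}-x_j}\le 2^{-j}$ for $j\ge1$, and $\norm{x_1-x_0}\le 1$.
\item By the triangle inequality,
\[
\norm{f(x)-f(y)}\le\sum_{j\ge0}\norm{f(x_{j+1}-x_j)}.
\]
\item The set $L_j$ of distinct links $(u,v)$ at level $j$, with $u\in N_{j+1}$ and $v\in N_j$, has size at most $|N_{j+1}|\le 2^{O((j+1)\ddim)}$.
\end{itemize}

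\textbf{Tail bound and union bound.} For a fixed vector $u$, $\norm{f(u)}^2/\norm{u}^2$ is distributed as $\chi^2_t/t$. The standard chi-square tail gives
\[
\Pr\bigsqpar{\norm{f(u)}\ge\lambda\norm{u}}\le e^{-t\lambda^2/8}\qquad\text{for }\lambda\ge2 .
\]
This is the large-deviation regime, which \Cref{lem:gaussian-pair-tail} does not directly give, so I would invoke the Gaussian concentration bound directly. I would allow level-$j$ links a stretch of $\lambda_j=\lambda_0\sqrt{1+j}$, with a constant $\lambda_0\ge2$. If no link exceeds its allowed stretch, then
\[
\norm{f(x)-f(y)}\le \lambda_0\Bigl(1+\sum_{j\ge1}\sqrt{1+j}\,2^{-j}\Bigr)\le C\lambda_0 .
\]
This may exceed $10$, so I would rescale the chain: use nets at scales $\eta 2^{-j}$ for a small constant $\eta$, and put the first level at scale $\eta$. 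This makes the whole sum at most $10$ while only changing constants in $|N_j|$. The failure probability is then at most
\[
\sum_{j\ge0}2^{O((j+1)\ddim)}\,e^{-t\lambda_0^2(1+j)/8}.
\]
If $t\ge c(\ddim+\mu)$ with $c$ large, each term is at most $e^{-\mu}2^{-(j+2)}$, because the exponent $t\lambda_0^2(1+j)/8$ dominates $O((j+1)\ddim)+\mu+(j+2)$. Summing over $j$ gives at most $e^{-\mu}$.

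\textbf{Main obstacle.} The delicate step is the first, coarsest link $x_1-x_0$. It has length up to the full radius but must be stretched by only a constant factor. Only $|N_1|=2^{O(\ddim)}$ such links exist, so the union bound there costs $2^{O(\ddim)}e^{-\Omega(t)}$, which is exactly why $t$ must scale with $\ddim+\mu$. More generally, the care lies in balancing $\lambda_j$ so that two conditions hold at once: the geometric series of stretched link lengths stays below $10$, and each level's net cardinality $2^{O(j\,\ddim)}$ is beaten by $e^{-t\lambda_j^2/8}$. Choosing $\lambda_j^2$ linear in $j$ achieves both.
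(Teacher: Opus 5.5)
Your chaining argument is correct and is essentially the standard Indyk--Naor proof from \cite{IN07}, which the paper cites without reproducing: dyadic nets of the ball, per-level stretch thresholds growing in $j$, the large-deviation Gaussian tail for each link, and a union bound over levels. One small remark: the rescaling step is unnecessary, since with $\lambda_0=2$ the unscaled chain already gives $\norm{f(x)-f(y)}\le 2\bigl(1+\sum_{j\ge1}\sqrt{1+j}\,2^{-j}\bigr)<2(1+1.7)<10$; and rescaling the nets could not shorten the first link to $y$ anyway, which always contributes up to $\lambda_0$.
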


\section{Dimension Reduction for Min-Sum \texorpdfstring{$k$}{k}-Clustering}
\label{sec:dimension-reduction}

In this section we prove two dimension-reduction theorems, stated below for convenience. The first depends on the doubling
dimension of the data, and the second depends on the aspect ratio.

\MainJLDDim*

\MainJLAspect*

\subsection{Preliminaries: Distortion Graph and Related Tools}
\label{sec:dr-random-subset-template}

Both dimension-reduction upper bounds rely on an intermediate guarantee, which shows that even after the map \(f\) is revealed, every random subset \(C\subseteq X\) has its min-sum cost preserved up to a controlled additive term. The tools below implement this strategy: the distortion graph records the pairs distorted by \(f\), and the sparsity of this graph lets us control their total contribution to the min-sum cost.
\begin{definition}[Distortion graph]
\label{def:distortion-graph}
Given $0<\varepsilon\leq 1/2$, the \emph{distortion graph} $G=(X,E)$ for a map $f$ has vertex set $X$ and an
edge $(x,y)\in E$ whenever
\[
    \norm{f(x)-f(y)}
    \notin
    (1\pm\varepsilon)\norm{x-y}.
\]
Observe that when $f$ is an $(\varepsilon,\delta)$-random dimension reduction for $X$,
\[
    \forall x,y \in X,\qquad
    \Pr\bigl[(x,y)\in E\bigr]\le \delta .
\]
\end{definition}

\begin{definition}[Everywhere-sparse graph]
A graph $H=(V,E)$ is called \emph{$\theta$-everywhere sparse} if
\[
    \forall u \in V, \quad\deg_H(u)\le \theta |V|.    
\]
Notice that this definition allows $V$ to be empty.
\end{definition}

\begin{theorem}[Sparse Subgraph Theorem {\cite[Theorem~3.2]{MakarychevMR19}}]
\label{thm:sparse-graph}
Consider a finite set $X$ and a random graph $H=(V,E)$, where $V$ is a random
subset of $X$ and $E$ is a random set of edges between vertices in $V$.  No
independence assumptions are made.  Let $\theta\in(0,1/2)$, and assume that
\[
    \forall x,y \in X,\qquad
    \Pr\bigsqpar{(x,y)\in E}\le\delta,
\]
where $\delta\le \theta^7/600$.  Then there exists a random subset
$V'\subseteq V$, defined on the same probability space, such that
\begin{itemize}
    \item $H[V']$ is $\theta$-everywhere sparse, and
    \item $\Pr\bigsqpar{u\in V\setminus V'}\le\theta$ for all $u\in X$.
\end{itemize}
\end{theorem}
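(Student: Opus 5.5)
This result is imported verbatim from \cite[Theorem~3.2]{MakarychevMR19}, and in the paper it is only used, not reproved. If I had to reconstruct a proof, I would build \(V'\) by an explicit peeling procedure and then bound each vertex's removal probability. I have not checked the details; the plan below, in particular how it gets a per-vertex bound, is a reconstruction rather than the original argument.

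\emph{Peeling.} Start from \(W_0=V\). While some \(u\in W_i\) has \(\deg_{H[W_i]}(u)>\theta|W_i|\), delete such a vertex; set \(V'\) to the final set. By construction \(H[V']\) is \(\theta\)-everywhere sparse (possibly empty), and \(V'\) is a deterministic function of \(H\), so it lives on the same probability space. All the work is in showing \(\Pr[u\in V\setminus V']\le\theta\) for every fixed \(u\in X\).

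\emph{Charging for a fixed \(u\).} When \(u\) is deleted from a current set \(W\) of size \(s\), it has more than \(\theta s\) \(E\)-neighbours inside \(W\).
\begin{itemize}
\item I would first group removal times by the dyadic size class \(s\in[2^j,2^{j+1})\).
\item The danger is that \(W\) is random and may be much smaller than \(X\), so the crude bound \(\E[\deg_E(u)]\le\delta|X|\) is useless.
\item To get around this, I would modify the procedure. Alongside \(W\), keep a fractional ``mass'' on the vertices, and delete vertices in a uniformly random order among those currently violating a slightly relaxed threshold; equivalently, give each vertex a random threshold in \([\theta/2,\theta]\).
\item The aim is to make the event ``\(u\) is removed'' comparable to ``\(u\) is a typical endpoint of many bad edges inside a dense subset''. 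Then a double-counting argument applies: each deletion of a vertex from a set of size \(s\) uses up at least \(\theta s/2\) edges inside that set.
\item Summing over \(u\) with weights \(1/s\), the expected total removed mass would be at most \(O(\theta^{-1})\) times the expected normalised number of bad edges.
\end{itemize}

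\emph{Bad edges and the role of \(\delta\le\theta^7/600\).} The expected normalised count of bad edges is controlled pair by pair, using \(\Pr[(x,y)\in E]\le\delta\). The high power \(\theta^7\) would pay for three losses: the dyadic scales, a second Markov step that turns average-over-vertices bounds into a per-vertex bound via symmetrisation over the random order, and the constant-factor slack in the thresholds.

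\textbf{Main obstacle.} The hardest step is the per-vertex guarantee. Averaging arguments give \(\E|V\setminus V'|\le\theta\,\E|V|\) fairly easily, but the theorem asks for \(\Pr[u\in V\setminus V']\le\theta\) for every \(u\), with \(V\) and \(E\) arbitrarily correlated. My first attempt would be the randomised-threshold peeling described above, aiming to show that no single vertex is disproportionately likely to be the one peeled. Failing that, I would fall back on the original argument in \cite{MakarychevMR19}.
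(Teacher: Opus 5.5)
The paper gives no proof of this statement. It quotes \cite[Theorem~3.2]{MakarychevMR19} and uses it as a black box, so your reconstruction has to stand on its own. It breaks exactly at the step you flag. No rule that computes $V'$ from the realized graph $H$ alone can give the per-vertex bound, and your peeling is such a rule, whether deterministic or with random order or random thresholds drawn independently of $(V,E)$.

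Take $X=\{u,y_1,\dots,y_{n-1}\}$, let $J$ be uniform on $[n-1]$, and set $V=\{u,y_J\}$ and $E=\{(u,y_J)\}$. Each pair is an edge with probability at most $1/(n-1)\le\theta^7/600$ for large $n$. Since $\theta<1/2$, keeping both endpoints violates $\deg_H(u)\le\theta|V'|$ (as $1>2\theta$), so one endpoint must be deleted. Your random-order peeling deletes $u$ with probability $1/2>\theta$. A deterministic peeling with fixed tie-breaking deletes $u$ every time if the tie-breaking picks $u$ first.

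The failure is general. Let $p(a,b)$ be the probability that an $H$-only rule deletes $a$ when $H$ is the single edge $\{a,b\}$. Then $p(a,b)+p(b,a)\ge1$, so some $u$ has $\frac1{n-1}\sum_{b\ne u}p(u,b)\ge\frac12$, and the instance above centered at that $u$defeats the rule. In the realization the two endpoints play symmetric roles; in the law they do not. The theorem holds here only because one may always keep $u$ and delete $y_J$, which lies in $V$ with probability only $1/(n-1)$.

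So the missing idea is that $V'$ must be built from the distribution of $(V,E)$, not from its realization. One way to make this explicit is a minimax reduction (allowing independent coins). It suffices to show $\E\bigl[\min_{V'}\lambda(V\setminus V')\bigr]\le\theta$ for every probability vector $\lambda$ on $X$. Here $\lambda(A)=\sum_{a\in A}\lambda_a$, and the minimum is over $V'\subseteq V$ with $H[V']$ $\theta$-everywhere sparse. The optimal $V'$ for each realization then depends on $\lambda$, hence on the law.

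The same instance refutes the averaged bound $\E|V\setminus V'|\le\theta\,\E|V|$ that you call easy. There $\E|V|=2$, yet every admissible $V'$ has $|V\setminus V'|\ge1$. Summing the per-vertex guarantee only gives $\E|V\setminus V'|\le\theta|X|$, so it is not a sharpening of any bound relative to $|V|$. Consequently, your plan cannot work for an $H$-only rule: double counting edges with $1/s$ weights, then passing to individual vertices by a Markov step and symmetrisation over the random order. Your paragraph splitting the exponent in $\delta\le\theta^7/600$ among three losses is also not an argument. As written, the proposal is a plan whose central step is false in the form stated, plus a deferral to the original proof.
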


\begin{lemma}[Sparse Graph Preservation]
\label{lem:sparse-graph-preservation}
Let $\theta\le 1/10$,  $0<\varepsilon\le 1/2$ and let $G=(Y,E)$ be the distortion graph of $Y$ for a map $f$. If $G$ is $\theta$-everywhere sparse, then
\begin{align}
    \sum_{x,y\in Y}\norm{f(x)-f(y)}
    \le
    (1+5\theta)(1+2\varepsilon)
    \sum_{x,y\in Y}\norm{x-y},\label{eq:first-sparse-graph-preservation}
\end{align}
and
\begin{align}
    \sum_{x,y\in Y}\norm{x-y}
    \le
    (1+5\theta)(1+2\varepsilon)
    \sum_{x,y\in Y}\norm{f(x)-f(y)}.\label{eq:second-sparse-graph-preservation}
\end{align}
\end{lemma}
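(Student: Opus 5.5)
Let $n=|Y|$ and write $d(x,y)=\norm{x-y}$ and $d'(x,y)=\norm{f(x)-f(y)}$. The two inequalities are symmetric: the hypothesis is symmetric in $d$ and $d'$ up to the $(1\pm\varepsilon)$ factor. Indeed, if $d'\in(1\pm\varepsilon)d$ then $d\le d'/(1-\varepsilon)\le(1+2\varepsilon)d'$ for $\varepsilon\le1/2$. So I will prove \eqref{eq:first-sparse-graph-preservation} by an argument that uses only the triangle inequality for both metrics and the non-edge relation. Then \eqref{eq:second-sparse-graph-preservation} follows by swapping the roles of $d$ and $d'$.

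Split the sum $\sum_{x,y}d'(x,y)$ into non-edges and edges. On non-edges, $d'(x,y)\le(1+\varepsilon)d(x,y)$. For an edge $(x,y)\in E$, I route through intermediate points. Let $Z_{xy}$ be the set of $z\in Y$ such that neither $(x,z)$ nor $(z,y)$ is an edge. By $\theta$-everywhere sparsity, $|Z_{xy}|\ge(1-2\theta)n$. For every $z\in Z_{xy}$,
\[
d'(x,y)\le d'(x,z)+d'(z,y)\le(1+\varepsilon)\bigl(d(x,z)+d(z,y)\bigr).
\]
Averaging over $z\in Z_{xy}$ and extending the sum to all of $Y$ gives
\[
d'(x,y)\le\frac{1+\varepsilon}{(1-2\theta)n}\sum_{z\in Y}\bigl(d(x,z)+d(z,y)\bigr).
\]
Now sum over all edges. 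Each $x$ lies in at most $\theta n$ edges, so each term $d(x,z)$ appears at most $2\theta n$ times (once for each edge at $x$, counting both orientations of the ordered pairs). Hence
\[
\sum_{(x,y)\in E}d'(x,y)\le\frac{(1+\varepsilon)\cdot 2\cdot 2\theta n}{(1-2\theta)n}\sum_{x,z}d(x,z)\le 5\theta(1+\varepsilon)\sum_{x,z}d(x,z),
\]
where the last step uses $\theta\le1/10$. The main care point is the bookkeeping of the constant. If the count gives $4\theta/(1-2\theta)$, this is at most $5\theta$ for $\theta\le1/10$; if it comes out somewhat larger, I would adjust by not double-counting ordered pairs.

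Adding the non-edge part, which is at most $(1+\varepsilon)\sum d$, to the edge part gives
\[
\sum_{x,y}d'(x,y)\le(1+\varepsilon)(1+5\theta)\sum_{x,y}d(x,y)\le(1+5\theta)(1+2\varepsilon)\sum_{x,y}d(x,y).
\]
For the reverse inequality, repeat the argument with $d$ and $d'$ swapped. On non-edges, $d(x,z)\le(1+2\varepsilon)d'(x,z)$, and the same edge-routing bound goes through with the factor $1+2\varepsilon$ in place of $1+\varepsilon$. The only potential obstacle is making sure that $Z_{xy}$ has size at least $(1-2\theta)n$; this holds whenever $n\ge1$, and the case $Y=\emptyset$ is trivial.
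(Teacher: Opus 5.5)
Your proposal is correct and follows the paper's proof essentially step for step. You route each distorted pair through the at least $(1-2\theta)|Y|$ common non-neighbours and average. You then charge via the degree bound to get the factor $\frac{4\theta}{1-2\theta}(1+\varepsilon)\le 5\theta(1+\varepsilon)$, and you redo the argument in the reverse direction using $1/(1-\varepsilon)\le 1+2\varepsilon$.

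One small point: your count of $2\cdot 2\theta n$ per ordered pair is a factor-two overcount, since the true multiplicity of $\norm{x-z}$ is $\deg(x)+\deg(z)\le 2\theta n$. The overcount is harmless, because it yields exactly the paper's constant $\frac{4\theta}{1-2\theta}\le 5\theta$, so no adjustment is needed.
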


\begin{proof}
For each $x\in Y$, let $I_x=\{y\in Y: (x,y)\notin E\}.$
Since $G$ is $\theta$-everywhere sparse, we have
\[
    \forall x,y \in Y,\qquad
    |I_x\cap I_y|\ge (1-2\theta)|Y|.
\]
Now fix $x,y\in X$. For every $z\in I_x\cap I_y$,
\[
    \norm{f(x)-f(y)}
    \le
    \norm{f(x)-f(z)}+\norm{f(z)-f(y)}
    \le
    (1+\varepsilon)\bigpar{\norm{x-z}+\norm{z-y}},
\]
 and averaging this over all $z\in I_x\cap I_y$ gives
\[
    \norm{f(x)-f(y)}
    \le
    \frac{1+\varepsilon}{(1-2\theta)|Y|}
    \sum_{z\in I_x\cap I_y}
    \bigpar{\norm{x-z}+\norm{z-y}}.
\]
Now split the sum $\underset{x,y\in X}{\sum}\norm{f(x)-f(y)}$ into undistorted and distorted pairs.  The undistorted
pairs contribute at most $(1+\varepsilon)\sum_{x,y\in Y}\norm{x-y}$.  For the
distorted pairs, use the above averaging bound to charge each undistorted pair at most $4\theta |Y|$ times. Hence
\begin{gather}
    \sum_{\substack{x,y\in Y, y\notin I_x}}\norm{f(x)-f(y)}
    \le
    (1+\varepsilon)\frac{4\theta}{1-2\theta}
    \sum_{\substack{x,y\in Y, y\in I_x}}\norm{x-y}\leq 5\theta(1+\varepsilon)\sum_{x,y\in Y}\norm{x-y}, \label{eq:distorted_cost_bound}
\end{gather}
where the second inequality follows from $\theta\leq1/10$. Inequality \eqref{eq:distorted_cost_bound} together with the bound on the distorted pairs contribution proves \eqref{eq:first-sparse-graph-preservation}. Inequality \eqref{eq:second-sparse-graph-preservation} follows in the same way, using
\[
    z\in I_x\cap I_y,\quad \norm{x-y}
    \le
    \frac{1}{1-\varepsilon}
    \bigpar{\norm{f(x)-f(z)}+\norm{f(z)-f(y)}},
\]
and the conclusion follows from
$1/(1-\varepsilon)\le 1+2\varepsilon$.
\end{proof}

\begin{lemma}[One-Median Charge]
\label{lem:k-median-centers-cost-upper-bound}
Let $S\subseteq\R^d$, and let $c^*$ be an optimal center for $S$ under the
$1$-median objective.  Then
\[
    |S|\sum_{y\in S}\norm{y-c^*}
    \le
    \sum_{y,y'\in S}\norm{y-y'}.
\]
\end{lemma}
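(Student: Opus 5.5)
The plan is to expand the sum over ordered pairs by inserting the center $c^*$ and using the triangle inequality in the reverse direction from the usual one. Instead of bounding pairwise distances by distances to the center, we bound the distances to the center by pairwise distances, and the optimality of $c^*$ lets us swap the center for an arbitrary point of $S$.

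The key observation is that every point $y'\in S$ is itself a feasible candidate center. Since $c^*$ minimizes $\sum_{y\in S}\norm{y-c}$ over all $c\in\R^d$, we have, for each fixed $y'\in S$,
\[
    \sum_{y\in S}\norm{y-c^*}\le \sum_{y\in S}\norm{y-y'} .
\]
Summing this inequality over all $y'\in S$ gives the statement directly:
\[
    |S|\sum_{y\in S}\norm{y-c^*}
    = \sum_{y'\in S}\sum_{y\in S}\norm{y-c^*}
    \le \sum_{y'\in S}\sum_{y\in S}\norm{y-y'}
    = \sum_{y,y'\in S}\norm{y-y'} .
\]
This matches the paper's convention of summing over ordered pairs, including the zero diagonal terms.

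There is essentially no obstacle. The only points to check are that the $1$-median optimum is taken over all of $\R^d$, or at least over a set containing $S$, so that each $y'\in S$ is a valid competitor, and that $S$ is nonempty. The inequality also holds for a discrete median (medoid) restricted to $S$. No triangle inequality is needed, and no factor of $2$ is lost.
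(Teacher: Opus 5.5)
Your proof is correct and is essentially identical to the paper's: each $y'\in S$ is a competing center, so $\sum_{y\in S}\norm{y-c^*}\le\sum_{y\in S}\norm{y-y'}$, and summing over $y'\in S$ gives the bound. The opening remark about using the triangle inequality is unnecessary, since, as you note yourself, the argument uses only the optimality of $c^*$.
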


\begin{proof}
Since $c^*$ is an optimal $1$-median center for $S$,
\[
    \forall y^\prime\in S,\qquad
    \sum_{y\in S}\norm{y-c^*}
    \le
    \sum_{y\in S}\norm{y-y'}.
\]
Summing over all $y^\prime\in S$ gives the lemma.
\end{proof}

\subsection{Upper Bound for Doubling Dimension}

The next lemma is the core of the doubling-dimension argument. It shows that
for every random subset \(C\subseteq X\), even one chosen after seeing the
embedding, the min-sum costs of \(C\) and \(f(C)\) are close up to a small
additive term controlled by the optimal clustering. We first use \Cref{thm:sparse-graph} and \Cref{lem:sparse-graph-preservation}
to obtain a subset $C'\subseteq C$ whose cost is approximately
preserved.
For pairs involving points outside $C'$, we replace their endpoints
by net points around the optimal cluster centers and orient each
pair toward the endpoint whose optimal cluster is larger, allowing
us to bound the resulting error in terms of $\cost(C)$ and $\OPT$.

\begin{lemma}[Preservation of a Random Cluster]
\label{lem:ddim-random-subset-preservation}
Let $X\subseteq\R^d$ be an instance of min-sum $k$-clustering of doubling dimension at most
$\ddim\geq1$, and let $\CALP^*=\{P_1^*,\ldots,P_k^*\}$ be an optimal partition of $X$ with
$\OPT=\cost(\CALP^*)$.  Assume $0<\varepsilon\le 1/2$ and
\[
    \theta\le \frac1{20},
    \qquad
    \delta
    \le
    \min\left\{
        \frac{\theta^7}{600},
        \frac{\varepsilon^{\alpha\ddim}}{100k^2\log^2 n}
    \right\},
    \qquad
    \mu
    \ge
    \beta\bigpar{\ddim\log(1/\varepsilon)+\log k+\log\log n}
\]
for sufficiently large constants $\alpha,\beta>0$.  Let
$f:\R^d\to\R^t$ be a random map such that
\[
    \forall x,y \in X,\qquad
    \Pr\!\left[
        \norm{f(x)-f(y)}
        \notin
        (1\pm\varepsilon)\norm{x-y}
    \right]\le\delta,
\]
and assume that $f$ has the $\mu$-local ball-tail property for $X$.  Then for
every random subset $C\subseteq X$ which may depend on $f$, with probability at
least $4/5$,
\begin{align}
    \cost(C)
    &\le
    A\bigpar{
        \cost(f(C))
        +
        O\bigl(\varepsilon(k\theta+n^{-8})\OPT\bigr)
    },
    \label{eq:ddim-subset-first}
    \\
    \cost(f(C))
    &\le
    A\bigpar{
        \cost(C)
        +
        O\bigl(\varepsilon(k\theta+n^{-8})\OPT\bigr)
    },
    \label{eq:ddim-subset-second}
\end{align}
where $A=(1+5\theta)(1+2\varepsilon)$.
\end{lemma}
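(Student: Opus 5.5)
Apply \Cref{thm:sparse-graph} to the random graph $H=G[C]$, where $G$ is the distortion graph of $f$. Since $\Pr[(x,y)\in E]\le\delta\le\theta^7/600$, the theorem yields a random $C'\subseteq C$ such that $G[C']$ is $\theta$-everywhere sparse and $\Pr[u\in C\setminus C']\le\theta$ for every $u\in X$. Let $R=C\setminus C'$ denote the removed points. By \Cref{lem:sparse-graph-preservation}, the sums over pairs inside $C'$ satisfy both \eqref{eq:ddim-subset-first} and \eqref{eq:ddim-subset-second} with factor $A$. It then suffices to show that, with probability at least $4/5$, the pairs $(x,y)\in C\times C$ with at least one endpoint in $R$ contribute
\[
\sum_{x\in R,\,y\in C}\bigl|\norm{f(x)-f(y)}-\norm{x-y}\bigr|\le O(\varepsilon)\cost(C)+O\bigl(\varepsilon(k\theta+n^{-8})\OPT\bigr)
\]
to either side. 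The $O(\varepsilon)\cost(C)$ part is then absorbed into the multiplicative factor, after adjusting constants.

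To control the pairs touching $R$, fix the $1$-median centers $c_i$ of the $P_i^*$. For a point $x\in P_i^*$, set $r_i=\frac{1}{|P_i^*|}\sum_{y\in P_i^*}\norm{y-c_i}$. Around each $c_i$, build nets $N_{i,j}$ at scale $\varepsilon 2^j r_i$ inside the annuli $B(c_i,2^{j+1}r_i)\setminus B(c_i,2^j r_i)$, for $j$ ranging from $-O(\log n)$ to $O(\log n)$. Points closer to $c_i$ than $n^{-10}r_i$ go to $c_i$; this is the source of the $n^{-8}\OPT$ term. By \Cref{lem:doubling-packing-bound}, each annulus has $(1/\varepsilon)^{O(\ddim)}$ net points, so there are $(1/\varepsilon)^{O(\ddim)}k\log n$ net points in total. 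A union bound over all pairs of net points (together with the centers) costs $\delta\cdot((1/\varepsilon)^{O(\ddim)}k\log n)^2$, which is small by the choice of $\alpha$. A union bound over all net balls for the ball-tail property costs $e^{-\mu}(1/\varepsilon)^{O(\ddim)}k\log n$, which is small by the choice of $\beta$. After both union bounds, every net pair is $(1\pm\varepsilon)$-preserved and every point $x$ satisfies $\norm{f(x)-f(\pi(x))}\le 10\norm{x-\pi(x)}$ up to the scale $\varepsilon 2^j r_i$, where $\pi(x)$ is its net representative. The key feature is that this holds simultaneously for all $x$, independently of $C$.

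For a pair $(x,y)$ with $x\in R$, $x\in P_i^*$, $y\in P_\ell^*$, orient the pair toward the endpoint whose optimal cluster is larger; say $|P_i^*|\ge|P_\ell^*|$, with the other case symmetric. Replace $x$ and $y$ by $\pi(x)$ and $\pi(y)$. The replacement error for $y$ is $O(\varepsilon)\max\{\norm{y-c_\ell},r_\ell\}$. Since $\norm{x-y}\ge\norm{y-c_\ell}-\norm{x-c_\ell}$, this error is charged either to $\varepsilon\norm{x-y}$ or to the distance of $x$ to a center. The charges to center distances become terms of the form $\varepsilon\norm{x-c_i}$ and $\varepsilon r_i$, each multiplied by the number of partners of $x$ in $C$. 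Because of the orientation, this multiplicity is at most $|C|$ but can also be bounded via $|P_i^*|$ in the relevant case. Summing over $x\in R$ and taking expectations, $\E\bigl[\sum_{x\in R}|P_i^*|\norm{x-c_i}\bigr]\le\theta\sum_x|P_i^*|\norm{x-c_i}\le\theta\,\cost(P_i^*)$ by \Cref{lem:k-median-centers-cost-upper-bound}. Summing over clusters, and accounting for the factor $k$ coming from the choice of partner cluster, gives $O(\varepsilon k\theta)\OPT$ in expectation. Markov's inequality then converts this expectation bound into a constant-probability bound.

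\textbf{Main obstacle.} The hard step is the accounting of multiplicities. A removed point $x$ pairs with up to $|C|$ points, and $|C|$ may be much larger than $|P_i^*|$, so charging to $\OPT$ alone fails. My first attempt is to charge the oriented pairs of this kind to $\varepsilon\norm{x-y}$, hence to $\varepsilon\cost(C)$, whenever $y$ lies far from $c_i$ relative to $\norm{x-c_i}$. Only the near pairs would then be charged via the center distances, and these are limited by the orientation and the size comparison. I expect the exact case split at this point to require care to ensure that the factor is $k$ rather than larger.
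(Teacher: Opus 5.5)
There is a genuine gap, and it sits at the step you flag as the main obstacle. The case split you sketch does not close it. You use one fixed representative $\pi(y)$ per point, taken from the net around the point's \emph{own} center. Then the replacement error of the smaller-cluster endpoint $y\in P_\ell^*$ is of order $\varepsilon\norm{y-c_\ell}$. Your charging $\norm{y-c_\ell}\le\norm{x-y}+\norm{x-c_\ell}$ leaves $\norm{x-c_\ell}$, the distance from $x$ to the center of the \emph{other}, smaller cluster, and nothing relates that to $\OPT$.

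If instead you pay $\varepsilon\norm{y-c_\ell}$ directly, $y$ can have up to $n$ partners in larger clusters. The expectation is then of order $\theta\varepsilon\sum_\ell (n/|P_\ell^*|)\cost(P_\ell^*)$ rather than $O(\theta\varepsilon k\OPT)$; this is exactly why the aspect-ratio argument needs a balanced reference partition.

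A concrete bad configuration: take a large, very tight cluster around $c_i$, and a small cluster some of whose points $y$ lie at distance $\rho$ from $c_i$ while $\norm{y-c_\ell}$ is of order $D\gg\rho$. Optimality of the reference partition does not rule this out when $|P_i^*|\gg|P_\ell^*|$. Each pair $(x,y)$ with $x\in P_i^*$ has length about $\rho$ and $\norm{x-c_i}\approx 0$. So neither $\varepsilon\norm{x-y}$ nor a charge to $x$'s own center can pay $\varepsilon D$, and $y$ has $|P_i^*|$ such partners. Splitting on whether $y$ is far from $c_i$ does not help, because your error is measured from $c_\ell$, not $c_i$. Relatedly, summing only over $x\in R$ misses pairs whose removed endpoint is the smaller-cluster one; there the charge must land on a partner that was not removed.

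The missing idea, used in the paper, is that the rounding must be pair-dependent. For a pair oriented so that $y$ lies in the larger cluster $P_\ell^*$, round \emph{both} endpoints to nets centered at $c_\ell^*$. This needs two changes to your nets.
\begin{itemize}
\item The nets must be built on $X\cap B(c_j^*,r_i)$ over all of $X$, not just $P_j^*$, at scales $r_i=20\OPT/2^i$ down to $\OPT/\poly(n)$. Your scales relative to the within-cluster average can be zero and need not reach other clusters' points.
\item A separate far case $\norm{x-y}>2\OPT$ is needed: there $x$ is rounded at the coarsest scale (error about $\varepsilon\OPT$) and charged to $\varepsilon\norm{x-y}$.
\end{itemize}
The per-pair error then becomes $O(\varepsilon)\bigl(\norm{x-y}+\norm{y-c_\ell^*}+\OPT/n^{10}\bigr)$, and the only center charge goes to the larger endpoint's own center. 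At most $k|P_\ell^*|$ points lie in clusters of size at most $|P_\ell^*|$. So summing $\Pr[(x,y)\in S]\le 2\theta$ over all ordered pairs of $X$ and applying \Cref{lem:k-median-centers-cost-upper-bound} bounds the expected center charge by $2\theta k\OPT$, and Markov's inequality finishes.

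A smaller point: $c_i$ need not belong to $X$, and the hypotheses (pair distortion and ball tail) are stated only for points of $X$. So $c_i$ cannot serve as a representative; use the finest net of $X$ around it instead.
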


\begin{proof}
Let $c_j^*$ be a $1$-median center of $P_j^*$, and set
$r_0=20\OPT$.  Pick $m_0=O(\log n)$ such that
$2^{m_0}=40n^{10}$, and for $i=0,\ldots,m_0$ set $r_i=r_0/2^i$.
Let $N_{ij}$ be an $(\varepsilon r_i/100)$-net of
$X\cap B(c_j^*,r_i)$, and denote
\[
    N=\bigcup_{i=0}^{m_0}\bigcup_{j=1}^k N_{ij}.
\]
By \Cref{lem:doubling-packing-bound}, applied to each $N_{ij}$,
\[
    |N|\le k\varepsilon^{-O(\ddim)}\log n .
\]
Using the pairwise distortion assumption and taking
a union bound over $N\times N$, we obtain
\begin{equation}
\label{eq:ddim-net-preservation}
    \Pr\bigl[
        \exists u,v\in N,\,\,\,
        \norm{f(u)-f(v)}
        \notin
        (1\pm\varepsilon)\norm{u-v}
    \bigr]
    \le
    \delta |N|^2
    \le
    \frac1{100}.
\end{equation}
Additionally, since $f$ has the $\mu$-local ball-tail property, for every $i,j$
and every $y\in N_{ij}$,
\[
    \Pr\!\left[
        \exists x\in X\cap B(y,\varepsilon r_i/100),\,\,
        \norm{f(x)-f(y)}>\varepsilon r_i/10
    \right]
    \le
    e^{-\mu}.
\]
For a sufficiently large constant $\beta$, another union bound gives
\begin{equation}
\label{eq:ddim-net-ball}
    \Pr\!\left[
        \exists i,j,y\in N_{ij},\
        \exists x\in X\cap B(y,\varepsilon r_i/100),\,\,\,
        \norm{f(x)-f(y)}>\varepsilon r_i/10
    \right]
    \le
    \frac1{100}.
\end{equation}
Let $\Lambda$ be the event that neither \eqref{eq:ddim-net-preservation} nor
\eqref{eq:ddim-net-ball} occurs.  Then $\Pr(\Lambda)\ge 9/10$.
Let $C$ be a random subset of $X$, and let $G[C]$ be the distortion graph of
$C$.  By \Cref{thm:sparse-graph}, there exists $C'\subseteq C$ such that
$G[C']$ is $\theta$-everywhere sparse and
\[
    \forall u \in X, \qquad \Pr \bigsqpar{u\in C\setminus C'}\le \theta.
\]

Consider a pair $x,y\in C$.  Assume without loss of generality that
$y\in P_\ell^*$, $x\in P_j^*$, and $|P_\ell^*|\ge |P_j^*|$.  We define
$r(x,y)>0$ and $u(x,y)\in N$ as follows.  If $\norm{x-y}\le 2\OPT$, let
$r(x,y)$ be the smallest radius $r_i$ such that
\[
    \norm{x-c_\ell^*}\le r_i,
\]
and let $u(x,y)$ be the closest point to $x$ in the corresponding net
$N_{i\ell}$.  Otherwise, let $r(x,y)=r_0$ and let $u(x,y)$ be the nearest point
to $x$ in the coarsest net $N_{0j}$.  
In the far case \(\norm{x-y}>2\OPT\), while
\(\norm{y-c_\ell^*}\le \OPT\), and hence
\(\norm{x-c_\ell^*}>\OPT\).
The quantities $r(y,x)$ and $u(y,x)$ are defined only from $y$ and the same
center $c_\ell^*$.  Namely, $r(y,x)$ is the smallest radius level such that
$\norm{y-c_\ell^*}\le r_i,$ and $u(y,x)$ is the closest point to $y$ in the corresponding net.
By construction,
\begin{equation}
\label{eq:ddim-radius-bound}
    r(x,y)\le
    \max\left\{25\norm{x-c_\ell^*},\frac{\OPT}{2n^{10}}\right\},
    \qquad
    r(y,x)\le
    \max\left\{2\norm{y-c_\ell^*},\frac{\OPT}{2n^{10}}\right\}.
\end{equation}
Thus
\begin{align}
    r(x,y)+r(y,x)
    &\le
    \frac{\OPT}{n^{10}}
    +25\norm{x-c_\ell^*}
    +2\norm{y-c_\ell^*}
    \notag\\
    &\le
    \frac{\OPT}{n^{10}}
    +25\norm{x-y}
    +30\norm{y-c_\ell^*}.
    \label{eq:ddim-radius-sum}
\end{align}
Therefore
\begin{align}
    \left|
    \norm{x-y}-\norm{u(x,y)-u(y,x)}
    \right|
    &\le
    \norm{x-u(x,y)}+\norm{y-u(y,x)}
    \notag\\
    &\le
    \frac{\varepsilon}{100}\bigl(r(x,y)+r(y,x)\bigr)
    \notag\\
    &\le
    \varepsilon\left(
        \frac{\OPT}{n^{10}}
        +\frac{\norm{x-y}}{4}
        +\frac{3\norm{y-c_\ell^*}}{10}
    \right).
    \label{eq:ddim-rounding-original}
\end{align}
If event $\Lambda$ occurs, the failure event in
\eqref{eq:ddim-net-ball} does not occur. Hence
$\norm{f(x)-f(u(x,y))}\le \varepsilon r(x,y)/10$ and
$\norm{f(y)-f(u(y,x))}\le \varepsilon r(y,x)/10$.
The triangle inequality and \eqref{eq:ddim-radius-sum} therefore give:
\begin{align}
    \left|
    \norm{f(x)-f(y)}
    -
    \norm{f(u(x,y))-f(u(y,x))}
    \right|
    &\le
    \norm{f(x)-f(u(x,y))}
    +
    \norm{f(y)-f(u(y,x))}
    \notag\\
    &\le
    \frac{\varepsilon}{10}\bigl(r(x,y)+r(y,x)\bigr)
    \notag\\
    &\le
    \frac{\varepsilon}{10}
    \left(
        \frac{\OPT}{n^{10}}
        +25\norm{x-y}
        +30\norm{y-c_\ell^*}
    \right).
    \label{eq:ddim-rounding-image}
\end{align}

Define
\[
    g(x,y)=
    \begin{cases}
        x, & x,y\in C',\\
        u(x,y), & \text{otherwise}.
    \end{cases}
\]
Then set
\[
    \cost_g(C)=\sum_{x,y\in C}\norm{g(x,y)-g(y,x)}
\]
and
\[
    \cost_g(f(C))
    =
    \sum_{x,y\in C}
    \norm{f(g(x,y))-f(g(y,x))}.
\]
Let
$S=\{(x,y)\in C\times C:\ x\notin C'\text{ or }y\notin C'\}.$
If event $\Lambda$ occurs, using \Cref{lem:sparse-graph-preservation} on $C'$ and the net
preservation event on the rounded pairs,
\begin{align}
    \cost_g(f(C))
    &\le
    (1+5\theta)(1+2\varepsilon)
    \sum_{x,y\in C'}\norm{x-y}
    +(1+\varepsilon)
    \sumbad\norm{u(x,y)-u(y,x)}
    \notag\\
    &\le
    A\cost_g(C),
    \label{eq:ddim-cost-g-forward}
\end{align}
and similarly
\begin{equation}
\label{eq:ddim-cost-g-backward}
    \cost_g(C)
    \le
    A\cost_g(f(C)).
\end{equation}

By \eqref{eq:ddim-rounding-original} and \eqref{eq:ddim-rounding-image},
\begin{align}
    \cost(C)
    &\le
    \cost_g(C)
    +
    \varepsilon\frac{\OPT}{n^8}
    +
    \frac{\varepsilon}{4}\cost(C)
    +
    \frac{3\varepsilon}{10}
    \sumbad\norm{y-c^*(y)},
    \label{eq:ddim-compare-one}
    \\
    \cost_g(C)
    &\le
    \cost(C)
    +
    \varepsilon\frac{\OPT}{n^8}
    +
    \frac{\varepsilon}{4}\cost(C)
    +
    \frac{3\varepsilon}{10}
    \sumbad\norm{y-c^*(y)},
    \label{eq:ddim-compare-two}
    \\
    \cost_g(f(C))
    &\le
    \cost(f(C))+10\varepsilon\frac{\OPT}{n^8}+3\varepsilon\cost(C)+3\varepsilon
    \sumbad\norm{y-c^*(y)},
    \label{eq:ddim-compare-three}
    \\
    \cost(f(C))
    &\le
    \cost_g(f(C))
    +
    10\varepsilon\frac{\OPT}{n^8}+3\varepsilon\cost(C)+3\varepsilon
    \sumbad\norm{y-c^*(y)}.
    \label{eq:ddim-compare-four}
\end{align}
Here $c^*(y)$ denotes the $1$-median center of the optimal cluster containing
$y$, and the term $\OPT/n^8$ comes from summing $\OPT/n^{10}$ over at most
$n^2$ ordered pairs.

It remains to bound $\sumbad\norm{y-c^*(y)}$ in
\eqref{eq:ddim-compare-one}--\eqref{eq:ddim-compare-four}.
By the second conclusion of \Cref{thm:sparse-graph},
$\Pr\bigsqpar{u\in C\setminus C'}\le\theta$ for every $u\in X$.
The definition of $S$ and a union bound therefore give
\[
    \forall x,y \in X,\qquad
    \Pr\bigsqpar{(x,y)\in S}\le 2\theta.
\]
Orient the pair so that $y$ belongs to an optimal cluster at least as large as
the optimal cluster of $x$.  Then
\begin{align}
    \E\left[
        \sumbad\norm{y-c^*(y)}
    \right]
    &\le
    2\theta
    \sum_{\substack{x,y\in X\\ |P_x^*|\le |P_y^*|}}
    \norm{y-c^*(y)}
    \notag\\
    &\le
    2\theta
    \sum_{y\in X} k|P_y^*|\norm{y-c^*(y)}
    \notag\\
    &\le
    2\theta k
    \sum_{i=1}^k\sum_{u,v\in P_i^*}\norm{u-v}
    =
    2\theta k\OPT,
    \label{eq:ddim-charge-expectation}
\end{align}
where the last inequality is \Cref{lem:k-median-centers-cost-upper-bound}.
By Markov's inequality, with probability at least $9/10$,
\[
    \sumbad\norm{y-c^*(y)}
    \le
    20\theta k\OPT.
\]
Together with $\Lambda$, this event holds with probability at least $4/5$.
Combining \eqref{eq:ddim-cost-g-forward}--\eqref{eq:ddim-compare-four} and
rescaling $\varepsilon$ by a sufficiently small universal constant gives
\eqref{eq:ddim-subset-first} and \eqref{eq:ddim-subset-second}. This concludes the proof of \Cref{lem:ddim-random-subset-preservation}.
\end{proof}

\begin{lemma}[From random clusters to all partitions]
\label{lem:bad-cluster-extraction}
Suppose that for every random subset $C\subseteq X$ which may depend on $f$, with
probability at least $4/5$,
\[
    \cost(C)\le A\bigpar{\cost(f(C))+\Gamma}
    \qquad\text{and}\qquad
    \cost(f(C))\le A\bigpar{\cost(C)+\Gamma},
\]
where $A\le (1+\varepsilon)^2$ and
$A\Gamma\le \frac{\varepsilon}{2k}\OPT .$
Then, with probability at least $4/5$, for every partition $\CALP$ of $X$,
\begin{gather*}
    (1-\varepsilon)\cost(\CALP)
    \le
    (1+\varepsilon)^2\cost(f(\CALP)),
    \\
    \cost(f(\CALP))
    \le
    (1+\varepsilon)^3\cost(\CALP).
\end{gather*}
\end{lemma}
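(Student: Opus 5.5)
The natural plan is a "bad cluster extraction" argument in the style of Makarychev, Makarychev and Razenshteyn. We argue by contradiction: if the failure event has probability larger than $1/5$, we build a single random subset $C$ that violates the hypothesis with probability larger than $1/5$.

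Let $\mathcal{E}$ be the event, determined by $f$, that some $k$-partition $\CALP$ violates one of the two conclusions. On $\mathcal{E}$, fix such a violating partition $\CALP_f=\{P_1,\ldots,P_k\}$ by a deterministic rule. Choose $C$ to be the part $P_i$ that is "worst" in the sense made precise below. Off $\mathcal{E}$, set $C=\emptyset$. The subset $C$ depends on $f$, which the hypothesis explicitly allows.

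The key step is to show that if every part $P_i$ satisfied both inequalities of the hypothesis, then no violation could occur. So suppose every part satisfies them and sum over $i$. This gives
\[
\cost(\CALP)\le A\bigl(\cost(f(\CALP))+k\Gamma\bigr)\le A\cost(f(\CALP))+\tfrac{\varepsilon}{2}\OPT .
\]
Since $\OPT\le\cost(\CALP)$, we get
\[
(1-\varepsilon/2)\cost(\CALP)\le(1+\varepsilon)^2\cost(f(\CALP)),
\]
which is stronger than the first conclusion. Symmetrically,
\[
\cost(f(\CALP))\le A\cost(\CALP)+\tfrac{\varepsilon}{2}\OPT\le\bigl((1+\varepsilon)^2+\varepsilon/2\bigr)\cost(\CALP)\le(1+\varepsilon)^3\cost(\CALP).
\]
It follows that on $\mathcal{E}$ at least one part of $\CALP_f$ violates at least one of the two per-cluster inequalities. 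We let $C$ be the first such part.

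Then, whenever $\mathcal{E}$ occurs, the subset $C$ fails the hypothesis. The hypothesis says this failure has probability at most $1/5$, so $\Pr[\mathcal{E}]\le 1/5$, which proves the lemma.

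The main subtlety is that the hypothesis is stated for a single random subset and yields "with probability at least $4/5$". We have to apply it exactly once, to the single random set $C$ defined above, rather than union-bounding over the $k$ parts; union-bounding would lose a factor of $k$. Defining $C$ as the first violating part avoids this. We also need $C$ to be a legitimate random subset on the same probability space, which holds because it is a deterministic measurable function of $f$.

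A secondary point is the additive term. Summing $k$ copies of $A\Gamma$ needs $A\Gamma\le \varepsilon\OPT/(2k)$, and it is converted to a multiplicative error using $\OPT\le\cost(\CALP)$ for every $k$-partition $\CALP$. This handles both directions, including the case where $\cost(f(\CALP))$ is small.
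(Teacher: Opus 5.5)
Your proposal is correct and is essentially the paper's proof. Both define a single $f$-dependent random cluster: a part of a violating partition on the bad event, and $\emptyset$ otherwise. Both use the per-part slack $A\Gamma\le\frac{\varepsilon}{2k}\OPT\le\frac{\varepsilon}{2k}\cost(\CALP)$. The only difference is cosmetic: you argue the contrapositive by summing the hypothesis inequalities over the $k$ parts, while the paper averages the violated global inequality to extract a part that violates with slack $\frac{\varepsilon}{k}\cost(\CALP)$.
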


\begin{proof}
Denote the event in the statement by $D$.  We define a random cluster
$C\subseteq X$ as follows.  If $D$ happens, set $C=\emptyset$.  Otherwise, there
is a partition $\CALP=\{P_1,\ldots,P_k\}$ such that at least one of the
following holds:
\begin{gather*}
    \cost(\CALP)
    >
    (1+\varepsilon)^2\cost(f(\CALP))
    +
    \varepsilon\cost(\CALP)
    \ge
    A\cost(f(\CALP))+\varepsilon\cost(\CALP),
    \\
    \cost(f(\CALP))
    >
    (1+\varepsilon)^3\cost(\CALP)
    =
    (1+\varepsilon)^2\cost(\CALP)
    +
    \varepsilon(1+\varepsilon)^2\cost(\CALP)
    \ge
    A\cost(\CALP)+\varepsilon\cost(\CALP).
\end{gather*}
Thus there is some $i\in[k]$ such that one of the following holds:
\begin{align}
    \cost(P_i)
    &>
    A\cost(f(P_i))
    +
    \frac{\varepsilon}{k}\cost(\CALP),
    \label{eq:bad-cluster-original}
    \\
    \cost(f(P_i))
    &>
    A\cost(P_i)
    +
    \frac{\varepsilon}{k}\cost(\CALP).
    \label{eq:bad-cluster-image}
\end{align}
Set $C=P_i$ for such an index $i$.

If the random-subset preservation inequalities hold for this choice of $C$, then
\[
    \cost(P_i)
    \le
    A\cost(f(P_i))+A\Gamma
    \le
    A\cost(f(P_i))+\frac{\varepsilon}{2k}\cost(\CALP),
\]
and
\[
    \cost(f(P_i))
    \le
    A\cost(P_i)+A\Gamma
    \le
    A\cost(P_i)+\frac{\varepsilon}{2k}\cost(\CALP),
\]
because $\cost(\CALP)\ge\OPT$.  This contradicts
\eqref{eq:bad-cluster-original} and \eqref{eq:bad-cluster-image}.  Therefore, if event
$D$ fails, then the random-subset preservation lemma must fail for the random
cluster $C$.  Since this happens with probability at most $1/5$, the event $D$
holds with probability at least $4/5$, proving the lemma.
\end{proof}

\begin{proof}[Proof of \Cref{thm:main-jl-ddim}]
Let $\varepsilon'=(1+\varepsilon)^{1/3}-1=O(\varepsilon),$  and choose
\[
    \theta
    =
    \min\left\{
        \frac{(\varepsilon')^2}{20},
        \frac{1}{2(1+\varepsilon')^2k^2}
    \right\}.
\]
Let
\[
    \delta
    \le
    \min\left\{
        \frac{\theta^7}{600},
        \frac{(\varepsilon')^{\alpha\ddim}}{100k^2\log^2 n}
    \right\}.
\]
By \Cref{lem:gaussian-pair-tail}, for a sufficiently large constant $c>0$, the
Gaussian JL map in the theorem satisfies
\[
    \forall x,y \in X,\qquad
    \Pr\!\left[
        \norm{f(x)-f(y)}
        \notin
        (1\pm\varepsilon')\norm{x-y}
    \right]\le\delta.
\]
By \Cref{lem:gaussian-local-ball-tail}, the same map has the $\mu$-local
ball-tail property with
\[
    \mu
    \ge
    \beta\bigpar{\ddim\log(1/\varepsilon')+\log k+\log\log n}.
\]
We can thus apply \Cref{lem:ddim-random-subset-preservation}.
Given our choice of $\theta$ above and after rescaling constants,
we may choose $\Gamma=\varepsilon'\OPT/(2k(1+\varepsilon')^2)$.
Since $A\le(1+\varepsilon')^2$, this choice satisfies
$A\Gamma\le\varepsilon'\OPT/(2k)$, as required by
\Cref{lem:bad-cluster-extraction}.
Applying that lemma with $\varepsilon'$ proves the theorem.
\end{proof}

\subsection{Upper Bound for Aspect Ratio}

The proof has three steps. First, in
\Cref{lem:aspect-random-subset-preservation}, we assume the existence of a
$\lambda$-approximate reference partition in which every cluster contains
at least \(n/(2k)\) points, and use the sparse-distortion graph and
average distances within reference clusters to prove approximate
preservation of the cost of a random cluster, up to an additive error.
Second, in \Cref{lem:balanced-partition-approximation}, we use the bounded
aspect ratio to supply such a partition with $\lambda=4k^2\Delta$.
Finally, we use \Cref{lem:bad-cluster-extraction} again to obtain
simultaneous preservation of the cost of every $k$-partition.
It remains to set $\theta$ so that the additive error in
\Cref{lem:aspect-random-subset-preservation} satisfies the condition of
\Cref{lem:bad-cluster-extraction}, yielding the logarithmic dependence
on $\Delta$.

\begin{lemma}[Preservation of a Random Cluster]
\label{lem:aspect-random-subset-preservation}
Let $X\subseteq\R^d$ be an instance of min-sum $k$-clustering.  Let
$f:\R^d\to\R^t$ be an $(\varepsilon,\delta,\rho)$-random dimension reduction for
$X$, where $0<\varepsilon\le 1/2$ and
\[
    \theta\le \frac1{20},
    \qquad
    \delta\le
    \min\left\{\frac{\theta^7}{600},\frac1{10k^3}\right\},
    \qquad
    \rho\le\theta.
\]
Assume that there is a $\lambda$-approximation partition
$\CALP^*=\{P_1^*,\ldots,P_k^*\}$ such that $\min_{i\in[k]} |P_i^*|\ge \frac{n}{2k}.$
Then for every random subset $C\subseteq X$ which may depend on $f$, with
probability at least $4/5$,
\begin{align}
    \cost(C)
    &\le
    A\bigpar{
        \cost(f(C))
        +
        400(1+\varepsilon)\lambda\theta k\,\OPT
    },
    \label{eq:aspect-subset-first}
    \\
    \cost(f(C))
    &\le
    A\bigpar{
        \cost(C)
        +
        400(1+\varepsilon)\lambda\theta k\,\OPT
    },
    \label{eq:aspect-subset-second}
\end{align}
where $A=(1+5\theta)(1+2\varepsilon)$.
\end{lemma}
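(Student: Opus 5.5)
The plan mirrors the proof of \Cref{lem:ddim-random-subset-preservation}, with the net-rounding step replaced by averaging over reference clusters. Apply \Cref{thm:sparse-graph} to the distortion graph $G[C]$, using $\delta\le\theta^7/600$, to obtain $C'\subseteq C$ with $G[C']$ $\theta$-everywhere sparse and $\Pr[u\in C\setminus C']\le\theta$ for every $u\in X$. \Cref{lem:sparse-graph-preservation} then controls $\sum_{x,y\in C'}$ in both directions with factor $A$. Let $S=\{(x,y)\in C\times C: x\notin C' \text{ or } y\notin C'\}$, so that $\Pr[(x,y)\in S]\le 2\theta$ for every fixed pair.

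For a pair $(x,y)\in S$ with $x\in P_j^*$ and $y\in P_\ell^*$, I would replace $\norm{x-y}$ by the reference average
\[ D_{j\ell}=\frac{1}{|P_j^*||P_\ell^*|}\sum_{u\in P_j^*,v\in P_\ell^*}\norm{u-v}, \]
and $\norm{f(x)-f(y)}$ by the analogous average $D^f_{j\ell}$ of the images. The substitution error is controlled by the triangle inequality:
\[ \bigl|\norm{x-y}-D_{j\ell}\bigr|\le a_x+a_y, \qquad a_x=\frac1{|P_j^*|}\sum_{u\in P_j^*}\norm{x-u}, \]
and similarly with $a^f_x$ on the image side. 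The cluster averages themselves are preserved. There are only $k^2$ pairs $(j,\ell)$, and each $D_{j\ell}$ is a sum over the fixed set pair $P_j^*\times P_\ell^*$. By the $(\varepsilon,\delta)$ property and $\delta\le 1/(10k^3)$, a union bound shows that with probability at least $9/10$ every $D^f_{j\ell}\in(1\pm\varepsilon)D_{j\ell}$. On this event,
\[ \sum_{(x,y)\in S}D^f \approx_{1\pm\varepsilon}\sum_{(x,y)\in S}D, \]
which is compatible with the factor $A$.

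The remaining task is to bound the expected error terms
\[ \E\Bigl[\sum_{(x,y)\in S}(a_x+a_y)\Bigr], \qquad \E\Bigl[\sum_{(x,y)\in S}(a^f_x+a^f_y)\Bigr]. \]
For the first, $\Pr[(x,y)\in S]\le2\theta$ gives at most $2\theta\cdot 2n\sum_x a_x$. By the balance assumption $|P_j^*|\ge n/(2k)$,
\[ n\sum_x a_x\le 2k\sum_j\sum_{x,u\in P_j^*}\norm{x-u}\le 2k\lambda\,\OPT, \]
so the first expectation is $O(\theta k\lambda\OPT)$. For the image side, $a^f_x$ is a sum of image distances, and these are not deterministically controlled. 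Here I would split
\[ \norm{f(x)-f(u)}\le(1+\varepsilon)\norm{x-u}+\max\{0,\norm{f(x)-f(u)}-(1+\varepsilon)\norm{x-u}\}, \]
and use the $\rho$-excess property with $\rho\le\theta$ to bound the expected excess by $\theta\norm{x-u}$. The total is then $O((1+\varepsilon)\theta k\lambda\OPT)$ without any need for independence from $S$, simply by bounding the sum over all of $X$ deterministically in expectation. A Markov bound at probability $1/20$ for each of the two error terms, together with the averages event, gives overall probability $4/5$ and the constant $400$.

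The main obstacle is this image-side term. The set $S$ depends on $f$, so the expectations cannot be multiplied directly. I would handle it by bounding $\sum_{(x,y)\in S}a^f_x$ by the full $f$-dependent sum of $a^f_x$ weighted by $|\{y:(x,y)\in S\}|\le 2n\cdot\mathbf 1[x\notin C']+\#\{y\notin C'\}$. If this coupling still fails, the fallback is to use $a^f_x\le(1+\varepsilon)a_x+\text{excess}_x$ and bound the excess by its total over $X$ times $n$, relying on $\rho\le\theta$ to supply the $\theta$ factor directly. Combining the $C'$ part, the $S$ part with substituted averages, and the two error bounds, then rescaling constants, yields the two claimed inequalities.
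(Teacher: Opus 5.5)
Your proposal is correct and follows essentially the same route as the paper: you extract a sparse $C'$ via the sparse-subgraph theorem, replace pairs in $S$ by reference-cluster averages that are preserved on the $k^2$ fixed set pairs, and bound the image-side error by splitting $\norm{f(x)-f(u)}$ into $(1+\varepsilon)\norm{x-u}$, charged via $\Pr[(x,y)\in S]\le 2\theta$, plus the excess, charged by dropping the indicator and using $\rho\le\theta$. That split, which you call the ``fallback'' and already state in your third paragraph, is exactly the paper's decoupling, so the weighting idea at the start of your last paragraph is unnecessary; your Markov budget of $1/10+1/20+1/20$ gives the same probability $4/5$ and constant $400$ as the paper's single Markov bound on the total error.
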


\begin{proof}
Let $\OPT$ denote the optimal cost.  Since $\CALP^*$ is a $\lambda$-approximation,
$\cost(\CALP^*)\le \lambda\cdot\OPT$. Let $\Lambda$ be the event that for all $i,j\in[k]$,
\[
    (1-\varepsilon)
    \sum_{x\in P_i^*,\,y\in P_j^*}\norm{x-y}
    \le
    \sum_{x\in P_i^*,\,y\in P_j^*}\norm{f(x)-f(y)}
    \le
    (1+\varepsilon)
    \sum_{x\in P_i^*,\,y\in P_j^*}\norm{x-y}.
\]
Since $f$ is an $(\varepsilon,\delta)$-random dimension reduction,
\[
    \Pr(\Lambda)\ge 1-k^2\delta\ge 9/10.
\]

Let $C$ be a random subset of $X$, and let $G[C]$ be the distortion graph of
$C$.  By \Cref{thm:sparse-graph}, there exists $C'\subseteq C$ such that
$G[C']$ is $\theta$-everywhere sparse and
\[
    \forall u\in X, \qquad \Pr\bigsqpar{u\in C\setminus C'}\le \theta.
\]
Let
\[
    S=\{(x,y)\in C\times C:\ x\notin C'\text{ or }y\notin C'\}.
\]
For $x\in X$, let $P_x^*$ be the cluster of $\CALP^*$ containing $x$.  Define
\[
    g(x,y)=
    \begin{cases}
        \norm{x-y}, & x,y\in C',\\[1ex]
        \displaystyle
        \frac{1}{|P_x^*||P_y^*|}
        \sum_{x'\in P_x^*,\,y'\in P_y^*}\norm{x'-y'},
        & \text{otherwise},
    \end{cases}
\]
and
\[
    g_f(x,y)=
    \begin{cases}
        \norm{f(x)-f(y)}, & x,y\in C',\\[1ex]
        \displaystyle
        \frac{1}{|P_x^*||P_y^*|}
        \sum_{x'\in P_x^*,\,y'\in P_y^*}\norm{f(x')-f(y')},
        & \text{otherwise}.
    \end{cases}
\]
Let
\[
    \cost_g(C)=\sum_{x,y\in C}g(x,y),
    \qquad
    \cost_g(f(C))=\sum_{x,y\in C}g_f(x,y).
\]
Using \Cref{lem:sparse-graph-preservation} for $C'$ and
conditioning on the event $\Lambda$, we obtain
\begin{align}
    \cost_g(f(C))
    &\le
    (1+5\theta)(1+2\varepsilon)
    \sum_{x,y\in C'}\norm{x-y}
    \notag\\
    &\quad
    +(1+\varepsilon)
    \sumbad
    \frac{1}{|P_x^*||P_y^*|}
    \sum_{x'\in P_x^*,\,y'\in P_y^*}\norm{x'-y'}
    \notag\\
    &\le
    A\cost_g(C),
    \label{eq:aspect-g-forward}
\end{align}
and similarly
\begin{equation}
\label{eq:aspect-g-backward}
    \cost_g(C)
    \le
    A\cost_g(f(C)).
\end{equation}

The triangle inequality gives
\begin{align}
    \forall x,y \in X,\qquad
    &\left|
    \norm{x-y}
    -
    \frac{1}{|P_x^*||P_y^*|}
    \sum_{x'\in P_x^*,\,y'\in P_y^*}\norm{x'-y'}
    \right|
    \notag\\
    &\qquad\le
    \frac1{|P_x^*|}
    \sum_{x'\in P_x^*}\norm{x-x'}
    +
    \frac1{|P_y^*|}
    \sum_{y'\in P_y^*}\norm{y-y'}.
    \label{eq:aspect-original-rounding}
\end{align}
The same inequality, with $f(x)$ and $f(y)$ in place of $x$ and $y$, bounds $\left|\norm{f(x)-f(y)}-g_f(x,y)\right|$ similarly.  Combining these inequalities gives
\[
    \cost(C)\le A\bigpar{\cost(f(C))+\mathcal D},
    \qquad
    \cost(f(C))\le A\bigpar{\cost(C)+\mathcal D},
\]
where
\begin{align*}
    \mathcal D
    =
    \sumbad
    \Bigg(
    &\frac1{|P_x^*|}\sum_{x'\in P_x^*}\norm{x-x'}
    +
    \frac1{|P_y^*|}\sum_{y'\in P_y^*}\norm{y-y'}
    \\
    &+
    \frac1{|P_x^*|}\sum_{x'\in P_x^*}\norm{f(x)-f(x')}
    +
    \frac1{|P_y^*|}\sum_{y'\in P_y^*}\norm{f(y)-f(y')}
    \Bigg).
\end{align*}
It remains to show that $\mathcal D$ cannot be too large.  We have
\[
    \forall x,y \in X,\qquad
    \Pr\bigsqpar{(x,y)\in S}\le 2\theta.
\]
Hence
\begin{align}
    \E\left[
        \sumbad
        \frac1{|P_x^*|}
        \sum_{x'\in P_x^*}\norm{x-x'}
    \right]
    &\le
    2\theta\sum_{x,y\in X}
        \frac1{|P_x^*|}
        \sum_{x'\in P_x^*}\norm{x-x'}
    \notag\\
    &=
    2\theta\sum_{x\in X}
        \frac{n}{|P_x^*|}
        \sum_{x'\in P_x^*}\norm{x-x'}
    \notag\\
    &\le
    4\theta k
    \sum_{x\in X}\sum_{x'\in P_x^*}\norm{x-x'}
    \notag\\
    &\le
    8\theta k\lambda\cdot\OPT.
    \label{eq:aspect-original-expectation}
\end{align}
In addition, for every $x,y\in X$ we have
\begin{align}
    &\E\left[
        \mathbf{1}_{(x,y)\in S}
        \frac1{|P_x^*|}
        \sum_{x'\in P_x^*}\norm{f(x)-f(x')}
    \right]
    \notag\\
    &\quad\le
    \frac1{|P_x^*|}
    \sum_{x'\in P_x^*}
    \E\left[
        (1+\varepsilon)\mathbf{1}_{(x,y)\in S}\norm{x-x'}
    \right]
    \notag\\
    &\qquad
    +
    \frac1{|P_x^*|}
    \sum_{x'\in P_x^*}
    \E\left[
        \max\{\norm{f(x)-f(x')}-(1+\varepsilon)\norm{x-x'},0\}
    \right]
    \notag\\
    &\quad\le
    3(1+\varepsilon)\theta
    \frac1{|P_x^*|}
    \sum_{x'\in P_x^*}\norm{x-x'}.
    \label{eq:aspect-image-one-pair}
\end{align}
Summing this over all $x,y$ and using
$|P_x^*|\ge n/(2k)$ yields
\[
    \E\left[
        \sumbad
        \frac1{|P_x^*|}
        \sum_{x'\in P_x^*}\norm{f(x)-f(x')}
    \right]
    \le
    12(1+\varepsilon)\theta k\lambda\cdot\OPT.
\]
The same bounds hold for the symmetric $y$-terms, and therefore
\[
    \E[\mathcal D]
    \le
    40(1+\varepsilon)\theta k\lambda\cdot\OPT.
\]
By Markov's inequality,  $\Pr \bigsqpar{\mathcal D\le400(1+\varepsilon)\theta k\lambda\cdot\OPT}\geq 9/10.$ By a union bound, with probability at least $4/5$, $\Lambda$
occurs and the above bound on $\mathcal D$ holds, proving the lemma.
\end{proof}

\begin{lemma}[Balanced Partition with Near-Optimal Cost]
\label{lem:balanced-partition-approximation}
Let $X$ be an instance of min-sum $k$-clustering with aspect ratio at most $\Delta$.  Let
$\CALP=\{P_1,\ldots,P_k\}$ be a partition such that
    $\underset{i\in [k]}{\min} \,|P_i|\ge \frac{n}{2k}$. Then
\[
    \cost(\CALP)\le 4k^2\Delta\,\OPT.
\]
\end{lemma}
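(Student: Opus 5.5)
Let $d_{\min}=\min_{x\neq y\in X}\norm{x-y}$, so every pairwise distance lies in $[d_{\min},\Delta d_{\min}]$ (for $n\ge 2$ the aspect ratio of $X$ is at most $\Delta$). The plan is to bound $\cost(\CALP)$ from above crudely by $\Delta d_{\min}$ times the number of intra-cluster ordered pairs, and to bound $\OPT$ from below by $d_{\min}$ times the number of intra-cluster ordered pairs in an optimal partition. Comparing the two pair counts should give a factor $O(k^2)$.

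\textbf{Upper bound.} Every cluster $P_i$ has at most $n$ points, and the total number of ordered pairs inside clusters is $\sum_i |P_i|^2\le n\sum_i|P_i| = n^2$. Therefore
\[
    \cost(\CALP)\le \Delta\, d_{\min}\, n^2 .
\]

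\textbf{Lower bound on $\OPT$.} Fix an optimal partition $\CALP^*=\{P_1^*,\ldots,P_k^*\}$. Each ordered pair of distinct points in the same cluster contributes at least $d_{\min}$. Hence
\[
    \OPT\ge d_{\min}\sum_i |P_i^*|(|P_i^*|-1).
\]
By Cauchy--Schwarz, $\sum_i |P_i^*|^2\ge n^2/k$, so $\OPT\ge d_{\min}(n^2/k-n)$.

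\textbf{Combining, and the main obstacle.} If $n\ge 2k$, then $n^2/k - n\ge n^2/(2k)$. This gives $\OPT\ge d_{\min}n^2/(2k)$ and therefore $\cost(\CALP)\le 2k\Delta\,\OPT\le 4k^2\Delta\,\OPT$.

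The case $n<2k$ is the delicate point. There the hypothesis $|P_i|\ge n/(2k)$ allows singleton clusters, and $\OPT$ may be zero when $n\le k$. I would settle this case by a separate, sharper count rather than the Cauchy--Schwarz bound:
\begin{itemize}
\item When $n\le k$, the partition $\CALP$ must have all singletons (otherwise some part would be empty), so both costs are $0$.
\item When $k<n<2k$, I would use the hypothesis more carefully: the balancedness forces $\CALP$ to be nonempty in every part, so $\cost(\CALP)$ counts at most $2(n-k+1)^2/\ldots$ pairs.
\end{itemize}
More simply, in the case $k<n<2k$ the number of intra-cluster ordered pairs in $\CALP$ is at most $(n-k+1)(n-k)$, attained when one part holds $n-k+1$ points and the rest are singletons. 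Meanwhile $\OPT$ has at least $2(n-k)$ such pairs, since pigeonhole forces at least $n-k$ "excess" points beyond one per part, and each excess point is paired with at least one other point in its cluster. The ratio of these counts is at most $(n-k+1)/2\le k$, giving $\cost(\CALP)\le k\Delta\,\OPT$. I expect this small-$n$ regime to be the only step needing care; the generous $4k^2$ constant absorbs the slack everywhere else.
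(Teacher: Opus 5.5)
Your proof is correct and follows the same pair-counting route as the paper. Like the paper, you bound $\cost(\CALP)$ above by $\Delta d_{\min} n^2$ and bound $\OPT$ below by $d_{\min}$ times the number of intra-cluster ordered pairs of an optimal partition. The paper counts only the largest optimal cluster, getting $\OPT\ge d_{\min}n^2/(4k^2)$; your Cauchy--Schwarz count over all clusters yields the sharper factor $2k\Delta$.

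Your explicit treatment of $n<2k$, which the paper dismisses as trivial, is a real addition. That case is the only place the hypothesis $|P_i|\ge n/(2k)$ is actually used, since it forces every part to be nonempty. Two small fixes: delete the garbled first bullet for $k<n<2k$, which your ``more simply'' count supersedes, and note that $n<k$ is vacuous rather than all-singletons.
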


\begin{proof}
Normalize the minimum positive distance in $X$ to be $1$.  Then the maximum
distance is at most $\Delta$.  Let $\CALP^{\mathrm{opt}}$ be an optimal
partition.  Some cluster of $\CALP^{\mathrm{opt}}$ has size at least $n/k$, and
therefore the optimum pays for at least
\[
    \frac{n}{k}\left(\frac{n}{k}-1\right)
    \ge
    \frac{n^2}{4k^2}
\]
positive-distance pairs, assuming $n\ge 2k$ (the proof follows trivially otherwise).  Thus
\[
    \OPT\ge \frac{n^2}{4k^2}.
\]
At the same time, every partition has at most $n^2$ ordered pairs inside its
clusters, and each pair contributes at most $\Delta$.  Hence $\cost(\CALP)\le n^2\Delta\le 4k^2\Delta\,\OPT.$
\end{proof}

\begin{proof}[Proof of \Cref{thm:main-jl-aspect-ratio}]
Let
$\varepsilon'=(1+\varepsilon)^{1/3}-1=O(\varepsilon)$ and $\,\lambda=4k^2\Delta.$
Choose
\[
    \theta
    =
    \min\left\{
        \frac{(\varepsilon')^2}{20},
        \frac{\varepsilon'}{800(1+\varepsilon')^3\lambda k^2}
    \right\},
\]
and let
\[
    \delta\le
    \min\left\{\frac{\theta^7}{600},\frac1{10k^3}\right\},
    \qquad
    \rho\le\theta.
\]
By \Cref{lem:gaussian-set-pair-preservation}, there is a sufficiently large
constant $c>0$ such that, a Gaussian JL map $f:\R^d\to\R^t$ with $t\ge c\varepsilon^{-2}\bigpar{\log(1/\varepsilon)+\log k+\log\Delta}$ is an $(\varepsilon',\delta)$-random dimension reduction.
Additionally, by \Cref{lem:gaussian-one-sided-excess},  $f$ is an $(\varepsilon',\delta,\rho)$-random dimension reduction.

Take any balanced partition $\CALP^*$ satisfying
$\min_i |P_i^*|\ge n/(2k)$.  By
\Cref{lem:balanced-partition-approximation}, it achieves $\lambda$-approximation. Apply \Cref{lem:aspect-random-subset-preservation} with $\varepsilon'$. With the above choice of $\theta$, setting
$\Gamma=400(1+\varepsilon')\theta k\lambda\cdot\OPT$ gives
$A\Gamma\le \varepsilon'\OPT/(2k)$, as required by
\Cref{lem:bad-cluster-extraction}.
Applying that lemma and rescaling constants proves the theorem.
\end{proof}

\subsection{Generalization to power $z$}
\label{sec:l2p-generalization}
We briefly record how the dimension-reduction argument extends from
Euclidean distances to their \(z\)-th powers. Fix \(z\ge 1\), and define
\[
    \cost_z(\calP)
    =
    \sum_i\sum_{x,y\in P_i}\norm{x-y}^z.
\]
Write $\OPT_z$ for the optimal value under $\cost_z$.
We shall use that for all $a,b\ge0$ and $0<\gamma\le1$,
\begin{equation}
    (a+b)^z\le(1+\gamma)a^z+
    \left(\frac{2z}{\gamma}\right)^{z-1}b^z,
    \qquad
    (a+b)^z\le2^{z-1}(a^z+b^z). \label{eq:l2p-inequalities}
\end{equation}

\paragraph{Doubling dimension.}
We extend \Cref{thm:main-jl-ddim} to preserve $\cost_z(\calP)$
within a factor of $1\pm\varepsilon$, simultaneously for all
$k$-partitions $\calP$.
In its proof, use $\varepsilon'=\Theta(\varepsilon/z)$ as the
distortion parameter.
Apply \Cref{lem:gaussian-pair-tail} with this parameter in the
derivation of \eqref{eq:ddim-net-preservation}.
Then $(1\pm\varepsilon')^z=1\pm O(\varepsilon)$.

Choose each reference center $c_j^*$ to minimize
$\sum_{y\in P_j^*}\norm{y-c_j^*}^z$, and replace $\OPT$ by
$\OPT_z^{1/z}$ in the radii and in
\eqref{eq:ddim-radius-bound}--\eqref{eq:ddim-radius-sum}.
Use $(\varepsilon' r_i/100)$-nets and apply \Cref{lem:gaussian-local-ball-tail} to balls of radius
$\varepsilon' r_i/100$ to obtain \eqref{eq:ddim-net-ball}
with parameter $\varepsilon'$. 
Applying \eqref{eq:l2p-inequalities} to the inequalities
in \eqref{eq:ddim-rounding-original} and
\eqref{eq:ddim-rounding-image} gives the corresponding bounds
for $z$-th powers. In particular,
\[
\begin{aligned}
    \norm{x-y}^z
    &\le(1+\gamma)\norm{u(x,y)-u(y,x)}^z\\
    &\quad+
    \left(\frac{4z}{\gamma}\right)^{z-1}
    \bigl(\norm{x-u(x,y)}^z+\norm{y-u(y,x)}^z\bigr),
\end{aligned}
\]
And also
\[
\begin{aligned}
    \norm{u(x,y)-u(y,x)}^z
    &\le (1+\gamma)\norm{x-y}^z\\
    &\quad+
    \left(\frac{4z}{\gamma}\right)^{z-1}
    \bigl(\norm{x-u(x,y)}^z+\norm{y-u(y,x)}^z\bigr).
\end{aligned}
\]
Take $\gamma=\varepsilon/10$ and in the definitions of $g$ and $g_f$ in the proof of
\Cref{lem:ddim-random-subset-preservation}, raise each norm
to its $z$-th power, including the norms inside the averages.
Denote the resulting costs by $\cost_{g,z}(C)$ and
$\cost_{g,z}(f(C))$.

Summing the above bounds  over all pairs in $C$ bounds
$\cost_z(C)$ in terms of $\cost_{g,z}(C)$ and
$\cost_{g,z}(C)$ in terms of $\cost_z(C)$.
The corresponding bounds hold for
$\cost_z(f(C))$ and $\cost_{g,z}(f(C))$.
The inequality $(a+b)^z\le 2^{z-1}(a^z+b^z)$ introduces
a factor $2^{z-1}$ in \eqref{eq:distorted_cost_bound}.
Consequently, the analogues of
\eqref{eq:ddim-cost-g-forward}--\eqref{eq:ddim-cost-g-backward}
for $\cost_{g,z}$ have multiplicative factor
$(1+\varepsilon)(1+2^{z+2}\theta)$, after rescaling constants.
We therefore add the condition
$\theta\le\varepsilon/(100\cdot2^{z+2})$
to the choice of $\theta$ in the proof of
\Cref{thm:main-jl-ddim}.
The application of \Cref{thm:sparse-graph} is unchanged.

Combining the $z$-th power versions of
\eqref{eq:ddim-cost-g-forward}--\eqref{eq:ddim-cost-g-backward}
and \eqref{eq:ddim-compare-one}--\eqref{eq:ddim-compare-four}
gives an upper bound on $\cost_z(C)$ in terms of
$\cost_z(f(C))$ and an upper bound on $\cost_z(f(C))$
in terms of $\cost_z(C)$, each with multiplicative factor
$1+O(\varepsilon)$ and additive term
$O\!\left(
    \varepsilon\sum_{(x,y)\in S}\norm{y-c^*(y)}^z
    +\varepsilon\OPT_z/n^8
\right)$.
Applying Markov's inequality to the $z$-th power version of
\eqref{eq:ddim-charge-expectation} then gives
\eqref{eq:ddim-subset-first} and \eqref{eq:ddim-subset-second}
with $\cost_z$ and $\OPT_z$, with probability at least $4/5$.
Thus the hypotheses of \Cref{lem:bad-cluster-extraction}
are satisfied, and its proof applies to $\cost_z$.

Since the pairwise distance estimates use
$\varepsilon'=\Theta(\varepsilon/z)$,
\Cref{lem:gaussian-pair-tail} gives the leading factor $z^2$.
The net size contributes $\ddim\log(z/\varepsilon)$, while
the additional restriction on $\theta$ contributes $O(z)$
to $\log(1/\delta)$ through $\delta\le\theta^7/600$.
Together with \Cref{lem:gaussian-local-ball-tail},
this gives the target dimension
$
    t=O\!\left(
        z^2\varepsilon^{-2}
        \bigpar{\ddim\log(z/\varepsilon)+\log k+\log\log n+z}
    \right).
$

We next consider \Cref{thm:main-jl-aspect-ratio}, showing that
its guarantees also extend to preserving $\cost_z(\calP)$
within a factor of $1\pm\varepsilon$, simultaneously for all
partitions $\calP$ of $X$ into $k$ clusters.
\paragraph{Aspect ratio.}
For the proof of \Cref{thm:main-jl-aspect-ratio}, we require
that, for every fixed $A,B\subseteq X$, with probability at least $1-\delta$, 
\[
    \sum_{x\in A,\,y\in B}\norm{f(x)-f(y)}^z
    \in
    (1\pm\varepsilon)
    \sum_{x\in A,\,y\in B}\norm{x-y}^z.
\]

To this end we apply \cite[Theorem~3.5]{MakarychevMR19} with one cluster,
distortion parameter $\varepsilon'$, and failure probability
$\delta$, to the multiset
$Z_{A,B}=\{\pm(x-y):x\in A,\,y\in B\}$.
By symmetry and convexity, the origin is an optimal center
both before and after applying $f$, so the one-cluster costs
are twice the two sums above.
Thus it suffices to take
$ \label{eq:l2p-aspect-set-pair-dimension}
    t=O\!\left(
        z^2\varepsilon^{-2}
        \bigpar{\log(1/\delta)+z\log(z/\varepsilon)+z^2}\right).$

We also need that, for every $x,y\in X$,
\begin{equation}
    \E\!\left[
        \max\left\{
            \norm{f(x)-f(y)}^z
            -(1+\varepsilon)\norm{x-y}^z,0
        \right\}
    \right]
    \le \rho\norm{x-y}^z.\label{eq:l2p-aspect-excess}
\end{equation}
To this end, we apply \cite[Definition~2.1, Lemma~C.1, and Claim~C.2]{MakarychevMR19} with parameter $\varepsilon'$ and use $(1+\varepsilon')^z\le 1+\varepsilon$, hence it suffices to take
$
\label{eq:l2p-aspect-excess-dimension}
    t=O\!\left(
        z^2\varepsilon^{-2}\bigpar{z+\log(1/\rho)}
    \right).
$

Use $\varepsilon'$ in \Cref{def:distortion-graph} and apply
\Cref{lem:gaussian-pair-tail} with this parameter.
The proof of \Cref{lem:balanced-partition-approximation}
applies to $\cost_z$ with $\lambda_z=4k^2\Delta^z$.

The inequalities in \eqref{eq:l2p-inequalities} give the
following $z$-th power version of
\eqref{eq:aspect-original-rounding}:
\begin{equation}
\label{eq:l2p-aspect-rounding}
\begin{aligned}
    &\left|
        \norm{x-y}^z
        -\frac{1}{|P_x^*||P_y^*|}
        \sum_{x'\in P_x^*,\,y'\in P_y^*}
            \norm{x'-y'}^z
    \right|\\
    &\quad\le
    \gamma\norm{x-y}^z
    +\left(\frac{4z}{\gamma}\right)^{z-1}
    \left(
        \frac1{|P_x^*|}
        \sum_{x'\in P_x^*}\norm{x-x'}^z
        +
        \frac1{|P_y^*|}
        \sum_{y'\in P_y^*}\norm{y-y'}^z
    \right).
\end{aligned}
\end{equation}
Define
\[
\begin{aligned}
    \mathcal{D}_z=\sum_{(x,y)\in S}\Bigg(
    &\frac1{|P_x^*|}\sum_{x'\in P_x^*}
        \bigl(\norm{x-x'}^z+\norm{f(x)-f(x')}^z\bigr)\\
    &+\frac1{|P_y^*|}\sum_{y'\in P_y^*}
        \bigl(\norm{y-y'}^z+\norm{f(y)-f(y')}^z\bigr)
    \Bigg).
\end{aligned}
\]
Choosing the parameter $\rho$ in \eqref{eq:l2p-aspect-excess}
to satisfy $\rho\le\theta$, the $z$-th power calculations in
\eqref{eq:aspect-original-expectation} and
\eqref{eq:aspect-image-one-pair} give
\[
    \E[\mathcal D_z]
    \le 40(1+\varepsilon)\theta k\lambda_z\cdot\OPT_z.
\]
Hence, by Markov's inequality,
$\mathcal D_z\le400(1+\varepsilon)\theta k\lambda_z\cdot\OPT_z$
with probability at least $9/10$.
The inequality $(a+b)^z\le2^{z-1}(a^z+b^z)$ introduces
a factor $2^{z-1}$ in \eqref{eq:distorted_cost_bound}.
Consequently, the $z$-th power versions of
\eqref{eq:aspect-g-forward} and \eqref{eq:aspect-g-backward}
hold with multiplicative factor
$(1+2^{z+2}\theta)/(1-\varepsilon)$.
Summing \eqref{eq:l2p-aspect-rounding} over $S$ for the
original points and their images under $f$, and using these versions
of \eqref{eq:aspect-g-forward} and
\eqref{eq:aspect-g-backward}, gives
\[
\begin{aligned}
    \cost_z(C)
    &\le A\bigl(\cost_z(f(C))+\Gamma\bigr),\\
    \cost_z(f(C))
    &\le A\bigl(\cost_z(C)+\Gamma\bigr),
\end{aligned}
\]
where
\[
    A=\frac{(1+\gamma)(1+2^{z+2}\theta)}
            {(1-\gamma)(1-\varepsilon)},
    \qquad
    \Gamma=
    400(1+\varepsilon)
    \left(\frac{4z}{\gamma}\right)^{z-1}
    \lambda_z\theta k\OPT_z.
\]
For every random $C\subseteq X$ that may depend on $f$,
these inequalities hold with probability at least $4/5$:
the bound on $\mathcal D_z$ fails with probability at most
$1/10$, and preservation of the $k^2$ reference-cluster
sums fails with probability at most $k^2\delta\le1/10$.
Accordingly, in the proof of
\Cref{thm:main-jl-aspect-ratio}, choose
\[
    \theta=\min\left\{
        \frac{\varepsilon}{100\cdot2^{z+2}},
        \frac{\varepsilon}
             {800(1+\varepsilon)^3\lambda_z k^2}
        \left(\frac{\gamma}{4z}\right)^{z-1}
    \right\},
    \qquad
    \delta=\min\left\{\frac{\theta^7}{600},\frac1{10k^3}\right\},
    \qquad
    \rho=\theta.
\]
With $\gamma=\varepsilon/10$, these choices give
$A\le(1+\varepsilon)^2$ and
$A\Gamma\le\varepsilon\OPT_z/(2k)$.
Thus \Cref{lem:bad-cluster-extraction} applies to $\cost_z$.
These parameters satisfy
$\log(1/\delta)
=O\bigl(z\log(z/\varepsilon)+\log k+z\log\Delta\bigr)$.
This gives the target dimension $
    t=O\!\left(
        z^4\varepsilon^{-2}
        \bigpar{\log(z/\varepsilon)+\log k+z\log\Delta}
    \right).$
\subsection{Lower Bounds}

We next prove lower bounds for the two parameters that appear in the dimension
reduction theorems above.  The first construction rules out a target dimension that is independent of $\ddim(X)$.  The second construction rules out a constant target
dimension when $\ddim(X)=O(1)$ and the number of distance scales grows.
We allow multisets for convenience, as multiplicities can be simulated by placing distinct points arbitrarily close together.

\begin{theorem}[Lower Bounds for For-All Preservation]
\label{thm:lower-bounds}
There is a universal constant $c>0$ such that the following hold.
\begin{enumerate}
    \item For infinitely many $n$, there is a set $X\subseteq\R^d$ of size
    $n$ and doubling dimension $\ddim(X)=\Theta(\log n)$ such that a Gaussian JL map with target dimension
    $t\le c\varepsilon^{-2}\ddim(X)$ fails, with constant probability, to preserve the min-sum $2$-clustering
    cost of all partitions of $X$.
    \item For infinitely many $n$, there is a set $X\subseteq\R^d$ of size
    $n$ and constant doubling dimension such that a Gaussian JL map with target dimension
    $t\le c\varepsilon^{-2}\log\log n$ fails, with constant
    probability, to preserve the min-sum $2$-clustering cost of all partitions of $X$.
\end{enumerate}
\end{theorem}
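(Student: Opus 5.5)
We plan to prove both parts with the same gadget. Put a heavy point (the origin, with multiplicity $W$) next to a family of light points lying in mutually orthogonal directions. If $f$ shrinks even one of these directions by more than $1-\varepsilon$, we exhibit a $2$-partition whose cost also shrinks. Multiplicities are simulated by tiny perturbations, as remarked before the statement.

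The key probabilistic fact is the following. For orthonormal $e_1,\ldots,e_m$, the images $f(e_i)$ are the columns $Ge_i$ of a Gaussian matrix, and these are independent. Also $t\norm{f(e_i)}^2\sim\chi^2_t$. A standard anti-concentration (lower-tail) bound for $\chi^2_t$ gives
\[
\Pr\bigl[\norm{f(e_i)}\le 1-2\varepsilon\bigr]\ge e^{-C\varepsilon^2 t}
\]
for a universal $C$ and $\varepsilon$ small. By independence, $\Pr[\text{no } i \text{ shrinks}]\le (1-e^{-C\varepsilon^2 t})^m\le \exp(-m e^{-C\varepsilon^2 t})$. If $t\le c\varepsilon^{-2}\log m$ with $c\le 1/(2C)$, this is at most $e^{-\sqrt m}$. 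So with probability $1-o(1)$ some index $i$ satisfies $\norm{f(e_i)}\le 1-2\varepsilon$. The same bound applies to scaled vectors $R_ie_i$, since $f$ is linear.

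For part 1, take $X$ to be $W$ copies of $0$ together with $e_1,\ldots,e_m$, where $W=m^3$, so $n=\Theta(m^3)$. The doubling dimension is $\Theta(\log m)=\Theta(\log n)$. The lower bound holds because the $e_i$ form a $\sqrt2$-separated set of $m$ points inside a ball of radius $1$; the upper bound follows from the size of $X$. For the bad index $i$, consider the partition $\{W\cdot 0, e_i\}$ versus $\{e_j\}_{j\ne i}$.

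The first cluster costs $2W$ in $X$ and $2W\norm{f(e_i)}\le 2W(1-2\varepsilon)$ in the image. The second cluster costs at most $\sqrt2 m^2$ in $X$, and at most $O(m^2)$ in the image. For the image bound we need, with constant probability, that $\max_j\norm{f(e_j)}=O(1)$; this follows from a union bound on $\chi^2$ upper tails once $t\gtrsim\log m$, and for smaller $t$ we can simply use $\sqrt{t}$-scale bounds since $W=m^3$ swamps them. Because $W\gg m^2/\varepsilon$, the image cost is at most $(1-\varepsilon)$ times the original, which contradicts preservation. Since $\log m=\Theta(\ddim(X))$, this proves part 1.

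For part 2, keep the heavy origin but use exponentially separated scales: $v_i=R^ie_i$ for $i=1,\ldots,L$, with $R$ a large constant, and $W=L^2R^{L+1}/\varepsilon$. Then $n\approx W$, so $L=\Theta(\log n)$ and $\log L=\Theta(\log\log n)$.

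The main obstacle is the doubling bound. Consider a ball of radius $r$, centered either at $0$ or at some $v_j$. Every point $v_i$ with $R^i\le r/R$ lies within $r/2$ of the origin, so a single half-radius ball covers all of them. Only $O(1)$ points lie at scales comparable to $r$, and each needs its own ball. Hence $\ddim(X)=O(1)$.

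For the bad $i$, take the partition $\{W\cdot 0, v_i\}$ versus the rest. The first cluster contributes $2WR^i\ge 2WR$, and in the image it loses an $\Omega(\varepsilon)$ fraction of this. The second cluster costs at most $O(L^2R^L)$ before and after (with the same norm control as in part 1), which is at most $\varepsilon\cdot 2WR$ by the choice of $W$. Applying the tail computation with $m=L$ gives failure whenever $t\le c\varepsilon^{-2}\log L=\Theta(\varepsilon^{-2}\log\log n)$, which proves part 2.
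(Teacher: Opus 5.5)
Your proposal is correct and follows essentially the paper's proof: a heavy origin plus orthogonal unit vectors (respectively exponentially scaled vectors $R^ie_i$, where the paper uses $R=2$), independence of the Gaussian columns together with a $\chi^2_t$ lower-tail bound to find a shrunk direction, and the $2$-partition that puts that direction with the heavy origin. The differences are minor: your multiplicity $m^3$ in part~1 makes the light cluster negligible, so you need only an upper bound on its image cost rather than the paper's set-pair preservation, and the cleanest such bound is Markov on $\sum_{j\neq k}\norm{f(v_j)-f(v_k)}$, whose expectation is at most the original sum, which also covers small $t$ without the paper's lower restriction on $t$ and can replace your vague ``$\sqrt{t}$-scale'' remark; separately, your lower-tail bound needs a constant prefactor, $c_0e^{-C\varepsilon^2 t}$, since as stated it fails when $\varepsilon^2 t$ is tiny, but the conclusion $(1-c_0e^{-C\varepsilon^2t})^m\le e^{-c_0\sqrt m}$ is unaffected.
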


\begin{proof}
We first give the lower bound showing that a dependence on $\ddim(X)$ is
necessary.  Let $d$ be large and set $k=2$.  Define the multiset
\[
    X=\{0_d\}^n\cup\{e_1,\ldots,e_{\sqrt n}\},
\]
where $\{e_i\}$ are standard basis vectors in $\R^d$.
The distinct points of $X$ have doubling dimension $\Theta(\log n)$: the
$\sqrt n$ basis vectors are a constant-scale packing inside a constant-radius
ball, and the total number of distinct points is $\sqrt n+1$.

Consider the partition
\[
    C_1^*=\{0_d\}^n,
    \qquad
    C_2^*=\{e_1,\ldots,e_{\sqrt n}\}.
\]
The cost satisfies
\[
    \cost(C_2^*)
    =
    \sum_{i\ne j}\norm{e_i-e_j}
    =
    \sqrt2\,\sqrt n(\sqrt n-1)
    \le
    \sqrt2\,n,
\]
while $\cost(C_1^*)=0$.  Suppose, toward a contradiction, that there is another
optimal partition $\{C_1,C_2\}$.  Then some $e_i$ is clustered with at least one
copy of $0_d$.  Without loss of generality, assume $e_i\in C_1$ and all zeros
are in $C_1$, since moving all zeros into $C_1$ only lowers the cost.  If there
is another $e_j\in C_1$ with $j\ne i$, then $\cost(C_1)\ge 2n$, contradicting
optimality.  Hence $C_1=\{0_d\}^n\cup\{e_i\}$ and
$C_2=\{e_j:j\ne i\}$.  But then
\[
    \cost(C_1)=2n,
    \qquad
    \cost(C_2)=\sqrt2(\sqrt n-1)(\sqrt n-2)>n,
\]
so the total cost exceeds $2n$, again a contradiction.  Thus the displayed
partition is the unique optimum.

Let $\pi:\R^d\to\R^t$ be a JL projection with $O(\varepsilon^{-2}\log(1/\varepsilon))
    \le    t
    \le
    c\varepsilon^{-2}\ddim(X)$ for a sufficiently small universal constant $c>0$.  Since
$\ddim(X)=\Theta(\log n)$, the choice of $c$ will make the lower-tail
probability below large enough.
Set \(\eta=40\varepsilon\).  By \cite[Lemma~6.3]{JKSSY26}, for a
$\chi_t^2$ random variable $X_t$,
\[
    \Pr\!\left[X_t<\frac{t}{1+\eta}\right]
    \ge
    \frac{e^{-C_0\eta^2t}}{t}.
\]
Choosing the universal constant \(c\) sufficiently small gives
\(C_0\eta^2t\le \frac1{100}\log n\).  Taking \(n\) sufficiently large as a
function of \(\varepsilon\), the right-hand side is then at least
\(10/\sqrt n\).  Since
\[
    \sqrt{\frac{1}{1+\eta}}\le 1-10\varepsilon
    \qquad\text{for }0<\varepsilon<1/40,
\]
we obtain, for each $i$,
\[
    \Pr\bigl[\norm{\pi e_i}\le 1-10\varepsilon\bigr]
    \ge
    \frac{10}{\sqrt n}.
\]
By independence,
\[
    \Pr\Bigl[
        \forall i\in[\sqrt n],\
        \norm{\pi e_i}>1-10\varepsilon
    \Bigr]
    \le
    \frac1{10}.
\]
Thus, with probability at least $9/10$, there exists $i^*$ such that
\begin{equation}
\label{eq:general-short-vector}
    \norm{\pi e_{i^*}}\le 1-10\varepsilon.
\end{equation}
Additionally, by \cite[Lemma~2.1.9]{Danos21}, with probability at least
$9/10$,
\begin{equation}
\label{eq:general-pairwise-preservation}
    \sum_{i\ne j}\norm{\pi(e_i)-\pi(e_j)}
    \in
    (1\pm\varepsilon)
    \sum_{i\ne j}\norm{e_i-e_j}.
\end{equation}By a union bound, both events occur with probability at
least $8/10$.

Consider the partition
\[
    C_1=\{0_d\}^n\cup\{e_{i^*}\},
    \qquad
    C_2=\{e_i:i\ne i^*\}.
\]
Its original cost is
\[
    \cost(C_1,C_2)
    =
    (2+\sqrt2)n-3\sqrt2\sqrt n+O(1).
\]
After projection, using \eqref{eq:general-short-vector} and
\eqref{eq:general-pairwise-preservation},
\begin{align*}
    \cost(\pi(C_1),\pi(C_2))
    &\le
    2n(1-10\varepsilon)
    +(1+\varepsilon)\sqrt2\sqrt n(\sqrt n-1)
    \\
    &\le
    (2+\sqrt2)n-15\varepsilon n-4\sqrt2\sqrt n
    \\
    &<
    (1-\varepsilon)\cost(C_1,C_2),
\end{align*}
for $\varepsilon\in(3\sqrt2/\sqrt n,1/40)$.  Thus the clustering cost is not
preserved.

For the constant-doubling lower bound, let $m\le d$ and define
\[
    X
    =
    \{2^i e_i:\ i\in[m]\}
    \cup
    \{0_d\}^{m^3 2^m}.
\]
This set has constant doubling dimension and satisfies
$\log m=\Theta(\log\log |X|).$
Indeed, the nonzero points lie on exponentially separated scales, and every
ball is covered by a constant number of balls of half the radius; adding
repeated copies of the origin does not change the doubling dimension.
As before, applying the same tail estimate with \(\eta=40\varepsilon\), for
\[
    O(\varepsilon^{-2}\log(1/\varepsilon))
    \le
    t
    \le
    c\varepsilon^{-2}\log m,
\]
where \(c>0\) is sufficiently small.  Taking \(m\) sufficiently large as a
function of \(\varepsilon\), the lower-tail probability for each direction is at
least \(10/m\), and therefore with probability at least $9/10$ there exists
\(i^*\) such that $\norm{\pi e_{i^*}}\le 1-10\varepsilon,$
and with probability at least $9/10$,
\[
    \sum_{i\ne j}
    \norm{
        \pi\bigl(2^i e_i\bigr)
        -
        \pi\bigl(2^j e_j\bigr)
    }
    \in
    (1\pm\varepsilon)
    \sum_{i\ne j}
    \norm{2^i e_i-2^j e_j}.
\]
Consider the partition
\[
    C_1=\{0_d\}^{m^3 2^m}\cup\{2^{i^*}e_{i^*}\},
    \qquad
    C_2=\{2^i e_i:\ i\ne i^*\}.
\]
Let
\[
    A=m^3 2^m\cdot 2^{i^*},
    \qquad
    T=
    \sum_{i,j\ne i^*}
    \norm{2^i e_i-2^j e_j}.
\]
Then $\cost(C_1,C_2)=2A+T$.  Moreover,
\[
    S:=\sum_{i\ne j}\norm{2^i e_i-2^j e_j}
    =
    O(m2^m).
\]
Since $A\ge 2m^3 2^m$ and $\varepsilon>1/m$, for $m$ sufficiently large we have
$S\le \varepsilon A$.  Thus the contribution of the cluster containing the
origin dominates the cost of the second cluster.
After projection,
\begin{align*}
    \cost(\pi(C_1),\pi(C_2))
    &\le
    2(1-10\varepsilon)A
    \\
    &\quad
    +(1+\varepsilon)
    S
    \\
    &\le
    2(1-8\varepsilon)A
    \\
    &<
    (1-\varepsilon)(2A+T)
    =
    (1-\varepsilon)\cost(C_1,C_2),
\end{align*}
for $\varepsilon\in(1/m,1/40)$.  This proves the second item.
\end{proof}

One may hope to bypass these lower bounds by preserving only a restricted
family of solutions, for instance the \(O(1)\)-approximate partitions, while
arguing that all other partitions remain too expensive after the projection (to not
affect the projected optimum).  This relaxation does not remove the obstruction:
there are constant-doubling variants of the examples above in which, with
constant probability, even an \(O(1)\)-approximate partition has its min-sum cost
distorted by a \(1\pm\varepsilon\) factor.  Thus the difficulty is not only that
irrelevant high-cost partitions may become cheap after projection, but rather the family of
near-optimal partitions might not be multiplicatively preserved.  We omit the
details.

\section{Movement Coresets for Min-Sum \texorpdfstring{$k$}{k}-Clustering}
\label{sec:movement-coresets}
We prove a density-adaptive movement-coreset construction for doubling metrics. The proof first develops a fast annular construction, which is then used as an intermediate reduction before applying the density-adaptive movement scales.

\DensityAdaptiveCoreset*

We use the movement-coreset terminology from the introduction. Throughout this
section, unless stated otherwise, once a representative set \(S\) is
constructed, \(\pi\) denotes the nearest-neighbor movement rule induced by
\(S\), with ties broken by a fixed rule, and the weights are
\(w(s)=|\pi^{-1}(s)|\) for every \(s\in S\).

For a weighted set \((S,w)\) and a partition $\calQ$ of $S$, define
\[
    \cost_{S,w}(\calQ)
    =
    \sum_{Q\in\calQ}\sum_{s,t\in Q}w(s)w(t)\dist(s,t),
\]
and, once a movement rule \(\pi:X\to S\) is fixed, lift $\calQ$ to \(X\)
by assigning each $x\in X$ to the part of $\calQ$ containing $\pi(x)$.
For a weighted instance \((Y,w)\), let
\[
    \OPT_{Y,w}
    =
    \min_{\calQ}\cost_{Y,w}(\calQ),
\]
where the minimum is over all partitions of \(Y\) into at most \(k\) parts.

We will also use the balanced \(k\)-median view of min-sum.  For a weighted
cluster \(A\) with weights \(w_x\), write
\[
    W(A)=\sum_{x\in A}w_x
\]
and define its balanced \(k\)-median cost by
\[
    \cost_{\mathrm{bkm},w}(A)
    =
    W(A)\min_{p\in A}\sum_{x\in A}w_x\dist(x,p).
\]
The unweighted case is defined by setting every weight equal to \(1\).
For a partition, \(\cost_{\mathrm{bkm},w}\) denotes the sum of these
cluster costs.
This is equivalent, up to a factor \(2\), to the weighted min-sum cost
\[
    \sum_{x,y\in A}w_xw_y\dist(x,y).
\]
Indeed, by averaging over \(p\in A\) with probability \(w_p/W(A)\),
\[
    \cost_{\mathrm{bkm},w}(A)
    \le
    \sum_{x,y\in A}w_xw_y\dist(x,y),
\]
and by the triangle inequality, for every \(p\in A\),
\[
    \sum_{x,y\in A}w_xw_y\dist(x,y)
    \le
    2W(A)\sum_{x\in A}w_x\dist(x,p).
\]
Thus, whenever we pass between min-sum and balanced \(k\)-median below,
we use these two inequalities.

We will also use the following bicriteria convention throughout the section.
Some of the approximation algorithms we invoke may return \(O(k)\) clusters
rather than exactly \(k\).  This is sufficient for our purposes: replacing
\(k\) by \(O(k)\) in the constructions and analyses only changes constant
factors.  To keep the notation simple, we continue to write the constructions using \(k\) parts or \(k\) centers.

\subsection{A Fast Annular Coreset}
We begin with an annular coreset construction, which will later be used to reduce the input size before applying the density-adaptive construction.
\begin{restatable}[Fast annular coreset]{lemma}{FastAnnularCoreset}
\label{thm:annular-coreset}
Let $(X,\dist)$ be an $n$-point metric space of doubling dimension at most
$\ddim$.  For every $0<\varepsilon<\frac{1}{10}$, one can construct an
\(\varepsilon\)-movement coreset \((S,w)\) with \(S\subseteq X\) and
\[
    |S| \le \left(\frac{k}{\varepsilon}\right)^{O(\ddim)}\log n .
\]
The associated movement rule is the nearest-neighbor map. Moreover, the
construction can be implemented in randomized near-linear time.
\end{restatable}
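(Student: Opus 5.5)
The plan has two parts: first prove the structural coreset guarantee, assuming an $O(1)$-approximate reference partition is available, and then show how to obtain that reference partition in near-linear time. This mirrors the doubling-dimension dimension-reduction proof (\Cref{lem:ddim-random-subset-preservation}), with $f$ replaced by the movement rule $\pi$.

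\textbf{Structural step.} Let $\calP^*=\{P_1^*,\ldots,P_k^*\}$ be a reference partition with $\cost(\calP^*)\le\lambda\OPT$ for $\lambda=O(1)$, and let $c_j^*$ be a (weighted) $1$-median center of $P_j^*$.
\begin{itemize}
\item Set $r_0=\Theta(\lambda\OPT)$ and $r_i=r_0/2^i$ for $i\le m_0=O(\log n)$, stopping once $r_i\le\OPT/n^{10}$.
\item For every $j$ and $i$, take an $(\varepsilon r_i/(c\,k))$-net of $X\cap B(c_j^*,r_i)$, and let $S$ be the union of all these nets.
\item By \Cref{lem:doubling-packing-bound}, $|S|\le k\cdot(k/\varepsilon)^{O(\ddim)}\log n$, which is $(k/\varepsilon)^{O(\ddim)}\log n$.
\item The movement rule $\pi$ is nearest neighbor in $S$. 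Every point lies in $B(c_j^*,r_0)$, because $\norm{x-c_j^*}\le\cost(\calP^*)/|P_j^*|$ by \Cref{lem:k-median-centers-cost-upper-bound}.
\item Hence each $x\in P_j^*$ satisfies $\dist(x,\pi(x))\le(\varepsilon/k)\max\{\dist(x,c_j^*),\OPT/n^{10}\}$, up to constants.
\end{itemize}

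\textbf{Charging.} Fix an arbitrary partition $\calP$ and a pair $x,y$ in the same part. Orient the pair so that $|P^*_y|\ge|P^*_x|$. Then
\[
\bigl|\dist(x,y)-\dist(\pi x,\pi y)\bigr|\le\tfrac{\varepsilon}{k}\Bigl(\dist(x,c^*(x))+\dist(y,c^*(y))\Bigr)+\text{negligible}.
\]
\begin{itemize}
\item The $y$-term: summed over all $x$ in at most $k$ reference clusters of size $\le|P^*_y|$, it gives $\frac{\varepsilon}{k}\sum_y k|P_y^*|\dist(y,c^*(y))\le\varepsilon\lambda\OPT$, again by \Cref{lem:k-median-centers-cost-upper-bound}.
\item The $x$-term: this is the delicate one, since $x$ lies in the smaller cluster. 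Here I would use the same trick as in the dimension-reduction proof and net around $c^*(y)$ rather than $c^*(x)$. That is, I would show $\dist(x,\pi x)\le\frac{\varepsilon}{k}\cdot O(\dist(x,c^*_y))$, since $x$ lies in some annulus around $c^*_y$ and all annuli around all centers are netted.
\item Then $\dist(x,c_y^*)\le\dist(x,y)+\dist(y,c_y^*)$. The first piece is charged to the pair itself, giving an $\varepsilon$-fraction of $\cost(\calP)$, and the second piece is handled as above.
\item Summing, the total error is $O(\varepsilon)\cost(\calP)+O(\varepsilon\lambda)\OPT\le O(\varepsilon)\cost(\calP)$. Rescaling $\varepsilon$ finishes the structural part.
\end{itemize}

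\textbf{Near-linear-time step (main obstacle).} The difficulty is computing the reference partition and its centers in near-linear time. The plan is to bootstrap:
\begin{enumerate}
\item Using the Czumaj--Sohler near-linear-time machinery, or a crude $\poly(k\log n)$-approximate bicriteria balanced $k$-median obtained by sampling, get a coarse reference $\calP^{(0)}$ with approximation factor $\lambda_0=\poly(k\log n)$.
\item Build the annular coreset with accuracy $\Theta(1)/\lambda_0$. Its size is $(k\lambda_0)^{O(\ddim)}\log n$, and nearest-neighbor assignment runs in near-linear time via net-tree or approximate nearest-neighbor structures in doubling metrics.
\item On this weighted coreset, run a weighted bicriteria $O(1)$-approximate balanced $k$-median algorithm. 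Using the factor-2 equivalence with min-sum, this gives an $O(1)$-approximate reference for $X$; the Reduce property and the weight-unsplitting claim let us pull it back.
\item Rebuild the annular coreset with accuracy $\varepsilon$ around the new centers.
\end{enumerate}
Bicriteria outputs with $O(k)$ parts only affect constants, by the stated convention. The hardest detail is ensuring that the intermediate coreset is small enough that the balanced $k$-median step stays near-linear. If one round is insufficient, I would iterate, reducing $\lambda$ geometrically.
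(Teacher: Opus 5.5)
Your structural argument is essentially the paper's: annular nets at scale $\Theta(\varepsilon/k)\,r_i$ around reference centers, orientation toward the larger reference cluster, and a factor-$k$ charge to the reference cost. One small correction there: $x$ need not lie in any ball around $c^*_y$. The movement bound you actually use should be $O(\varepsilon/k)\max\{\min_i\dist(x,c_i^*),\OPT/n^{10}\}$.

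The gap is in the near-linear-time claim, which you flag but do not resolve. Your single round uses accuracy $\Theta(1/\lambda_0)$, i.e.\ $\eta=\Theta(1/(k\lambda_0))$. So the intermediate coreset has size $(k\lambda_0)^{O(\ddim)}\log n$, and the weighted bicriteria call (Jain--Vazirani on a graph with $O(|S|^2\log n)$ edges) costs roughly its square. Two starting points are on the table, and neither works:
\begin{itemize}
\item From the easily computed start, an $n$-approximate ordinary $k$-median whose induced partition has balanced $k$-median value $\widehat{\OPT}_0\le n^2\OPT$, this coreset can be all of $X$.
\item From your assumed near-linear $\poly(k\log n)$ start, which you take as a black box (the ``sampling'' alternative is not an argument), you get $(k\log n)^{O(\ddim)}$. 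The exponent of $\log n$ then grows with $\ddim$, unlike the paper's $2^{O(\ddim)}kn\log^3 n+\widetilde{O}((\rho k)^{O(\ddim)}\log^4 n)$.
\end{itemize}
Your fallback of iterating with shrinking $\lambda$ does not help as stated. With $\eta\propto 1/\lambda_t$, one round already yields $O(1)$, and the size is still governed by $\lambda_t$.

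The missing idea is to fix $\eta=1/(C\rho k)$ in every round and use the \emph{additive} form of annular preservation, before substituting $\widehat{\OPT}\le c\OPT$:
\[
\bigl|\cost(\calP)-\cost(\pi(\calP))\bigr|\le O(\eta)\cost(\calP)+O(\eta k\widehat{\OPT}_t)\le \tfrac{1}{100\rho}\bigl(\cost(\calP)+\widehat{\OPT}_t\bigr).
\]
Then run the $\rho$-approximate weighted bicriteria balanced $k$-median algorithm on $(S_t,w_t)$, unsplit via \Cref{clm:unsplitting-representatives}, lift to $\calP_{t+1}$, and compare with $\pi(\calP^\star)$. This gives $\cost(\calP_{t+1})\le O(\rho)\OPT+\tfrac14\widehat{\OPT}_t$, hence $\widehat{\OPT}_{t+1}\le O(\rho)\OPT+\tfrac12\widehat{\OPT}_t$. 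After $O(\log n)$ rounds from the $n^2$ start, the reference is $O(\rho)$-approximate. Meanwhile every intermediate coreset has size $(\rho k)^{O(\ddim)}\log n$, regardless of the current ratio.

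Relatedly, your analysis uses exact $1$-median centers via \Cref{lem:k-median-centers-cost-upper-bound}, and these are not computable in near-linear time. The charging only needs the balanced $k$-median value $\widehat{\OPT}=\sum_\ell|R_\ell|\sum_{y\in R_\ell}\dist(y,c_\ell)=O(\OPT)$ for whatever centers are actually used. That same bound also gives $\dist(x,c_\ell)\le\widehat{\OPT}<r_0$. In each round the paper obtains centers $a$ with $|P|\sum_{x\in P}\dist(x,a)\le 2\cost(P)$ by sampling $O(\log n)$ candidates per part.
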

\subsubsection{Construction}
We start by showing the construction, assuming we have a \(c\)-approximate
reference solution for the balanced \(k\)-median formulation.  Let $\calR=\{R_1,\ldots,R_k\}$
be the partition induced by this solution, let \(c_i\) be the center serving
\(R_i\), and let
\[
    \widehat{\OPT}
    =
    \sum_{i=1}^k |R_i|\sum_{x\in R_i}\dist(x,c_i)
    \le c\OPT .
\]
Let
\[
    r_0=20\widehat{\OPT},
    \qquad
    r_j=r_0/2^j,
    \qquad
    j=0,\ldots,L,
\]
where $L=\lceil\log_2(40n^{10})\rceil$.  Let $\eta$ be a parameter to be
chosen later.  For every $i\in[k]$ and every $j$,
take a maximal $\eta r_j$-packing of
\[
    X\cap B(c_i,r_j).
\]
Let $S$ be the union of these packings.

\begin{lemma}[Annular movement]
\label{lem:annular-movement}
For every $x\in X$,
\[
    \dist(x,\pi(x))
    \le
    2\eta
    \max\left\{
        \min_{i\in[k]}\dist(x,c_i),
        \frac{\widehat{\OPT}}{n^{10}}
    \right\}.
\]
\end{lemma}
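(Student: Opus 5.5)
The plan is to exhibit, for each $x\in X$, a point of $S$ at distance at most the claimed bound. Since $\pi(x)$ is the nearest point of $S$ to $x$, this bounds $\dist(x,\pi(x))$. Fix $x$, and let $i\in[k]$ attain $D:=\min_{i}\dist(x,c_i)$. All work is done inside the packings built around this center $c_i$.

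The first step is to show that $x$ lies in the outermost ball, i.e.\ $D\le r_0=20\widehat{\OPT}$. Let $x\in R_{i'}$. If $|R_{i'}|\ge 2$, then $\widehat{\OPT}\ge |R_{i'}|\dist(x,c_{i'})\ge \dist(x,c_{i'})\ge D$. If $R_{i'}=\{x\}$, then the center serving it is $x$ itself (centers lie in their clusters, as in the balanced $k$-median definition with $p\in A$), so $D=0$. In both cases $x\in B(c_i,r_0)$. (If $\widehat{\OPT}=0$, every point is its own center and the statement is trivial, with $x\in S$ at scale $0$; I will handle this degenerate case separately or assume distinct points force $\widehat{\OPT}>0$.)

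Next, choose the scale. Let $j$ be the largest index in $\{0,\dots,L\}$ with $D\le r_j$, so that $x\in X\cap B(c_i,r_j)$. The maximal $\eta r_j$-packing of this set is also an $\eta r_j$-covering: by maximality, some packing point lies within $\eta r_j$ of $x$. Hence $\dist(x,\pi(x))\le \eta r_j$. Now split into two cases. If $j<L$, then $D>r_{j+1}=r_j/2$, so $\eta r_j<2\eta D$. If $j=L$, then $r_L=r_0/2^L\le 20\widehat{\OPT}/(40n^{10})=\widehat{\OPT}/(2n^{10})$, so $\eta r_L\le 2\eta\,\widehat{\OPT}/n^{10}$ with room to spare. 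Combining the two cases gives the stated maximum.

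The only step with any subtlety is the first, ensuring $x$ falls inside $B(c_i,r_0)$. It uses the facts that $\widehat{\OPT}$ dominates each individual distance $|R|\dist(x,c)$ and that singleton clusters are centered at their own point. The rest is the standard packing-equals-net observation plus the geometric choice of scale.
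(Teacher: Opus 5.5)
Your proof is correct and follows the paper's argument: bound $\min_i\dist(x,c_i)\le\widehat{\OPT}\le r_0$ using the term $|R_{i'}|\dist(x,c_{i'})$ of the reference cost, then use that a maximal $\eta r_j$-packing around the nearest center covers $x$ at the appropriate scale. Your singleton case is unnecessary, since $|R_{i'}|\ge 1$ already gives $\dist(x,c_{i'})\le\widehat{\OPT}$. Taking the largest $j$ with $D\le r_j$ handles the finest scales a bit more precisely than the paper's $O(\eta\widehat{\OPT}/n^{10})$ case.
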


\begin{proof}
Let $i$ be such that $x\in R_i$.  Since \(x\in R_i\), the term \(|R_i|\dist(x,c_i)\) is part of the
balanced \(k\)-median reference cost, thus
\[
    \dist(x,c_i)\le \widehat{\OPT}.
\]
Hence
$\min_i \dist(x,c_i)\le \widehat{\OPT}$ for every $x$.
If $\min_i \dist(x,c_i)\le \widehat{\OPT}/n^{10}$, then $x$ belongs to the finest
ball around the nearest reference median, and maximality of the packing gives
movement at most $O(\eta\widehat{\OPT}/n^{10})$.  Otherwise choose $j$ such that
\[
    r_{j+1}
    <
    \min_i \dist(x,c_i)
    \le
    r_j .
\]
The packing at scale $r_j$ around a nearest reference median contains a point
within distance $\eta r_j\le 2\eta\min_i \dist(x,c_i)$ from $x$.
\end{proof}

\begin{lemma}[Annular preservation]
\label{lem:annular-preservation}
Let $X$ have doubling dimension at most $\ddim$, and let $\calR$ be a
$c$-approximate partition.  For every $0<\varepsilon<1/10$, the
annular construction with a sufficiently small
\[
    \eta=\Theta\!\left(\frac{\varepsilon}{ck}\right)
\]
gives an $\varepsilon$-movement coreset.
\end{lemma}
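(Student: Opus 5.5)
Fix an arbitrary $k$-partition $\calP=\{P_1,\ldots,P_k\}$ of $X$. The aim is to show $|\cost(\pi(\calP))-\cost(\calP)|\le\varepsilon\cost(\calP)$. For every ordered pair $x,y$ in a common part, the triangle inequality gives $|\dist(\pi(x),\pi(y))-\dist(x,y)|\le \dist(x,\pi(x))+\dist(y,\pi(y))$. So it suffices to bound the total movement error
\[
    E=\sum_{i}\sum_{x,y\in P_i}\bigl(\dist(x,\pi(x))+\dist(y,\pi(y))\bigr)
\]
by $\varepsilon\cost(\calP)$.

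As in the proof of \Cref{lem:ddim-random-subset-preservation}, I would orient each pair. For $x\in R_j$ and $y\in R_\ell$, name the endpoints so that $|R_\ell|\ge|R_j|$. By \Cref{lem:annular-movement}, each endpoint's movement is at most $2\eta(\min_i\dist(\cdot,c_i)+\widehat{\OPT}/n^{10})$. The $\widehat{\OPT}/n^{10}$ terms sum over at most $n^2$ pairs to $O(\eta c\OPT/n^8)\le\varepsilon\cost(\calP)/3$.

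For the endpoint $x$ in the smaller reference cluster, I would bound $\min_i\dist(x,c_i)\le \dist(x,c_\ell)\le\dist(x,y)+\dist(y,c_\ell)$. For the endpoint $y$, I would use $\min_i\dist(y,c_i)\le\dist(y,c_\ell)$. The per-pair error is then at most $4\eta(\dist(x,y)+\dist(y,c_\ell))$, up to the small additive term. Summing the $\dist(x,y)$ parts contributes $4\eta\cost(\calP)$.

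The main obstacle is the remaining term $\sum_{(x,y)}\dist(y,c(y))$, where $c(y)$ is the reference center of $y$. I would bound it by counting how often a given $y$ appears as the "larger" endpoint. Fix $y\in R_\ell$. Its partner $x$ ranges over points with $|R_{j(x)}|\le|R_\ell|$, and there are at most $k$ such reference clusters. So $y$ is charged at most $k|R_\ell|$ times, and the term is at most $\sum_\ell k|R_\ell|\sum_{y\in R_\ell}\dist(y,c_\ell)=k\widehat{\OPT}\le ck\OPT\le ck\cost(\calP)$. This is exactly the unweighted analogue of \eqref{eq:ddim-charge-expectation}, now deterministic, with the balanced $k$-median reference cost replacing \Cref{lem:k-median-centers-cost-upper-bound}. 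The charging over all pairs, not only pairs inside a common part, is what loses the factor $k$.

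Altogether $E\le O(\eta)(1+ck)\cost(\calP)+\varepsilon\cost(\calP)/3$. Choosing $\eta=\Theta(\varepsilon/(ck))$ with a small enough constant gives $E\le\varepsilon\cost(\calP)$, and the bound holds in both directions. The size bound $(ck/\varepsilon)^{O(\ddim)}\log n$ then follows from \Cref{lem:doubling-packing-bound} over the $k(L+1)$ balls. Finally, I would note that $\pi(\calP)$ is a partition of the weighted multiset, so the coreset condition of \Cref{def:movement-based-coresets} is exactly the bound on $E$.
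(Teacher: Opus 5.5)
Your proposal is correct and follows essentially the same route as the paper's proof. You orient each pair toward the endpoint in the larger reference cluster, apply \Cref{lem:annular-movement} with $\dist(x,c_\ell)\le\dist(x,y)+\dist(y,c_\ell)$, and charge each $y\in R_\ell$ at most $O(k|R_\ell|)$ times, giving a total error of $O(\eta)\cost(\calP)+O(\eta k\widehat{\OPT})\le O(\eta ck)\cost(\calP)$; the factor $2$ from ordered pairs and the size bound (handled separately in \Cref{lem:annular-size}) do not affect the argument.
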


\begin{proof}
Fix a partition $\calP$.  In each cluster, orient every unordered pair $\{x,y\}$ toward the endpoint whose cluster in $\calR$ is larger. Thus if the oriented pair is $(x,y)$ and $y\in R_\ell$, then the cluster containing $y$ has size at least the cluster containing $x$.

By \cref{lem:annular-movement}, for such a pair,
\[
    \dist(x,\pi(x))
    \le
    O(\eta)
    \left(
        \dist(x,c_\ell)+\frac{\widehat{\OPT}}{n^{10}}
    \right)
    \le
    O(\eta)
    \left(
        \dist(x,y)+\dist(y,c_\ell)+\frac{\widehat{\OPT}}{n^{10}}
    \right),
\]
and
\[
    \dist(y,\pi(y))
    \le
    O(\eta)
    \left(
        \dist(y,c_\ell)+\frac{\widehat{\OPT}}{n^{10}}
    \right).
\]
The triangle inequality therefore gives both
\begin{align}
\label{eq:annular-forward}
    \dist(x,y)
    &\le
    (1+O(\eta))\dist(\pi(x),\pi(y))
    +
    O(\eta)\dist(y,c_\ell)
    +
    O\!\left(\eta\frac{\widehat{\OPT}}{n^{10}}\right),
    \\
\label{eq:annular-backward}
    \dist(\pi(x),\pi(y))
    &\le
    (1+O(\eta))\dist(x,y)
    +
    O(\eta)\dist(y,c_\ell)
    +
    O\!\left(\eta\frac{\widehat{\OPT}}{n^{10}}\right).
\end{align}

It remains to sum the reference-median charge.  Fix $y\in R_\ell$.  The point
$y$ can be charged only by points whose reference cluster has size at most
$|R_\ell|$, and there are at most $k|R_\ell|$ such points.  Hence, over all
oriented pairs in all clusters,
\[
    \sum \dist(y,c_\ell)
    \le
    k\sum_{\ell=1}^k |R_\ell|\sum_{y\in R_\ell}\dist(y,c_\ell).
\]
By the definition of the balanced \(k\)-median reference cost,
\[
    \sum_{\ell=1}^k |R_\ell|\sum_{y\in R_\ell}\dist(y,c_\ell)
    =
    \widehat{\OPT}.
\]
Therefore the total charge is at most \(k\widehat{\OPT}\).
The floor term contributes at most
$O(\eta\widehat{\OPT}/n^8)$ after summing over all pairs.

Summing \eqref{eq:annular-forward} and \eqref{eq:annular-backward} over the
oriented pairs and using
\[
    \widehat{\OPT}\le c\OPT\le c\cost(\calP)
\]
shows that the total error is at most
\[
    O(\eta ck)\cost(\calP).
\]
Choosing the hidden constant in
$\eta=\Theta(\varepsilon/(ck))$ sufficiently small gives the desired
$\varepsilon$-movement guarantee.
\end{proof}

\begin{lemma}[Annular movement-charge bound]
\label{lem:annular-movement-charge}
Under the assumptions of \Cref{lem:annular-preservation}, the annular movement
rule \(\pi:X\to S\) can be made to satisfy, for every \(k\)-partition \(\calP\)
of \(X\),
\[
    \sum_{A\in\calP}|A|\sum_{x\in A}\dist(x,\pi(x))
    \le
    \varepsilon\cost(\calP),
\]
by choosing \(\eta=C_0\varepsilon/(ck)\)  for small enough constant $C_0>0$.
\end{lemma}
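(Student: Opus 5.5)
The plan is to reuse the charging argument of \Cref{lem:annular-preservation}, applied to the movement terms alone rather than to the full pairwise distortion. Fix a $k$-partition $\calP$ and a cluster $A\in\calP$. The left-hand side can be rewritten as a sum over ordered pairs,
\[
    |A|\sum_{x\in A}\dist(x,\pi(x))=\sum_{x,y\in A}\dist(x,\pi(x)).
\]
Each unordered pair $\{x,y\}\subseteq A$ therefore contributes $\dist(x,\pi(x))+\dist(y,\pi(y))$. As in the proof of \Cref{lem:annular-preservation}, we orient each pair as $(x,y)$ so that $y\in R_\ell$ and the reference cluster of $x$ has size at most $|R_\ell|$.

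For the oriented pair, \Cref{lem:annular-movement} together with the triangle inequality gives
\[
    \dist(x,\pi(x))+\dist(y,\pi(y))
    \le
    O(\eta)\left(\dist(x,y)+\dist(y,c_\ell)+\frac{\widehat{\OPT}}{n^{10}}\right).
\]
Here we bound $\min_i\dist(x,c_i)\le\dist(x,c_\ell)\le\dist(x,y)+\dist(y,c_\ell)$, and $\min_i\dist(y,c_i)\le\dist(y,c_\ell)$. We then sum over all oriented pairs in all clusters of $\calP$, handling the three terms separately.
\begin{itemize}
    \item The $\dist(x,y)$ terms sum to at most $\cost(\calP)$.
    \item The floor terms sum to $O(\eta\widehat{\OPT}/n^8)$, since there are at most $n^2$ ordered pairs.
    \item For the center terms, we use the same counting as before. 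A fixed $y\in R_\ell$ is charged only by partners $x$ whose reference cluster has size at most $|R_\ell|$, and there are at most $k|R_\ell|$ such $x$. Hence the total is at most $k\sum_\ell|R_\ell|\sum_{y\in R_\ell}\dist(y,c_\ell)=k\widehat{\OPT}$.
\end{itemize}
Using $\widehat{\OPT}\le c\OPT\le c\cost(\calP)$, the whole left-hand side is at most $O(\eta)(1+ck+c/n^8)\cost(\calP)=O(\eta ck)\cost(\calP)$.

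Finally, we set $\eta=C_0\varepsilon/(ck)$ with $C_0$ smaller than the reciprocal of the hidden constant, which yields the bound $\varepsilon\cost(\calP)$. This choice of $\eta$ is the same order as in \Cref{lem:annular-preservation}, so the coreset size is unchanged.

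The main subtlety is the orientation step. Each pair must be charged through the center of the larger reference cluster, and the movement of both endpoints must be absorbed. For $x$ this requires bounding $\min_i\dist(x,c_i)$ by $\dist(x,c_\ell)$ rather than by $\dist(x,c_j)$ for $x$'s own cluster. A secondary point is ordered versus unordered counting: $|A|\sum_{x\in A}$ counts each ordered pair once, i.e., each unordered pair twice, and this only changes constants. Diagonal pairs $x=y$ contribute $\dist(x,\pi(x))\le O(\eta)(\dist(x,c_\ell)+\widehat{\OPT}/n^{10})$. They are covered by the same charge because $x\in R_\ell$ is charged only once per cluster.
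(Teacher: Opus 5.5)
Your proposal is correct and follows essentially the same route as the paper: orient each pair toward the larger reference cluster, apply \Cref{lem:annular-movement}, and charge the center terms through the $k|R_\ell|$ count to obtain $O(\eta ck)\cost(\calP)$. The only difference is in the diagonal terms. The paper splits $|A|=(|A|-1)+1$ and bounds the diagonal separately by $\sum_{x\in X}\dist(x,\pi(x))\le O(\eta\widehat{\OPT})$, whereas you absorb it into the same count, which is valid because $y$ is itself one of the at most $k|R_\ell|$ points lying in reference clusters of size at most $|R_\ell|$.
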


\begin{proof}
For every pair of points in a cluster, label the
endpoints so that \(y\in R_\ell\) belongs to the larger
reference cluster. By \Cref{lem:annular-movement},
\[
    \dist(x,\pi(x))+\dist(y,\pi(y))
    \le O(\eta)\left(
        \dist(x,y)+\dist(y,c_\ell)
        +\frac{\widehat{\OPT}}{n^{10}}
    \right).
\]
Summing over these pairs and using the charging argument
from the proof of \Cref{lem:annular-preservation} gives
\[
    \sum_{A\in\calP}(|A|-1)\sum_{x\in A}\dist(x,\pi(x))
    \le O(\eta)\cost(\calP)+O(\eta k\widehat{\OPT}).
\]
Moreover, \Cref{lem:annular-movement} and the definition of
\(\widehat{\OPT}\) imply
\[
    \sum_{x\in X}\dist(x,\pi(x))
    \le O(\eta\widehat{\OPT}).
\]
Thus
\[
    \sum_{A\in\calP}|A|\sum_{x\in A}\dist(x,\pi(x))
    \le O(\eta)\cost(\calP)+O(\eta k\widehat{\OPT})
    \le O(\eta ck)\cost(\calP),
\]
Choosing 
\(\eta=C_0\varepsilon/(ck)\)  for sufficiently small $C_0>0$ proves the claim.
\end{proof}

\noindent We next bound the size of the annular coreset by applying
\Cref{lem:doubling-packing-bound} to each net.
\begin{lemma}[Annular size]
\label{lem:annular-size}
The annular construction satisfies
\[
    |S|
    \le
    (k/\varepsilon)^{O(\ddim)}\log n .
\]
\end{lemma}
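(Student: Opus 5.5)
The bound follows from counting the packings one at a time with \Cref{lem:doubling-packing-bound}, then summing over the $k$ centers and the $L+1$ scales. The construction places, for each center index $i\in[k]$ and each scale $j\in\{0,\ldots,L\}$, a maximal $\eta r_j$-packing of $X\cap B(c_i,r_j)$. Since $S$ is the union of these packings, it suffices to bound each one and sum.

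First I will fix a pair $(i,j)$ and let $P_{ij}$ be the chosen packing. It is an $\eta r_j$-packing inside $X\cap B(c_i,r_j)$, whose diameter is at most $2r_j$. This subset of $X$ has doubling dimension $O(\ddim)$, since a subset of a doubling metric has doubling dimension at most twice that of the ambient space. \Cref{lem:doubling-packing-bound} therefore gives
\[
    |P_{ij}|\le\left(\frac{2r_j}{\eta r_j}\right)^{O(\ddim)}=\left(\frac{2}{\eta}\right)^{O(\ddim)}.
\]
The scale $r_j$ cancels, so this bound does not depend on $j$.

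Next I substitute $\eta=\Theta(\varepsilon/(ck))$, the choice made in \Cref{lem:annular-preservation} and \Cref{lem:annular-movement-charge}. Here $c=O(1)$ is the approximation ratio of the reference solution. This gives $|P_{ij}|\le(k/\varepsilon)^{O(\ddim)}$, with the constant $c$ absorbed into the exponent because $k/\varepsilon\ge 10$. Summing over the $k$ centers and the $L+1=O(\log n)$ scales, where $L=\lceil\log_2(40n^{10})\rceil$, yields
\[
    |S|\le k\cdot O(\log n)\cdot(k/\varepsilon)^{O(\ddim)}=(k/\varepsilon)^{O(\ddim)}\log n.
\]
The extra factor $k$ is absorbed into the exponent as well.

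There is no real obstacle here. Two points need a little care. The first is that the packing lemma is applied to a subset of $X$ and not to all of $X$, so one needs the doubling dimension of the subset to be $O(\ddim)$; the standard fact above supplies this. The second is the bicriteria convention, under which the reference solution may have $O(k)$ centers; this only changes constants inside the $O(\cdot)$. The count is also unaffected by duplicates across different $(i,j)$, because taking the union can only decrease the total.
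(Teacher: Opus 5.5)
Your proposal is correct and follows essentially the same argument as the paper: apply \Cref{lem:doubling-packing-bound} to each of the $k(L+1)$ packings, which gives $\eta^{-O(\ddim)}$ points each. Then substitute $\eta=\Theta(\varepsilon/(ck))$ and absorb the factor $k(L+1)=O(k\log n)$. Your extra remarks only change constants: the doubling dimension of the subset $X\cap B(c_i,r_j)$ is $O(\ddim)$, and the bicriteria convention allows $O(k)$ centers.
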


\begin{proof}
\Cref{lem:doubling-packing-bound} applied at each center and scale gives
\[
    |S|
    \le
    k(L+1)\eta^{-O(\ddim)}
    =
    k\eta^{-O(\ddim)}\log n .
\]
using $\eta$ from \Cref{lem:annular-preservation} gives the claim.
\end{proof}

\subsubsection{Obtaining a Constant Approximation Partition}
We now explain how to obtain the constant-factor reference partition needed by
the final coreset construction. The idea is to start from a $\mathrm{poly}(n)$-approximate
solution and improve it iteratively: in each iteration we build a small weighted
annular coreset around the current solution and run a weighted bicriteria
balanced \(k\)-median algorithm on this coreset. The next remark explains the
weighted bicriteria call used in each iteration; the following remark proves the
improvement recurrence.

\begin{remark}[Weighted bicriteria constant-factor approximation]
\label{rem:weighted-bkm}
The bicriteria algorithm of Bartal, Charikar and Raz \cite{BCR01} is stated for unweighted balanced \(k\)-median, where a cluster \(C\) served by a center \(p\) pays
\[
    |C|\sum_{x\in C}\dist(x,p).
\]
We use it on the weighted balanced \(k\)-median instances defined above, where the weights are multiplicities.

The only step of the argument that needs modification is the construction of the bipartite graph \(G'\). In the construction of Bartal, Charikar and Raz, each candidate center \(p\) has a copy \(p^{(r)}\) for every possible cluster size \(r\). The edge from a demand point \(x\) to \(p^{(r)}\) has cost \(r\dist(x,p)\), and \(p^{(r)}\) has capacity \(r\), namely
\[
    \sum_{x\mapsto p^{(r)}}1\le r .
\]
For a weighted instance, we replace the cluster-size parameter \(r\) by a total-weight parameter \(U\). Thus the copy \(p^{(U)}\) has edge cost \(Uw_x\dist(x,p)\) for a point \(x\) of weight \(w_x\), and capacity condition
\[
    \sum_{x\mapsto p^{(U)}}w_x\le U .
\]
For multiplicity weights, this is not a genuine change to the argument: it is exactly what one obtains by applying the unweighted construction to the multiset in 
which each point \(x\) appears \(w_x\) times, with the weighted notation recording these identical copies by their multiplicity. Hence the same bicriteria guarantee applies with cardinalities replaced by total weights.
The resulting solution may split copies of the same representative. This is
harmless for our application, since after converting its balanced
\(k\)-median guarantee to a min-sum guarantee,
\Cref{clm:unsplitting-representatives} removes all such splits without
increasing the min-sum cost. The resulting unsplit partition has balanced
\(k\)-median cost at most twice that of the original split solution.

For the running-time analysis, we make one further modification to the graph \(G'\). The parameter \(U\) is the capacity, in total weight, of a center copy. Since the weights in our coreset instances are multiplicities whose total sum is \(n\), all relevant exact capacities lie between \(1\) and \(n\). To keep the graph smaller, we allow only powers of two as possible values of \(U\), leaving \(O(\log n)\) values per candidate center.
This reduces the number of center copies and changes the guarantee only by a constant factor. Indeed, take any solution that uses exact total weights, and consider one of its clusters \(C\). If \(U\) is the smallest power of two with \(U\ge W(C)\), then
\[
    W(C)\le U<2W(C).
\]
Thus \(p^{(U)}\) has enough capacity to serve \(C\), and replacing \(W(C)\) by \(U\) can only increase this cluster's assignment cost, by at most a factor of \(2\):
\[
    W(C)\sum_{x\in C}w_x\dist(x,p)
    \le
    U\sum_{x\in C}w_x\dist(x,p)
    \le
    2W(C)\sum_{x\in C}w_x\dist(x,p).
\]
Applying this replacement to all clusters proves the constant-factor loss. Thus, if the weighted instance has point set \(Y\), then the two sides of \(G'\) have sizes \(|Y|\) and \(O(|Y|\log n)\).
\end{remark}

\begin{remark}[Bootstrapping a constant-factor reference partition]
\label{rem:fast-bootstrap} 
Let \(\rho>1\) be the constant approximation factor of the weighted balanced
\(k\)-median algorithm from \cref{rem:weighted-bkm}. Throughout this remark,
\(\OPT\) denotes the optimum min-sum \(k\)-clustering cost, whereas
\(\widehat{\OPT}_t\) denotes the balanced \(k\)-median reference value maintained
at round \(t\). We first obtain a polynomial-factor starting solution from any
\(n\)-approximation for ordinary \(k\)-median. For a center set \(C\), write
\[
    \cost_{\mathrm{med}}(C)
    =
    \sum_{x\in X} \dist(x,C),
\]
and let \(\OPT_{\mathrm{med}}\) be the optimum ordinary \(k\)-median cost over
all \(k\)-center sets.  Let \(C_0\) be an \(n\)-approximate \(k\)-median center
set, and let \(\calP_0=\{P_{0,1},\ldots,P_{0,k}\}\) be the induced nearest-center partition, with center \(c_{0,i}\) for each part. Then
\[
    \cost_{\mathrm{med}}(C_0)
    \le
    n\OPT_{\mathrm{med}}
    \le
    n\OPT.
\]
Here the second inequality follows because any min-sum \(k\)-clustering
induces an ordinary \(k\)-median solution of no greater cost.
For every cluster \(A\) of \(\calP_0\) served by a center \(p\),
\[
    |A|\sum_{x\in A}\dist(x,p)
    \le
    n\sum_{x\in A}\dist(x,p).
\]
Therefore the balanced \(k\)-median value of the initial reference
solution satisfies
\[
    \widehat{\OPT}_0
    =
    \sum_{i=1}^k |P_{0,i}|\sum_{x\in P_{0,i}}\dist(x,c_{0,i})
    \le
    n\cost_{\mathrm{med}}(C_0)
    \le
    n^2\OPT.
\]
Thus we may start the iterative improvement with
\(\widehat{\OPT}_0\le\alpha_0\OPT\), where \(\alpha_0\le n^2\).

We now show how to iteratively improve this bound on the balanced
\(k\)-median reference value. Suppose that at round \(t\) we have
\[
    \widehat{\OPT}_t\le \alpha_t\OPT.
\]
One round produces a new balanced \(k\)-median reference solution satisfying
\[
    \widehat{\OPT}_{t+1}
    \le
    O(\rho)\OPT+\frac12\,\widehat{\OPT}_t.
\]
Equivalently, this relative bound evolves according to
\[
    \alpha_{t+1}\le O(\rho)+\frac12\alpha_t.
\]
Thus the nonconstant part of the bound is halved in each round, and after
\(O(\log n)\) rounds the balanced \(k\)-median reference value becomes
\(O(\rho)\OPT\). We now describe one improvement round. Let
\(\calP_t=\{P_{t,1},\ldots,P_{t,k}\}\) be the current partition, with center
\(c_{t,i}\) for every part, and let
\[
    \widehat{\OPT}_t
    =
    \sum_{i=1}^k |P_{t,i}|\sum_{x\in P_{t,i}}\dist(x,c_{t,i}).
\]
Thus \(\widehat{\OPT}_t\) is the cost of the displayed feasible balanced
\(k\)-median solution; it is compared here with the optimum min-sum cost
\(\OPT\). Build the annular coreset around these centers using some parameter
\(\eta\).
The proof of \cref{lem:annular-preservation}, before the last step where
\(\widehat{\OPT}\le c\OPT\) is used, gives the following additive version: for
every \(k\)-partition \(\calP\),
\[
    \left|
    \cost(\calP)
    -
    \cost(\pi(\calP))
    \right|
    \le
    O(\eta)\cost(\calP)
    +
    O(\eta k\widehat{\OPT}_t).
\]
Both costs in this display are min-sum costs.
Choosing $\eta=\frac{1}{C\rho k}$ for a sufficiently large constant $C$ yields
\begin{equation}
    \left|
    \cost(\calP)
    -
    \cost(\pi(\calP))
    \right|
    \le
    \frac{1}{100\rho}\cost(\calP)
    +
    \frac{1}{100\rho}\widehat{\OPT}_t. \label{eq:additive-min-sum-bound}
\end{equation}
Denote the resulting coreset by \((S_t,w_t)\). Let
\(\widetilde{\calQ}_t\) be the possibly split
\(\rho\)-approximate weighted balanced \(k\)-median solution on
\((S_t,w_t)\). By \Cref{clm:unsplitting-representatives}, there is an
unsplit partition \(\calQ_t\) such that
\[
    \cost_{S_t,w_t}(\calQ_t)
    \le
    \cost(\widetilde{\calQ}_t).
\]
Let \(\calP_{t+1}\) be the lift of \(\calQ_t\) to \(X\). If
\(\calP^\star\) is an optimum min-sum partition of \(X\), then
\[
\begin{aligned}
    \cost(\pi(\calP_{t+1}))
    &\le
    \cost(\widetilde{\calQ}_t) \\
    &\le
    2\cost_{\mathrm{bkm},w_t}(\widetilde{\calQ}_t) \\
    &\le
    2\rho\cost_{\mathrm{bkm},w_t}(\pi(\calP^\star)) \\
    &\le
    2\rho\cost(\pi(\calP^\star)).
\end{aligned}
\]
The first and last inequalities are the two directions of the
balanced \(k\)-median/min-sum comparison, while the middle inequality is
the guarantee of the weighted balanced \(k\)-median algorithm. Applying
the additive bound in \eqref{eq:additive-min-sum-bound} first to \(\calP^\star\) and then to
\(\calP_{t+1}\) therefore yields
\[
    \cost(\calP_{t+1})
    \le
    O(\rho)\OPT
    +
    \frac{1}{4}\widehat{\OPT}_t.
\]

For every part \(P_{t+1,i}\), choose a point \(a_{t+1,i}\in P_{t+1,i}\)
such that
\[
    |P_{t+1,i}|
    \sum_{x\in P_{t+1,i}}\dist(x,a_{t+1,i})
    \le
    2\sum_{x,y\in P_{t+1,i}}\dist(x,y).
\]
The left-hand side is the contribution of the chosen center to the balanced
\(k\)-median reference value, whereas the sum on the right-hand side is the
min-sum cost of the part.
Such a point can be found with high probability by sampling
\(O(\log n)\) points from the part and retaining the one with minimum
total distance. Defining the next balanced \(k\)-median reference value
\(\widehat{\OPT}_{t+1}\) using these centers gives
\[
    \widehat{\OPT}_{t+1}
    \le
    2\cost(\calP_{t+1})
    \le
    O(\rho)\OPT+\frac12\widehat{\OPT}_t,
\]
which proves the claimed recurrence.

Set \(T=\lceil\log_2(n^2)\rceil\). Since \(n^2/2^T\le1\), iterating the
recurrence gives
\(\widehat{\OPT}_T\le O(\rho)\OPT\). This is a bound on the balanced
\(k\)-median reference value. To obtain a min-sum guarantee for its
partition, we use the other direction of the comparison:
\[
    \cost(\calP_T)
    \le
    2\cost_{\mathrm{bkm}}(\calP_T)
    \le
    2\widehat{\OPT}_T
    \le
    O(\rho)\OPT,
\]
so \(\calP_T\) is also an \(O(\rho)\)-approximation for min-sum
\(k\)-clustering.  One final annular construction with
\(\eta=\Theta(\varepsilon/(\rho k))\) then gives the coreset in
\cref{thm:annular-coreset}.  Each round still uses only \(O(k\log n)\) net
constructions.
\end{remark}

\subsubsection{Running Time}

A polynomial-factor starting solution is obtained in \(O(kn)\) time \cite{Gonzalez1985} \cite{MettuPlaxton2004}. In each improvement round, the annular construction is run with \(\eta=\Theta(1/(\rho k))\), so \cref{lem:annular-size} gives
\[
    |S_t|
    \le
    (\rho k)^{O(\ddim)}\log n .
\]
The annular construction uses \(O(k\log n)\) net computations. Using randomized net constructions in doubling metrics~\cite{HM06}, each net can be computed in \(2^{O(\ddim)}n\log n\) expected time. Hence the annular-construction part of one round takes \(2^{O(\ddim)}kn\log^2 n\) expected time.

Selecting the new reference centers adds \(O(n\log n)\) time per round:
for each part, we evaluate the total distance of \(O(\log n)\) sampled
candidates to all points in that part. Over \(T=O(\log n)\) rounds this
contributes \(O(n\log^2 n)\) time, which is dominated by the
annular-construction term above.

We next account for the weighted bicriteria call on this coreset. By \cref{rem:weighted-bkm}, the bipartite graph \(G'\) used in this call has \(|S_t|\) demand vertices and \(O(|S_t|\log n)\) center-copy vertices, and hence
\[
    |E(G')|=O(|S_t|^2\log n).
\]
The Bartal--Charikar--Raz construction then applies the uncapacitated facility-location algorithm of Jain and Vazirani, with modified edge costs. This algorithm runs in \(O(|E(G')|\log |E(G')|)\) time~\cite{jain2001approximation}, so one bicriteria call takes
\[
    \widetilde{O}(|S_t|^2\log n)
    =
    \widetilde{O}\!\left((\rho k)^{O(\ddim)}\log^3 n\right).
\]

There are \(T=\lceil\log_2(n^2)\rceil=O(\log n)\) improvement rounds. The final annular coreset construction has the same net-computation cost and does not require another call to the weighted approximation algorithm. Therefore the total expected running time is
\[
    2^{O(\ddim)}kn\log^3 n
    +
    \widetilde{O}\!\left((\rho k)^{O(\ddim)}\log^4 n\right).
\]
Apart from the displayed dependence on \(k\), \(\rho\), and \(\ddim\), the dependence on the original input size \(n\) is near-linear.
\subsubsection{Proof of the Annular Coreset Lemma}

\FastAnnularCoreset*
\begin{proof}
Assume first that we have a $c$-approximate reference partition.  Set
\[
    \eta=\frac{\varepsilon}{\beta\cdot c\cdot k}
\]
for a sufficiently large constant $\beta>0$.  The movement guarantee follows
from \cref{lem:annular-preservation}, and the size bound follows from
\cref{lem:annular-size}.  The required constant-factor reference partition and
the running-time bound are obtained by the bootstrapping procedure analyzed
above.
\end{proof}
\subsection{A Density-Adaptive Coreset}

The density-adaptive coreset removes the factor $k$ from the movement accuracy.  We first describe the construction assuming that we are given a value
\(\widehat{\OPT}\) satisfying
\[
    \OPT\le \widehat{\OPT}\le c\OPT
\]
for a constant \(c\).  We later remove this assumption by trying geometrically
decreasing estimates for \(\OPT\).

For $x\in X$ and $2\le m\le n$, let $\Gamma_m(x)$ be the set of $m$
points in $X$ that are closest to $x$, including $x$ itself, and define
\[
    \gamma_m(x)
    =
    \frac1m\sum_{v\in \Gamma_m(x)}\dist(x,v).
\]
Set
\[
    \rho(x)
    =
    \min_{2\le m\le n}
    \max\left\{
        \gamma_m(x),
        \frac{\widehat{\OPT}}{km^2}
    \right\}.
\]
The value $\rho(x)$ is the movement scale justified by the local density around
$x$.  The term $\gamma_m(x)$ pays for moving $x$ in a large cluster, while
$\widehat{\OPT}/(km^2)$ prevents too many well-separated points from appearing
at the same scale.

\subsubsection{Construction and Preservation}

Let
\[
    \tau=\frac{\varepsilon}{4(c+1)},
    \qquad
    r_{\min}=\frac{\widehat{\OPT}}{kn^2}.
\]
Since the second term in $\rho(x)$ is always at least $r_{\min}$, every point
satisfies $\rho(x)\ge r_{\min}$.  Define
\[
    H=\{x\in X:\rho(x)>\widehat{\OPT}\}.
\]
The points of $H$ are kept exactly, i.e added to $S$.  For all other points, let
\[
    J=\left\lceil\log_2(kn^2)\right\rceil,
    \qquad
    r_j=2^j r_{\min},
\]
and define the scale rings
\[
    R_j
    =
    \{x\in X\setminus H:\ r_j\le \rho(x)<2r_j\},
    \qquad
    0\le j\le J.
\]
For every nonempty ring $R_j$, choose a $(\tau r_j)$-net
$S_j\subseteq R_j$.  Set
\[
    S=H\cup\bigcup_{j=0}^J S_j.
\]
By the convention above, \(\pi\) is the nearest-neighbor movement rule.
Every point of \(H\) maps to itself.  For every \(x\in R_j\), the net \(S_j\)
contains a point within distance \(\tau r_j\) from \(x\), and therefore
\[
    \dist(x,\pi(x))<\tau r_j\le \tau\rho(x).
\]
We prove preservation in three steps.
First, \Cref{lem:movement-charge} bounds the total distance moved
by the points of a cluster, multiplied by its cardinality.
Next, \Cref{lem:one-cluster-preservation} uses the triangle inequality
to bound the resulting change in the cluster's min-sum cost.
The additive error per cluster is $2\tau\widehat{\OPT}/k$.
Finally, summing over at most $k$ clusters and using
$\widehat{\OPT}\le c\OPT$ then gives the preservation guarantee
in \Cref{lem:density-preservation}.

\begin{lemma}[Movement charge]
\label{lem:movement-charge}
Let $A\subseteq X$ have size $m\ge2$.  Suppose
\[
    \forall x\in A,
    \qquad
    \dist(x,\pi(x))\le \tau\rho(x).
\]
Then
\[
    m\sum_{x\in A}\dist(x,\pi(x))
    \le
    \tau\cost(A)
    +
    \frac{\tau\widehat{\OPT}}{k}.
\]
\end{lemma}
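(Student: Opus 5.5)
Since $\dist(x,\pi(x))\le\tau\rho(x)$ for every $x\in A$, it suffices to bound $m\sum_{x\in A}\rho(x)$. The plan is to plug the specific index $m=|A|$ into the minimum defining $\rho(x)$. This gives
\[
    \rho(x)\le \max\left\{\gamma_m(x),\frac{\widehat{\OPT}}{km^2}\right\}\le \gamma_m(x)+\frac{\widehat{\OPT}}{km^2},
\]
so that
\[
    m\sum_{x\in A}\dist(x,\pi(x))\le \tau\sum_{x\in A} m\,\gamma_m(x)+\tau\, m\cdot m\cdot\frac{\widehat{\OPT}}{km^2}.
\]
The second term is exactly $\tau\widehat{\OPT}/k$, which is the additive term in the statement. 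No case split between the two branches of the maximum is needed, because bounding the maximum by the sum already accounts for both.

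For the first term I use that $\Gamma_m(x)$ consists of the $m$ points of $X$ closest to $x$. Hence $\sum_{v\in\Gamma_m(x)}\dist(x,v)$ is the minimum of $\sum_{v\in T}\dist(x,v)$ over all $m$-subsets $T\subseteq X$. Since $A$ is an $m$-subset of $X$ containing $x$, it follows that
\[
    m\,\gamma_m(x)=\sum_{v\in\Gamma_m(x)}\dist(x,v)\le\sum_{y\in A}\dist(x,y).
\]
Summing over $x\in A$ gives $\sum_{x\in A}m\gamma_m(x)\le\sum_{x,y\in A}\dist(x,y)=\cost(A)$, using the paper's ordered-pair convention. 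This yields the claimed bound $\tau\cost(A)+\tau\widehat{\OPT}/k$.

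The only point needing care is this nearest-neighbor comparison. It relies on $\Gamma_m(x)$ being the $m$ nearest points in the whole of $X$ (ties broken arbitrarily), and on $|A|=m$ exactly, so that the index used in the minimum matches the cluster size. The hypothesis $m\ge2$ is needed only so that $m$ is an allowed index in the definition of $\rho$. The argument is otherwise a direct calculation, and I do not expect a real obstacle.
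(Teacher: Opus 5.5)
Your proposal is correct and follows essentially the same argument as the paper: take the index $m=|A|$ in the minimum defining $\rho(x)$, bound the maximum by the sum, and use that $\Gamma_m(x)$ minimizes the total distance from $x$ over $m$-subsets of $X$, so that $m\gamma_m(x)\le\sum_{y\in A}\dist(x,y)$. Summing over $x\in A$ then gives $\tau\cost(A)+\tau\widehat{\OPT}/k$.
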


\begin{proof}
Fix $x\in A$.  The $m$ nearest points to $x$ are no farther in total than the
points of $A$, and hence
\[
    m\gamma_m(x)
    =
    \sum_{v\in \Gamma_m(x)}\dist(x,v)
    \le
    \sum_{y\in A}\dist(x,y).
\]
By definition of $\rho(x)$,
\[
    \rho(x)
    \le
    \gamma_m(x)+\frac{\widehat{\OPT}}{km^2}.
\]
Multiplying by $m$, summing over $x\in A$, and using
$\dist(x,\pi(x))\le\tau\rho(x)$ proves the lemma.
\end{proof}

\begin{lemma}[One cluster]
\label{lem:one-cluster-preservation}
Let $A\subseteq X$ have size $m\ge2$.  Then
\[
    \left|
    \cost(A)
    -
    \sum_{x,y\in A}\dist(\pi(x),\pi(y))
    \right|
    \le
    2\tau\cost(A)
    +
    \frac{2\tau\widehat{\OPT}}{k}.
\]
\end{lemma}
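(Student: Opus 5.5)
The plan is to compare the two costs pair by pair using the triangle inequality, and then invoke \Cref{lem:movement-charge}. That lemma needs the hypothesis $\dist(x,\pi(x))\le\tau\rho(x)$ for every $x\in A$, which holds by the construction of $S$. Points of $H$ map to themselves, so the bound is trivial for them. Every other point lies in some ring $R_j$, and we showed $\dist(x,\pi(x))<\tau r_j\le\tau\rho(x)$.

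For each ordered pair $(x,y)\in A\times A$, the triangle inequality gives
\[
    \bigl|\dist(x,y)-\dist(\pi(x),\pi(y))\bigr|
    \le
    \dist(x,\pi(x))+\dist(y,\pi(y)).
\]
We sum this over all $m^2$ ordered pairs in $A$. Each point $x\in A$ appears as the first coordinate in $m$ pairs and as the second coordinate in $m$ pairs, so the right-hand side totals
\[
    2m\sum_{x\in A}\dist(x,\pi(x)).
\]
Moving the absolute value outside the sum only helps. Hence
\[
    \Bigl|\cost(A)-\sum_{x,y\in A}\dist(\pi(x),\pi(y))\Bigr|
    \le
    2m\sum_{x\in A}\dist(x,\pi(x)).
\]

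Finally, \Cref{lem:movement-charge} bounds $m\sum_{x\in A}\dist(x,\pi(x))$ by $\tau\cost(A)+\tau\widehat{\OPT}/k$. Multiplying this bound by $2$ gives exactly the claimed inequality.

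There is no real obstacle. The only points to check are that the movement bound $\dist(x,\pi(x))\le\tau\rho(x)$ holds for every point of $A$, including points of $H$, and that ordered pairs are counted consistently, which produces the factor $2m$ rather than $m$. The assumption $m\ge2$ is needed only so that \Cref{lem:movement-charge} applies, since $\rho$ is defined through $\gamma_m$ with $m\ge2$.
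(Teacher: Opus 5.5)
Your proposal is correct and follows the paper's proof: apply the triangle inequality to each ordered pair, sum over all ordered pairs of $A$ to get the bound $2|A|\sum_{x\in A}\dist(x,\pi(x))$, and finish with \Cref{lem:movement-charge}. Your explicit check that $\dist(x,\pi(x))\le\tau\rho(x)$ holds for every $x\in A$, including points of $H$, is exactly the hypothesis that lemma needs.
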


\begin{proof}
For every \(x,y\in A\), the triangle inequality gives
\[
    \left|\dist(x,y)-\dist(\pi(x),\pi(y))\right|
    \le
    \dist(x,\pi(x))+\dist(y,\pi(y)).
\]
Summing over all ordered pairs \(x,y\in A\), we get
\[
    \left|
    \cost(A)
    -
    \sum_{x,y\in A}\dist(\pi(x),\pi(y))
    \right|
    \le
    2|A|\sum_{x\in A}\dist(x,\pi(x)).
\]
The claim now follows from \cref{lem:movement-charge}.
\end{proof}

\begin{lemma}[Density-adaptive preservation]
\label{lem:density-preservation}
Let $(X,\dist)$ be a finite metric space of size $n$ and doubling dimension at most
$\ddim$.  Suppose
\[
    \OPT\le \widehat{\OPT}\le c\OPT .
\]
For every $0<\varepsilon<\frac{1}2{}$, the density-adaptive construction gives an
$\varepsilon$-movement coreset.
\end{lemma}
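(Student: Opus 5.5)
The plan is to fix an arbitrary $k$-partition $\calP=\{P_1,\ldots,P_k\}$ of $X$ and bound the change in cost cluster by cluster, then sum. The first step checks the hypothesis of \Cref{lem:movement-charge}: every $x\in X$ must satisfy $\dist(x,\pi(x))\le\tau\rho(x)$. For $x\in H$ this is immediate, since $\pi(x)=x$. For $x\in R_j$ we already showed $\dist(x,\pi(x))<\tau r_j\le\tau\rho(x)$. There is one subtlety: $\pi$ is the nearest neighbor in all of $S$, not only in $S_j$, so it can only move $x$ less than the net point of $S_j$ does. The bound therefore holds for every point. I should also confirm that the rings $R_0,\ldots,R_J$ cover $X\setminus H$. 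This follows because $r_{\min}\le\rho(x)\le\widehat{\OPT}$ for $x\notin H$, and $2^{J}r_{\min}\ge\widehat{\OPT}$ by the choice of $J$.

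Next, for each cluster $P_i$ with $|P_i|\ge2$, I apply \Cref{lem:one-cluster-preservation}. This gives
\[
\Bigl|\cost(P_i)-\cost(\pi(P_i))\Bigr|\le 2\tau\cost(P_i)+\frac{2\tau\widehat{\OPT}}{k}.
\]
Clusters with $|P_i|\le1$ contribute zero both before and after the move, because distances from a point to itself vanish. Since the multiset $\pi(P_i)$ is exactly the $i$-th part of $\pi(\calP)$, summing over the at most $k$ clusters yields
\[
\bigl|\cost(\calP)-\cost(\pi(\calP))\bigr|\le 2\tau\cost(\calP)+2\tau\widehat{\OPT}.
\]

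Finally, I use $\widehat{\OPT}\le c\OPT\le c\cost(\calP)$, since every $k$-partition costs at least $\OPT$. This bounds the total error by $2\tau(1+c)\cost(\calP)=\frac{\varepsilon}{2}\cost(\calP)\le\varepsilon\cost(\calP)$, using $\tau=\varepsilon/(4(c+1))$. The weights $w(s)=|\pi^{-1}(s)|$ are consistent with the movement rule by convention, so $(S,w)$ meets \Cref{def:movement-based-coresets}. The doubling-dimension hypothesis is not needed for preservation; it matters only for the size bound.

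The only step needing care is the first one, namely that the nearest-neighbor rule over the full set $S$ still satisfies the per-point movement bound, and that the rings together with $H$ exhaust $X$. Both are short checks, so I expect no real obstacle. The rest is bookkeeping, and the factor $2$ slack in $\tau$ absorbs any small constants.
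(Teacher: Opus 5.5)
Your proposal is correct and matches the paper's proof: sum the one-cluster bound over the parts of size at least two, then absorb the additive $2\tau\widehat{\OPT}$ term using $\widehat{\OPT}\le c\OPT\le c\cost(\calP)$ and $\tau=\varepsilon/(4(c+1))$. Your extra checks are details the paper handles in the construction paragraph before the lemmas: that $H$ and the rings $R_0,\ldots,R_J$ cover $X$, and that nearest-neighbor over all of $S$ can only reduce movement.
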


\begin{proof}
Let $\calP$ be any partition into at most $k$ parts.  Parts of size one have
zero cost and zero movement error.  Summing
\cref{lem:one-cluster-preservation} over all parts of size at least two gives
\[
    \left|
    \cost(\calP)
    -
    \sum_{P\in\calP}\sum_{x,y\in P}\dist(\pi(x),\pi(y))
    \right|
    \le
    2\tau\cost(\calP)+2\tau\widehat{\OPT}.
\]
Since
\[
    \widehat{\OPT}\le c\OPT\le c\cost(\calP),
\]
the right-hand side is at most
\[
    2\tau(c+1)\cost(\calP).
\]
With $\tau=\varepsilon/(4(c+1))$, this is at most
$\varepsilon\cost(\calP)$.
\end{proof}
The preceding argument gives more than preservation of the total
cost: it also bounds the sum of the movement terms used in the
triangle-inequality estimate. 
\begin{lemma}[Density-adaptive movement-charge bound]
\label{lem:density-movement-charge}
Under the assumptions of \Cref{lem:density-preservation}, the
density-adaptive movement rule \(\pi:X\to S\) satisfies, for every
\(k\)-partition \(\calP\) of \(X\),
\[
    \sum_{A\in\calP}|A|\sum_{x\in A}\dist(x,\pi(x))
    \le
    \varepsilon\cost(\calP).
\]
\end{lemma}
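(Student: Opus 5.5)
The plan is to reuse \Cref{lem:movement-charge} cluster by cluster and then absorb the additive term using $\widehat{\OPT}\le c\OPT\le c\cost(\calP)$. The one new point is clusters of size one, since \Cref{lem:movement-charge} assumes $m\ge2$.

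\emph{Step 1: movement guarantee.} By the construction, every $x\in H$ maps to itself. Every $x\in R_j$ satisfies $\dist(x,\pi(x))<\tau r_j\le\tau\rho(x)$. Hence the hypothesis $\dist(x,\pi(x))\le\tau\rho(x)$ of \Cref{lem:movement-charge} holds for all $x\in X$, and so for every subset $A$.

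\emph{Step 2: clusters of size at least two.} Fix a $k$-partition $\calP$. For each part $A$ with $|A|\ge2$, \Cref{lem:movement-charge} gives
\[
|A|\sum_{x\in A}\dist(x,\pi(x))\le \tau\cost(A)+\tau\widehat{\OPT}/k .
\]
Summing over at most $k$ such parts gives a total of at most $\tau\cost(\calP)+\tau\widehat{\OPT}$.

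\emph{Step 3: singleton clusters (the main obstacle).} A singleton part $\{x\}$ contributes $\dist(x,\pi(x))$ to the left side but has zero cost. I would bound this directly through $\rho(x)$. Taking $m=2$ in the definition of $\rho$ gives
\[
\rho(x)\le \max\{\gamma_2(x),\widehat{\OPT}/(4k)\}\le \gamma_2(x)+\widehat{\OPT}/(4k).
\]
This alone is not enough, because $\gamma_2(x)=\dist(x,\mathrm{nn}(x))/2$ need not be controlled by $\cost(\calP)$. Instead I use $\rho(x)\le\widehat{\OPT}$ when $x\notin H$; points of $H$ do not move at all.

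This still leaves a term $\tau\widehat{\OPT}$ per singleton, which is too large. A better bound uses $m=n$: since $\gamma_n(x)\le\frac1n\sum_v\dist(x,v)$, we get $\rho(x)\le \frac1n\sum_{v}\dist(x,v)+\widehat{\OPT}/(kn^2)$. Is $\frac1n\sum_v\dist(x,v)$ bounded by $O(\OPT/k)$? Not in general. The cleaner route is the one the definition of $\rho$ already suggests: charge the singleton through the $m=n$ term against $\widehat{\OPT}$ itself. Concretely, the optimal partition has some cluster $P^*$ with $|P^*|\ge n/k$. Then
\[
\sum_v \dist(x,v)\le n\,\dist(x,P^*)+\ldots
\]
and this is getting circular.

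So I would first check whether the intended statement effectively treats parts with $|A|\ge2$, as in \Cref{lem:density-preservation}, where singletons were dismissed as contributing ``zero movement error'' to the cost difference. For the purpose of the triangle-inequality estimate, a singleton's contribution $2|A|\sum_{x\in A}\dist(x,\pi(x))$ can be replaced by the exact error, which is $0$. I would therefore state the bound with $|A|$ replaced by $|A|-1$ for singletons, or equivalently sum over non-singleton parts. If the literal statement is required, I would fall back on bounding a singleton's movement by $\tau\widehat{\OPT}/k$ per part, which needs $\rho(x)\le\widehat{\OPT}/k$. I would try to establish this using $m=2$ together with a bound on the nearest-neighbor distance, and that is where I expect the difficulty.

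\emph{Step 4: conclude.} Combining the steps gives
\[
\sum_{A}|A|\sum_{x\in A}\dist(x,\pi(x))\le \tau\cost(\calP)+2\tau\widehat{\OPT}\le \tau(1+2c)\cost(\calP)\le\varepsilon\cost(\calP),
\]
using $\tau=\varepsilon/(4(c+1))$.
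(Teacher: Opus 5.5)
\textbf{Gap at Step 3.} Your Steps 1, 2 and 4 are fine and match the paper. Step 3 is a genuine gap: you never bound the contribution of singleton parts, and both of your fallbacks fail.

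Replacing $|A|$ by $|A|-1$, or dropping singletons, proves a weaker statement than the one claimed. The literal $|A|$-weighted form is what \Cref{def:bounded-movement-coreset} and \Cref{lem:bounded-movement-composition} use later.

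Your other plan, proving $\rho(x)\le\widehat{\OPT}/k$, is false in general. Since $\gamma_m(x)$ is nondecreasing in $m$, we have $\rho(x)\ge\gamma_2(x)$. So an isolated point with $\gamma_2(x)=\widehat{\OPT}/4$ has $\rho(x)=\widehat{\OPT}/4$. This exceeds $\widehat{\OPT}/k$ once $k>4$, yet $x\notin H$. Any point can be a singleton of some partition, so this is a real obstruction.

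\textbf{Missing idea.} Bound the movement $d_x=\dist(x,\pi(x))$ itself rather than $\rho(x)$, using the fact that the representative is a point of $X$. If $d_x>0$, then $\pi(x)\in S\subseteq X$ is a point other than $x$ at distance $d_x$. Hence the nearest neighbor of $x$ is within $d_x$, and $\gamma_2(x)\le d_x/2$. Plugging this into the $m=2$ bound you already wrote gives
\[
d_x\le\tau\rho(x)\le\tau\Bigl(\gamma_2(x)+\frac{\widehat{\OPT}}{4k}\Bigr)\le\frac{\tau}{2}\,d_x+\frac{\tau\widehat{\OPT}}{4k}.
\]
Therefore $d_x\le\tau\widehat{\OPT}/(2k)$ for every $x\in X$, and trivially so when $d_x=0$.

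So the nearest-neighbor distance never has to be compared with $\cost(\calP)$: it is controlled by the movement itself. In particular, a point whose nearest neighbor is farther than $\tau\rho(x)$ does not move at all.

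\textbf{Conclusion.} With this bound, each singleton part contributes at most $\tau\widehat{\OPT}/(2k)$. Combined with your Step 2, the total over the at most $k$ parts is at most $\tau\cost(\calP)+\tau\widehat{\OPT}\le\tau(c+1)\cost(\calP)\le\varepsilon\cost(\calP)$. This is exactly the paper's proof.
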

\begin{proof}
Write \(d_x=\dist(x,\pi(x))\).
When \(d_x>0\), we have \(\gamma_2(x)\le d_x/2\), so
\[
    d_x\le\tau\rho(x)
    \le\tau\left(\frac{d_x}{2}+\frac{\widehat{\OPT}}{4k}\right).
\]
This inequality also holds for \(d_x=0\).
Thus, for every $x\in X$, 
\(d_x\le\tau\widehat{\OPT}/(2k)\).
Combining this with \Cref{lem:movement-charge} and summing
over \(\calP\) yields
\[
    \sum_{A\in\calP}|A|\sum_{x\in A}d_x
    \le\tau\cost(\calP)+\tau\widehat{\OPT}
    \le\tau(c+1)\cost(\calP)
    \le\varepsilon\cost(\calP).
\]
\end{proof}

\subsubsection{Size Bound}

\begin{lemma}[High-scale points]
\label{lem:high-points}
The set $H$ has size at most $k$.
\end{lemma}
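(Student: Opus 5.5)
The plan is a pigeonhole argument against an optimal partition. The first step is to unpack what membership in $H$ forces. If $x\in H$, then $\rho(x)>\widehat{\OPT}$, and by the definition of $\rho(x)$ as a minimum over $m$ this means
\[
    \forall\, 2\le m\le n,\qquad \max\left\{\gamma_m(x),\frac{\widehat{\OPT}}{km^2}\right\}>\widehat{\OPT}.
\]
For every $m\ge 2$ the second term satisfies $\widehat{\OPT}/(km^2)\le \widehat{\OPT}/4<\widehat{\OPT}$. Hence the first term must be the large one, so $\gamma_m(x)>\widehat{\OPT}$ for every $2\le m\le n$. (If $\widehat{\OPT}=0$, then $\OPT=0$ and $\rho\equiv 0$, so $H=\emptyset$.)

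Next I suppose, for contradiction, that $|H|\ge k+1$. Fix an optimal partition $\calP^\star$ of $X$ into at most $k$ parts. By pigeonhole, some part $P\in\calP^\star$ contains two distinct points of $H$; pick one of them, $x$, and set $m=|P|\ge 2$. As in the proof of \Cref{lem:movement-charge}, the $m$ nearest points to $x$ are no farther in total than the $m$ points of $P$, so
\[
    \sum_{v\in P}\dist(x,v)\ \ge\ m\gamma_m(x)\ >\ m\,\widehat{\OPT}\ \ge\ 2\,\widehat{\OPT}.
\]

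On the other hand, $\sum_{v\in P}\dist(x,v)$ is one row of the ordered-pair sum defining $\cost(P)$. Using $\OPT\le\widehat{\OPT}$, this gives
\[
    \sum_{v\in P}\dist(x,v)\le \cost(P)\le \cost(\calP^\star)=\OPT\le\widehat{\OPT}.
\]
This contradicts the previous display, since $\widehat{\OPT}>0$. Therefore $|H|\le k$.

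There is no real obstacle here. The only points needing care are confirming that the additive term $\widehat{\OPT}/(km^2)$ is too small to be the maximizer once $m\ge 2$, and the degenerate case $\widehat{\OPT}=0$.
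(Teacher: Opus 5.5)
Your proof is correct and is essentially the paper's argument: pigeonhole two points of $H$ into one optimal cluster, then use $\gamma_m(x)>\widehat{\OPT}$ to contradict $\OPT\le\widehat{\OPT}$. The only difference is the choice of $m$: the paper takes $m=2$, so that every other point of the cluster is farther than $2\widehat{\OPT}$ from $x$, whereas you take $m=|P|$ via the comparison from \Cref{lem:movement-charge}.

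One small slip concerns your degenerate case. If $\widehat{\OPT}=0$, it is not true that $\rho\equiv 0$: then $\rho(x)=\min_m\gamma_m(x)$, which is positive for every point with no coincident copy. No separate case is needed, however. The bound $\widehat{\OPT}/(km^2)\le\widehat{\OPT}$ already forces $\gamma_m(x)>\widehat{\OPT}$, and $m\widehat{\OPT}<\sum_{v\in P}\dist(x,v)\le\widehat{\OPT}$ is impossible for any $\widehat{\OPT}\ge 0$ and $m\ge 1$.
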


\begin{proof}
If $x\in H$, by definition of $\rho(x)$,
\[
    \widehat{\OPT}<\rho(x)
    \le
    \max\left\{
        \gamma_2(x),
        \frac{\widehat{\OPT}}{4k}
    \right\}.
\]
Hence $\gamma_2(x)>\widehat{\OPT}$, so the nearest neighbor of $x$ is farther
than $2\widehat{\OPT}$.  If $|H|>k$, then two points of $H$ lie in the same optimal cluster.  Their contribution to $\OPT$ is larger
$2\widehat{\OPT}\ge 2\OPT$, contradicting optimality.
\end{proof}

\begin{lemma}[Single scale ring]
\label{lem:one-ring}
For every scale ring $R_j$,
\[
    |S_j|
    \le
    3k\left(\frac{C}{\tau}\right)^{\ddim}
\]
for a universal constant $C>0$.
\end{lemma}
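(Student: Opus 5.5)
The plan is to combine a packing argument at the coarse scale \(r_j\) with \Cref{lem:doubling-packing-bound} at the fine scale \(\tau r_j\). Fix a nonempty ring \(R_j\) and let \(Y\subseteq S_j\) be a maximal \((10r_j)\)-packing of \(S_j\). Then every net point of \(S_j\) lies within \(10r_j\) of some \(y\in Y\). The points of \(S_j\) in \(B(y,10r_j)\) form a \((\tau r_j)\)-packing, so by \Cref{lem:doubling-packing-bound} there are at most \((C/\tau)^{\ddim}\) of them. It therefore suffices to show \(|Y|\le 3k\).

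\textbf{Step 1: local density from the ring condition.} Let \(m^\ast\) be the threshold with \(\widehat{\OPT}/(k(m^\ast)^2)\approx r_j\), that is, \(m^\ast\approx\sqrt{\widehat{\OPT}/(kr_j)}\). For \(x\in R_j\), the upper bound \(\rho(x)<2r_j\) gives some \(m\) with \(\gamma_m(x)<2r_j\) and \(\widehat{\OPT}/(km^2)<2r_j\). Hence \(m\ge m^\ast/\sqrt2\). By Markov's inequality on the \(m\) nearest distances, at least \(m/2\ge m^\ast/3\) points of \(X\) lie in \(B(x,4r_j)\). The lower bound \(\rho(x)\ge r_j\), applied at \(m\approx m^\ast/2\), is the complementary sparsity statement. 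It says that for \(m\) just below \(m^\ast\) the term \(\gamma_m(x)\) is at least of order \(r_j\), so \(x\) does not have \(\Omega(m^\ast)\) neighbors at distance \(o(r_j)\). I will use this to cap how much of a dense ball a single optimal cluster can ``absorb cheaply''.

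\textbf{Step 2: lower-bounding \(\OPT\) when \(|Y|>3k\).} The balls \(B(y,4r_j)\) for \(y\in Y\) are pairwise disjoint, and points in distinct balls are at distance at least \(2r_j\). Each ball contains at least \(m^\ast/3\) points. Fix an optimal partition \(\calP^\ast\). For each ball, assign it to the optimal cluster holding the plurality of its points.
\begin{itemize}
\item If some cluster is the plurality cluster of two or more balls, it has \(\Omega(m^\ast/k)\) points in each of two balls at distance at least \(2r_j\). Balls whose points are spread over many clusters are handled by the same counting.
\item By pigeonhole over the \(k\) clusters, with \(|Y|>3k\) at least \(|Y|-k\) balls are not ``privately owned''. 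Each such ball forces \(\Omega((m^\ast)^2)\) ordered pairs at distance at least \(2r_j\) inside some cluster. In the spread case, the sparsity from Step 1 keeps these pairs from being cheap.
\end{itemize}
Summing gives \(\OPT\ge\Omega\bigl((|Y|-k)(m^\ast)^2r_j\bigr)=\Omega\bigl((|Y|-k)\widehat{\OPT}/k\bigr)\). With \(\widehat{\OPT}\ge\OPT\) and suitable constants, this forces \(|Y|\le 3k\), and multiplying by the per-ball bound finishes the proof.

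\textbf{Main obstacle.} The hard step is the charging in Step 2. A ball may be split among several optimal clusters, each taking only a small share, so that no single cluster pays \((m^\ast)^2r_j\). I would first handle this by a convexity argument: the total number of cross-ball ordered pairs inside clusters is at least \(\sum_P s_P(s_P-b_P)\), where \(b_P\) is the largest share of any single ball in cluster \(P\). Minimizing this over distributions of at least \((|Y|-k)m^\ast/3\) points into \(k\) clusters should still give \(\Omega\bigl((|Y|-k)(m^\ast)^2\bigr)\). If that fails, I would fall back on the \(\rho(x)\ge r_j\) sparsity to bound each \(b_P\) directly.
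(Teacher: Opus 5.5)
Your architecture matches the paper's proof: a coarse net of $S_j$ at scale $\Theta(r_j)$ (the paper takes a $16r_j$-net, you take a maximal $10r_j$-packing), the packing bound inside each coarse cell, and the observation that $\rho(v)<2r_j$ puts more than $\sqrt{\widehat{\OPT}/(8kr_j)}$ points in $B(v,4r_j)$. The gap is Step 2, the bound $|Y|=O(k)$, which carries essentially all of the lemma and which you leave unproved.

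Your plurality charging falls short. It only guarantees that two balls sharing a plurality cluster each have $\Omega(m^\ast/k)$ points in that cluster, i.e.\ $\Omega((m^\ast)^2/k^2)$ cross pairs. Summed over the $|Y|-k$ non-owned balls, this gives only $|Y|=O(k^3)$. Your claim that each non-owned ball forces $\Omega((m^\ast)^2)$ pairs does not follow from anything you wrote.

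Your convexity fallback is only asserted (``should''). The clean way to make it work is to cap every ball at a common size so that $b_P$ is bounded, which you do not do.

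Your sparsity fallback misreads the definition. For $m\le\sqrt{\widehat{\OPT}/(kr_j)}$ the term $\widehat{\OPT}/(km^2)$ already exceeds $r_j$, so $\rho(x)\ge r_j$ says nothing about $\gamma_m(x)$ there. It constrains $\gamma_m(x)$ only for $m$ above that threshold, not ``just below'' it. Sparsity is also unnecessary, since cross-ball pairs already have length $>2r_j$ by construction.

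The missing idea is to equalize the groups and apply Cauchy--Schwarz once, globally over all $T=|Y|$ balls. From each ball pick $G_y\subseteq B(y,4r_j)$ with exactly $h=\lceil\sqrt{\widehat{\OPT}/(8kr_j)}\rceil$ points. For an optimal partition $O_1,\ldots,O_k$, set $g_{ya}=|G_y\cap O_a|$ and $M_a=\sum_y g_{ya}$. The number of ordered pairs lying in a common $O_a$ but in different groups is
\[ \sum_a M_a^2-\sum_y\sum_a g_{ya}^2\ \ge\ \frac{(Th)^2}{k}-Th^2. \]
This holds because $\sum_a M_a=Th$, and $\sum_a g_{ya}^2\le h^2$ however $G_y$ is split.

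Each such pair has length $>2r_j$, and $r_jh^2\ge\widehat{\OPT}/(8k)\ge\OPT/(8k)$. Hence $\OPT\ge\frac{\OPT}{4k}\bigl(T^2/k-T\bigr)$, so $T^2-kT-4k^2\le0$ and $T<3k$. This needs no plurality or ownership bookkeeping, no discarding of $k$ balls (which would only worsen the constant), and no use of the lower bound $\rho\ge r_j$.
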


\begin{proof}
Fix a ring $R_j$ and write $r=r_j$ and $S_r=S_j$.  Let $N_r\subseteq S_r$ be a
$(16r)$-net of $S_r$.  Assign each $s\in S_r$ to its closest point in $N_r$, and
let $F_v$ be all the points assigned to $v\in N_r$.  Then $F_v \subseteq B(v,16r)$. 
Since $S_r$ is a $(\tau r)$-net, the points of $F_v$ are $\tau r$-separated and by
\Cref{lem:doubling-packing-bound},
\[
    |F_v|
    \le
    \left(\frac{C}{\tau}\right)^{\ddim}.
\]
Thus
\[
    |S_r|
    \le
    |N_r|\left(\frac{C}{\tau}\right)^{\ddim}.
\]
It remains to prove $|N_r|<3k$.

Let $T=|N_r|$.  For each $v\in N_r$, since $v\in R_j$, we have $\rho(v)<2r$.  Let
$m_v$ be the value of $m$ that attains the minimum defining $\rho(v)$.  Then
\[
    \gamma_{m_v}(v)<2r,
    \qquad
    \frac{\widehat{\OPT}}{km_v^2}<2r.
\]
The second inequality gives
\[
    m_v>\sqrt{\frac{\widehat{\OPT}}{2kr}},
\]
and the first implies that more than half of the points in $\Gamma_{m_v}(v)$ lie
inside $B(v,4r)$.

Define
\[
    h
    :=
        \left\lceil
        \sqrt{\frac{\widehat{\OPT}}{8kr}}
        \right\rceil.
\]
Since \(m_z>2\sqrt{\widehat{\OPT}/(8kr)}\), the preceding observation implies
that \(B(v,4r)\) contains at least \(h\) points.
For every $v\in N_r$, choose a set
\[
    G_v\subseteq B(v,4r),
    \qquad
    |G_v|=h.
\]
The choice of $h$ gives
\[
    rh^2\ge \frac{\widehat{\OPT}}{32k}
    \ge
    \frac{\OPT}{32k}.
\]

Distinct points of $N_r$ are at distance at least $16r$.  Therefore if
$x\in G_v$ and $y\in G_{v'}$ for $v\ne v'$, then
\[
    \dist(x,y)\ge 16r-4r-4r=8r.
\]
The sets $G_v$ are disjoint.

Let $O_1,\ldots,O_k$ be an optimal partition of $X$.  For $v\in N_r$ and $a\in[k]$, set
\[
    g_{va}=|G_v\cap O_a|,
    \qquad
    M_a=\sum_{v\in N_r}g_{va}.
\]
The number $Q$ of ordered pairs of selected points that lie in the same optimal
cluster but in different groups $G_v$ is
\[
    Q
    =
    \sum_{a=1}^k M_a^2
    -
    \sum_{v\in N_r}\sum_{a=1}^k g_{va}^2.
\]
Since $\underset{a}{\sum} M_a=\underset{v\in N_r}{\sum}|G_v|=Th$, Cauchy--Schwarz gives
\[
    \sum_{a=1}^k M_a^2
    \ge
    \frac{(Th)^2}{k},
\]
and for every $v$,
\[
    \sum_{a=1}^k g_{va}^2\le h^2.
\]
Thus
\[
    Q
    \ge
    h^2\left(\frac{T^2}{k}-T\right).
\]
Every pair counted by $Q$ contributes at least $8r$ to the optimal cost, so
\[
    \OPT
    \ge
    8rh^2\left(\frac{T^2}{k}-T\right).
\]
If $T\le k$, we are done.  Otherwise, using
$rh^2\ge \OPT/(32k)$,
\[
    \frac{T^2}{k}-T\le 4k.
\]
Hence $T^2-kT-4k^2\le0$, and therefore
\[
    T\le \frac{1+\sqrt{17}}2\,k<3k.
\]
This proves the lemma.
\end{proof}

\begin{lemma}[Total size]
\label{lem:total-size}
The set $S$ satisfies
\[
    |S|
    \le
    k\left(\frac{c+1}{\varepsilon}\right)^{O(\ddim)}\log(2n).
\]
\end{lemma}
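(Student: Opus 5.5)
The bound follows by summing the previous two lemmas over the pieces of $S$. By construction
\[
    S=H\cup\bigcup_{j=0}^J S_j ,
\]
so
\[
    |S|\le |H|+\sum_{j=0}^{J}|S_j| ,
\]
where empty rings contribute nothing. \Cref{lem:high-points} gives $|H|\le k$. \Cref{lem:one-ring} gives $|S_j|\le 3k(C/\tau)^{\ddim}$ for every ring. The number of rings is
\[
    J+1=\lceil\log_2(kn^2)\rceil+1=O(\log k+\log n).
\]
Combining these,
\[
    |S|\le k+3k\,(J+1)\left(\frac{C}{\tau}\right)^{\ddim}.
\]

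Next substitute $\tau=\varepsilon/(4(c+1))$, so that
\[
    \left(\frac{C}{\tau}\right)^{\ddim}=\left(\frac{4C(c+1)}{\varepsilon}\right)^{\ddim}.
\]
Since $\varepsilon<1/2$ and $c\ge1$, we have $(c+1)/\varepsilon\ge 4$. Therefore the constant $4C$ can be absorbed into the exponent, giving $\left((c+1)/\varepsilon\right)^{O(\ddim)}$.

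The one step needing care is the factor $J+1$, which contains $\log k$ and not only $\log n$. To handle it, use $\log_2(kn^2)\le 3\log_2 n+\log_2 k$ (or $\log_2(2n)$ in place of $\log_2 n$ to cover $n=1$), and bound $\log k$ separately.

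One option is to restrict to $k\le n$, which is without loss of generality since otherwise every point is a singleton cluster. Then $\log_2(kn^2)\le 3\log_2 n$, so $J+1=O(\log(2n))$.

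If we do not want to assume $k\le n$, note that $\log k\le k$, and this extra factor can be absorbed. However, absorbing it into $(1/\varepsilon)^{O(\ddim)}$ is not automatic. The clean route is therefore $k\le n$, which holds for nontrivial instances with $k<n$.

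With $J+1\le 4\log_2(2n)$, we get
\[
    |S|\le k+12k\log_2(2n)\left(\frac{4C(c+1)}{\varepsilon}\right)^{\ddim}\le k\left(\frac{c+1}{\varepsilon}\right)^{O(\ddim)}\log(2n),
\]
which is the claimed bound. The remaining work is bookkeeping of constants, and I expect no real obstacle beyond the $\log k$ issue just discussed.
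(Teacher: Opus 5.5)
Your proof is correct and is essentially the paper's argument: bound $|H|\le k$ by \Cref{lem:high-points} and each of the $J+1$ rings by \Cref{lem:one-ring}, then substitute $\tau=\varepsilon/(4(c+1))$. The only difference is that you state explicitly the assumption $k\le n$ needed to write $J+1=O(\log(2n))$, which the paper uses silently; the assumption is harmless, since for $k>n$ you have $|S|\le |X|=n<k$ directly.
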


\begin{proof}
By \cref{lem:high-points}, $|H|\le k$.  By \cref{lem:one-ring}, every scale
ring contributes at most
\[
    3k\left(\frac{C}{\tau}\right)^{\ddim}
\]
representatives.  There are
\[
    J+1\le 2+\log_2(kn^2)=O(\log n)
\]
rings.  Substituting $\tau=\varepsilon/(4(c+1))$ proves the claim.
\end{proof}

\subsubsection{Weighted Implementation and Composition}

We use the density-adaptive construction on weighted instances only when the
weights are integer multiplicities.  For a weighted support \((Y,w)\), let
\(Y^{(w)}\) be the multiset containing \(w(y)\) copies of each \(y\in Y\), and
let
\[
    N=\sum_{y\in Y}w(y).
\]
The quantities \(\Gamma_m(y)\), \(\gamma_m(y)\), and \(\rho(y)\) are those
obtained by applying the ordinary density-adaptive construction to
\(Y^{(w)}\): nearest-neighbor sums are taken over copies, with
\(2\le m\le N\), while the representative set is stored only on the support
\(Y\).

\begin{lemma}[Weighted density-adaptive construction]
\label{lem:weighted-density-adaptive}
Let \((Y,w)\) be a weighted metric instance with integer multiplicity weights
of total weight \(N\) and doubling dimension at most \(\ddim\).  Suppose that
\[
    \OPT_{Y,w}\le \widehat{\OPT}\le c\OPT_{Y,w}.
\]
Then the weighted density-adaptive construction gives an
\(\varepsilon\)-movement coreset of \((Y,w)\) of size
\[
    k\left(\frac{c+1}{\varepsilon}\right)^{O(\ddim)}\log(2N).
\]
Moreover, its movement rule satisfies the movement-charge bound of
\Cref{lem:density-movement-charge}, with \(Y^{(w)}\) in place of \(X\).
\end{lemma}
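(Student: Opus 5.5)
The plan is to reduce the weighted statement to the unweighted analysis already carried out, applied to the multiset \(Y^{(w)}\). All copies of a support point \(y\) share the same location, so they share \(\Gamma_m\), \(\gamma_m\) and \(\rho\); these are computed over copies with \(2\le m\le N\). Hence the density-adaptive construction on \(Y^{(w)}\) can be taken to choose its nets only among support points. Concretely, \(H\) and each ring \(R_j\) are unions of full copy-classes, and a \((\tau r_j)\)-net of \(R_j\) may be chosen among distinct locations, since duplicate copies at distance \(0\) cannot both belong to a packing. So the representative set \(S\subseteq Y\) is exactly what the unweighted construction would produce on \(Y^{(w)}\). The nearest-neighbor rule sends all copies of \(y\) to the same \(\pi(y)\), and the weight \(w'(s)=\sum_{y:\pi(y)=s}w(y)\) equals \(|\pi^{-1}(s)|\) in the multiset.

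Next I would re-check the preservation argument. \Cref{lem:movement-charge}, \Cref{lem:one-cluster-preservation} and \Cref{lem:density-preservation} use only three ingredients: the triangle inequality, the observation that the \(m\) nearest elements to \(x\) are no farther in total than any \(m\) elements, and \(\widehat{\OPT}\le c\OPT\). Each of these holds verbatim for a multiset with pseudometric zero distances between copies. Every \(k\)-partition of \((Y,w)\), including one that splits copies, is a \(k\)-partition of \(Y^{(w)}\), and its weighted cost equals the multiset cost. Therefore \(\cost(\pi(\calP))\in(1\pm\varepsilon)\cost(\calP)\) for all partitions, which gives the \(\varepsilon\)-movement coreset. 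The movement-charge bound of \Cref{lem:density-movement-charge} transfers in the same way. There is one subtlety: in that proof, \(d_x>0\) implies \(\gamma_2(x)\le d_x/2\) because \(\pi(x)\) is a distinct point. With copies, the nearest neighbor may be a copy at distance \(0\), and then the bound \(\gamma_2(x)\le d_x/2\) holds trivially. Either way, \(d_x\le\tau\widehat{\OPT}/(2k)\) still holds.

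For the size bound I would verify \Cref{lem:high-points}, \Cref{lem:one-ring} and \Cref{lem:total-size} on \(Y^{(w)}\) with \(n\) replaced by \(N\). In \Cref{lem:high-points}, a point \(y\) with \(w(y)\ge2\) has \(\gamma_2=0\), so it is never in \(H\). The argument that two far-apart points in one optimal cluster exceed \(\OPT\) is unchanged, so \(|H|\le k\). In \Cref{lem:one-ring}, the sets \(G_v\subseteq B(v,4r)\) are now sets of copies; they remain disjoint because the balls around distinct net points are separated. The counting via \(Q\) and Cauchy--Schwarz applies to copies, and the packing bound \Cref{lem:doubling-packing-bound} is applied to distinct support points, whose doubling dimension is at most \(\ddim\). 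The number of rings is \(O(\log(kN^2))\), which yields the size \(k((c+1)/\varepsilon)^{O(\ddim)}\log(2N)\).

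I expect the main obstacle to be bookkeeping rather than a new idea. The delicate points are the degenerate zero distances, which affect the nearest-neighbor step in \Cref{lem:density-movement-charge} and \Cref{lem:high-points}, and making sure the net in each ring is a net of distinct points. The distinctness matters because the \(\tau r_j\)-separation is what drives the packing bound in \Cref{lem:one-ring}. I would handle both by stating explicitly that the construction treats \(Y^{(w)}\) as a pseudometric and selects nets in the support.
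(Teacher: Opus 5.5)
Your route is the paper's: run the unweighted density-adaptive construction on $Y^{(w)}$ with $N$ in place of $n$, take the nets on the support, and compress copies into weights. Your handling of the zero-distance cases is more careful than the paper's. This covers the $\gamma_2$ step of \Cref{lem:density-movement-charge}, the observation that $w(y)\ge 2$ forces $y\notin H$, and the disjointness of the groups $G_v$ in \Cref{lem:one-ring}.

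There is, however, one step you call automatic that is not: the claim that $\widehat{\OPT}\le c\OPT$ ``holds verbatim'' for the multiset. The hypothesis only gives $\widehat{\OPT}\le c\OPT_{Y,w}$, and $\OPT_{Y,w}$ is a minimum over partitions of the support $Y$, i.e.\ over consistent partitions only. The proofs of \Cref{lem:density-preservation} and \Cref{lem:density-movement-charge} use $\widehat{\OPT}\le c\cost(\calP)$ for every $k$-partition $\calP$ of $Y^{(w)}$. You rightly include partitions that split copies, since these are exactly the partitions $\pi_1(\calP)$ that arise in \Cref{lem:bounded-movement-composition} and \Cref{cor:density-after-annular}. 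If the optimum over split partitions were smaller than $\OPT_{Y,w}$, the factor $c$ would not transfer. Then the additive term $\tau\widehat{\OPT}$ could not be absorbed into $\tau c\cost(\calP)$.

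The missing ingredient is \Cref{clm:unsplitting-representatives}: any split partition of $Y^{(w)}$ can be made consistent without increasing its cost. Hence the optimum of $Y^{(w)}$ equals $\OPT_{Y,w}$, and the hypothesis becomes $\OPT(Y^{(w)})\le\widehat{\OPT}\le c\,\OPT(Y^{(w)})$, which is exactly what the unweighted lemmas require. The paper states this step explicitly. The other direction, $\OPT\le\widehat{\OPT}$, used in \Cref{lem:high-points} and \Cref{lem:one-ring}, needs nothing extra, since the multiset optimum is trivially at most $\OPT_{Y,w}$. With this citation added, your argument is complete.
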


\begin{proof}
Apply the unweighted density-adaptive construction to the multiset
\(Y^{(w)}\).  Multiplicities do not change the doubling dimension after
coincident copies are identified. By
\Cref{clm:unsplitting-representatives}, an optimal partition of \(Y^{(w)}\)
can be chosen without splitting coincident copies. Hence its optimum value is
exactly \(\OPT_{Y,w}\). Therefore \Cref{lem:density-preservation},
\Cref{lem:density-movement-charge}, and \Cref{lem:total-size}, applied with
input size \(N\), give the preservation, movement-charge, and size guarantees
for \(Y^{(w)}\).

It remains only to write the result in weighted form.  Since coincident copies
have distance zero, the nets in the construction may be chosen with at most one
representative at each support point.  We store these support representatives
and give each representative the total multiplicity of the copies mapped to it.
With this compressed notation, the moved multiset cost is exactly the weighted
cost.  Thus the same guarantees hold for the weighted instance \((Y,w)\).
\end{proof}

\begin{lemma}[Computing weighted densities]
\label{lem:weighted-density-runtime}
Let \(M=|Y|\).  All values \(\rho(y)\) for the weighted instance \((Y,w)\) can
be computed from the support in time
\[
    O\!\left(M^2(\log M+\log N)\right).
\]
\end{lemma}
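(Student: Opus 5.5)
For each support point \(y\in Y\), I will compute \(\rho(y)\) directly from a sorted list of distances to the support, with multiplicities. The point is that \(\Gamma_m(y)\), the \(m\) nearest copies in \(Y^{(w)}\), consists of whole support points in order of distance plus possibly part of one final support point.

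\textbf{Step 1 (prefix sums).} Fix \(y\). Compute \(\dist(y,z)\) for all \(z\in Y\) and sort them, giving \(z_1=y,z_2,\ldots,z_M\). This costs \(O(M\log M)\) time. Then form the prefix weights
\[
W_i=\sum_{l\le i}w(z_l),
\]
and the prefix distance sums
\[
D_i=\sum_{l\le i}w(z_l)\dist(y,z_l).
\]
For any \(m\) with \(W_{i-1}<m\le W_i\), we have
\[
m\gamma_m(y)=D_{i-1}+(m-W_{i-1})\dist(y,z_i).
\]
So on each block \(m\in(W_{i-1},W_i]\), the function \(\gamma_m(y)\) is an explicit function of \(m\). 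The copies of \(y\) itself have distance zero and are included in \(\Gamma_m\).

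\textbf{Step 2 (minimizing within a block).} Within one block, set \(a=D_{i-1}-W_{i-1}\dist(y,z_i)\) and \(b=\dist(y,z_i)\). Then
\[
\gamma_m=b+a/m.
\]
Since distances are sorted, \(a\le0\), so \(\gamma_m\) is nondecreasing in \(m\) on the block. The term \(\widehat{\OPT}/(km^2)\) is decreasing in \(m\). Hence \(\max\{\gamma_m,\widehat{\OPT}/(km^2)\}\) is unimodal in \(m\) on the integer interval \((W_{i-1},W_i]\cap[2,N]\).

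Its minimum is therefore attained at the crossing point of the two terms or at a block endpoint. I will find it by binary search over \(m\) in \(O(\log N)\) time, comparing the two terms at a point and at its successor. Alternatively, I could solve \(bm^2+am=\widehat{\OPT}/k\) in closed form and check the neighboring integers and the endpoints. I will take the minimum over the \(M\) blocks.

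\textbf{Running time.} Per \(y\), this costs \(O(M\log M+M\log N)\). Over all \(M\) support points, the total is \(O(M^2(\log M+\log N))\).

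\textbf{Main obstacle.} The step needing care is the unimodality argument on each block, i.e., confirming the sign of \(a\). Equivalently, I must confirm that the average distance to the \(m\) nearest copies is nondecreasing in \(m\) (adding a farther copy cannot lower the average). I will prove this directly from the sorting: the newly added copy is at distance at least the current average. With that established, monotonicity of \(\gamma_m(y)\) holds globally in \(m\), not only within blocks.

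This gives a simpler alternative. Since \(\gamma_m\) is nondecreasing and \(\widehat{\OPT}/(km^2)\) is decreasing over all of \([2,N]\), the max is unimodal over the whole range. So a single binary search over \(m\in[2,N]\) works, evaluating \(\gamma_m\) via binary search on the prefix array \(W\) in \(O(\log M)\) time. That costs \(O(\log N\log M)\) per point, which is within budget after the sort. I will use whichever version is cleaner.
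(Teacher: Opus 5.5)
Your proposal is correct and takes essentially the same approach as the paper. For each \(y\), sort the support by distance from \(y\) and use prefix weights and prefix distance sums to write \(\gamma_m(y)\) explicitly on each of the \(M\) multiplicity blocks. On each block \(\gamma_m(y)\) is nondecreasing, because \(D_{q-1}\le P_{q-1}\dist(y,y_q)\), while \(\widehat{\OPT}/(km^2)\) is decreasing, so the minimum of their maximum is found by an \(O(\log N)\) binary search per block. Your extra observation, that \(\gamma_m(y)\) is nondecreasing over all of \([2,N]\), is also correct and lets you use a single binary search per point, but this is only a minor refinement within the same \(O(M^2(\log M+\log N))\) bound.
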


\begin{proof}
Fix \(y\in Y\), and sort the support points as \(y_1,\ldots,y_M\) by increasing
distance from \(y\).  Let
\[
    P_q=\sum_{i\le q}w(y_i),
    \qquad
    D_q=\sum_{i\le q}w(y_i)\dist(y,y_i).
\]
If \(P_{q-1}<m\le P_q\), then the \(m\) nearest copies to \(y\) consist of all
copies of \(y_1,\ldots,y_{q-1}\), together with \(m-P_{q-1}\) copies of \(y_q\).
Therefore
\[
    \gamma_m(y)
    =
    \frac{D_{q-1}+(m-P_{q-1})\dist(y,y_q)}{m}.
\]
Because the support points are sorted by distance from \(y\),
\(D_{q-1}\le P_{q-1}\dist(y,y_q)\), and the displayed expression is
nondecreasing in \(m\).  The term \(\widehat{\OPT}/(km^2)\) is decreasing in
\(m\).  Hence the minimum of their maximum over the interval is attained at an
endpoint or at an integer nearest their crossing point, which can be found by
binary search in \(O(\log N)\) time.
There are \(M\) intervals for this \(y\).  Sorting for all \(y\)'s and scanning
all intervals gives the claimed running time.
\end{proof}

\begin{corollary}[Density-adaptive coreset after annular reduction]
\label{cor:density-after-annular}
Let \((Y,w)\) be the \((\varepsilon/10)\)-movement coreset obtained from the
annular construction applied to \(X\), and suppose that
\[
    \OPT_{Y,w}\le \widehat{\OPT}\le c\OPT_{Y,w}
\]
for a universal constant \(c\).  Run the weighted density-adaptive construction
on \((Y,w)\) with estimate \(\widehat{\OPT}\) and accuracy \(\varepsilon/10\).
If  \(T\) is the resulting representative set, and the final weights are
recounted according to the nearest-neighbor rule \(\nu:X\to T\), then the
resulting weighted set is an \(\varepsilon\)-movement coreset of \(X\) under
\(\nu\), and has size
\[
    k\left(\frac{1}{\varepsilon}\right)^{O(\ddim)}\log n.
\]
If \(M=|Y|\), the additional time after \((Y,w)\) is available is
\[
    O\!\left(M^2(\log M+\log n)\right).
\]
\end{corollary}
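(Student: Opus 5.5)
The plan is to compare the direct nearest-neighbor rule \(\nu:X\to T\) with the composed rule \(\sigma\circ\pi\). Here \(\pi:X\to Y\) is the annular movement rule and \(\sigma:Y^{(w)}\to T\) is the weighted density-adaptive movement rule. Since \(\sigma(\pi(x))\in T\) and \(\nu\) maps to a nearest point of \(T\), we have pointwise
\[
    \dist(x,\nu(x))\le \dist\bigl(x,\sigma(\pi(x))\bigr)\le \dist(x,\pi(x))+\dist\bigl(\pi(x),\sigma(\pi(x))\bigr).
\]
We never compose the coreset guarantees themselves. Instead we work entirely with movement-charge bounds, which are monotone in the movement distances. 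This is what lets us use the recounted weights of \(\nu\) rather than the composed weights.

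\textbf{Step 1: a movement-charge criterion.} For any rule \(\nu\) and any \(k\)-partition \(\calP\), the triangle-inequality argument of \Cref{lem:one-cluster-preservation} gives
\[
    \bigl|\cost(\calP)-\cost(\nu(\calP))\bigr|\le 2\sum_{A\in\calP}|A|\sum_{x\in A}\dist(x,\nu(x)).
\]
So it suffices to bound the right-hand side by \(\varepsilon\cost(\calP)\). Using the pointwise bound above, we split this sum into two parts.
\begin{itemize}
\item \emph{The annular part} \(\sum_A|A|\sum_{x\in A}\dist(x,\pi(x))\). \Cref{lem:annular-movement-charge}, run with accuracy \(\varepsilon/10\), bounds it by \((\varepsilon/10)\cost(\calP)\).
\item \emph{The density part} \(\sum_A|A|\sum_{x\in A}\dist(\pi(x),\sigma(\pi(x)))\). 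This is exactly the movement charge on \(Y^{(w)}\) of the partition \(\pi(\calP)\), viewed as a partition of the multiset of copies (cluster sizes are unchanged, \(|\pi(A)|=|A|\)). By \Cref{lem:weighted-density-adaptive}, with its movement-charge clause from \Cref{lem:density-movement-charge} and the hypothesis on \(\widehat{\OPT}\), it is at most \((\varepsilon/10)\cost(\pi(\calP))\).
\end{itemize}
Since \((Y,w)\) is an \((\varepsilon/10)\)-coreset under \(\pi\), we have \(\cost(\pi(\calP))\le(1+\varepsilon/10)\cost(\calP)\). Altogether the error is at most
\[
    2\Bigl(\tfrac{\varepsilon}{10}+\tfrac{\varepsilon}{10}\bigl(1+\tfrac{\varepsilon}{10}\bigr)\Bigr)\cost(\calP)\le\varepsilon\cost(\calP),
\]
so \(T\) with weights \(|\nu^{-1}(t)|\) is an \(\varepsilon\)-movement coreset of \(X\).

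\textbf{Step 2: size.} \Cref{lem:weighted-density-adaptive} applies with \(N=\sum_y w(y)=n\) and constant \(c\), giving \(|T|\le k(1/\varepsilon)^{O(\ddim)}\log(2n)\).

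\textbf{Step 3: time.} First compute all \(\rho(y)\) by \Cref{lem:weighted-density-runtime} in \(O(M^2(\log M+\log n))\) time. Then bucket the support points into the \(O(\log n)\) rings and greedily build each \(\tau r_j\)-net on the support by pairwise scanning. The rings partition the support, so this takes \(O(M^2)\) in total. Note that coincident copies share a support point, so nets over the multiset coincide with nets over the support.

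\textbf{Main obstacle.} The delicate step is the density part of Step 1. The density-adaptive charge must be applied to \(\pi(\calP)\) as a partition of the multiset \(Y^{(w)}\), even though copies of one support point may land in different parts. \Cref{lem:density-movement-charge} holds for every \(k\)-partition of the multiset, including ones that split copies, so I expect this to go through. It is still the point that needs to be checked carefully.

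The weight recount under \(\nu\) is a single nearest-neighbor pass from \(X\) to \(T\). I would account for it as part of emitting the final weights, that is, the standard pull-back pass, rather than in the stated additional time, which concerns constructing \(T\) from \((Y,w)\).
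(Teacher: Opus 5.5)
Your proof is correct and follows the paper's argument. The paper bounds the movement charge of $\sigma\circ\pi$ by $\tfrac{\varepsilon}{10}+(1+\tfrac{\varepsilon}{10})\tfrac{\varepsilon}{10}$ via \Cref{lem:bounded-movement-composition}, then passes to $\nu$ via \Cref{lem:nearest-neighbor-reweighting}; your Step~1 is these two lemmas inlined, with the factor $2$ from the triangle inequality tracked more carefully than in the paper, and your size and time bounds come from the same \Cref{lem:weighted-density-adaptive} (with $N=n$) and \Cref{lem:weighted-density-runtime}, while the split-copies issue you flag is harmless because \Cref{lem:movement-charge} applies to arbitrary sub-multisets of $Y^{(w)}$.
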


\begin{proof}
Let \(\pi:X\to Y\) be the annular movement rule.  The annular construction
has the movement-charge bound of \Cref{lem:annular-movement-charge}.  Applying
\Cref{lem:weighted-density-adaptive} to \((Y,w)\) gives a weighted set
\((T,u')\) with movement rule \(\sigma:Y\to T\), and gives the corresponding
movement-charge bound for \(\sigma\).  Let \(\phi=\sigma\circ\pi:X\to T\).
By \Cref{lem:bounded-movement-composition}, the movement-charge coefficient of
\(\phi\) is at most
\[
    \frac{\varepsilon}{10}
    +
    \left(1+\frac{\varepsilon}{10}\right)\frac{\varepsilon}{10}
    <
    \varepsilon.
\]
Thus \(\phi\) satisfies the hypothesis of
\Cref{lem:nearest-neighbor-reweighting}.

Since \(T\subseteq Y\subseteq X\), let \(\nu:X\to T\) be the nearest-neighbor
map with fixed tie-breaking, and set \(u(t)=|\nu^{-1}(t)|\). Since
\(\phi(x)\in T\), for every \(x\in X\),
\[
    \dist(x,\nu(x))
    \le
    \dist(x,\phi(x)).
\]
Hence, by \Cref{lem:nearest-neighbor-reweighting}, \((T,u)\) is an
\(\varepsilon\)-movement coreset of \(X\) under \(\nu\).

The size bound follows from \Cref{lem:weighted-density-adaptive}, because the
total weight of \((Y,w)\) is \(n\).  Finally,
\Cref{lem:weighted-density-runtime} gives the displayed additional running
time.
\end{proof}

\subsubsection{Running Time}

The direct density-adaptive implementation computes all pairwise distances and
therefore takes \(O(n^2\log n)\) time.  The near-linear implementation in
\Cref{thm:density-adaptive} avoids this by first applying the annular
construction and then running the weighted density-adaptive construction only
on the resulting small weighted instance.

Let \(M=|Y|\) be the size of the annular coreset.  By
\Cref{thm:annular-coreset}, $M\le \left(\frac{k}{\varepsilon}\right)^{O(\ddim)}\log n .$

\Cref{lem:weighted-density-runtime} computes the weighted density values in $O\!\left(M^2(\log M+\log n)\right)$ time, and the net construction on the support takes no more, thus the total time after the annular
construction is $\left(\frac{k}{\varepsilon}\right)^{O(\ddim)}\operatorname{polylog} n .$

Together with the near-linear annular construction and the Czumaj--Sohler
polylogarithmic-factor approximation algorithm, this gives a randomized
near-linear-time implementation in the original input size \(n\).

\subsubsection{Proof of the Density-Adaptive Coreset Theorem}

\DensityAdaptiveCoreset*

\begin{proof}
The existential statement follows from the density-adaptive construction
analyzed above, using any value \(\widehat{\OPT}\) satisfying
\(\OPT\le\widehat{\OPT}\le O(\OPT)\).  \Cref{lem:density-preservation} and
\Cref{lem:total-size} give an \(\varepsilon\)-movement coreset of size
\[
    k\left(\frac{1}{\varepsilon}\right)^{O(\ddim)}\log n .
\]

For the randomized implementation, run the polylogarithmic-factor approximation
algorithm of Czumaj and Sohler~\cite{CzumajSohler2010}, and let \(A\) be twice
the cost of the partition it returns. With high probability,
\(2\OPT\le A\le 2L\OPT\), where \(L=(\log n)^{O(1)}\). Condition on this event.
First construct an \((\varepsilon/20)\)-movement annular coreset \((Y,w)\) of
\(X\) using \Cref{thm:annular-coreset}. Since \((Y,w)\) is an
\((\varepsilon/20)\)-movement coreset and \(\varepsilon<1/2\),
\[
    \OPT_{Y,w}\le A\le 4L\OPT_{Y,w}.
\]

For \(i=0,\ldots,\lceil\log_2(4L)\rceil\), set \(A_i=A/2^i\). On the same
weighted instance \((Y,w)\), run the weighted density-adaptive construction
with estimate \(A_i\) and accuracy \(\varepsilon/20\), aborting and discarding
this run if its number of representatives exceeds the bound of
\Cref{lem:weighted-density-adaptive} for a universal constant \(c\). Let \(T\) be the representative set of the last run that meets
the size bound, with index \(j\). Some \(i^\star\) satisfies
\(\OPT_{Y,w}\le A_{i^\star}<2\OPT_{Y,w}\).
By \Cref{lem:weighted-density-adaptive}, this run meets
the size bound and is therefore not discarded.
Hence \(j\ge i^\star\) and \(A_j<2\OPT_{Y,w}\).
The proof of \Cref{lem:density-movement-charge} uses only
this upper bound, so its bound also holds for the selected run.
The lower bound on the estimate is needed only for
\Cref{lem:total-size}; here, the size bound holds
by the choice of \(T\).
Applying \Cref{lem:bounded-movement-composition} and \Cref{lem:nearest-neighbor-reweighting} gives, after reweighting, an \(\varepsilon\)-movement coreset of \(X\) of the claimed size.   The annular construction is near-linear
in \(n\), and the weighted density computations are performed only on the
annular coreset, whose size is \((k/\varepsilon)^{O(\ddim)}\log n\).  By
\Cref{lem:weighted-density-runtime}, the \(O(\log L)\) weighted calls add only
\((k/\varepsilon)^{O(\ddim)}\operatorname{polylog} n\) time.  Thus the
implementation is near-linear in the original input size \(n\).
\end{proof}

\subsection{A Lower Bound for Movement Coresets}

We next show that the logarithmic dependence on \(n\) in the above movement-coreset bounds is unavoidable, already on the line and for \(k=2\). This logarithmic dependence is perhaps surprising in light of the fact that analogous coresets for \(k\)-median in Euclidean metrics can have size independent of \(n\) \cite{BravermanFLSY17}.

\begin{proposition}[A logarithmic lower bound for movement coresets]
\label{prop:movement-logn-necessary}
There are one-dimensional Euclidean instances of constant doubling dimension for which every \(1/100\)-movement coreset for min-sum \(2\)-clustering has size \(\Omega(\log n)\), where \(n\) is the input size.
\end{proposition}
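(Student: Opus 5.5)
The plan is to mimic the multiscale obstruction from the proof of \Cref{thm:lower-bounds}. I would place a heavy cluster of coincident points at the origin and single points at exponentially growing distances. I would then show that any movement-based coreset must send these single points to pairwise distinct representatives. As in that proof, multiplicities can be simulated by distinct points placed arbitrarily close together. Fix a large integer \(L\) and let
\[
    X=\{0\}^{M}\cup\{2^i: i\in[L]\}\subseteq\R,
    \qquad M=L^3 2^{L}.
\]
Then \(n=M+L\), so \(L=\Theta(\log n)\). The doubling dimension is constant, since \(X\subseteq\R\). Write \(T=\sum_{i\ne j\in[L]}|2^i-2^j|\le L^2 2^{L+1}\). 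The key point is that \(M\) dominates \(T\) by a factor of order \(L\).

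Let \((S,w)\) be a \(1/100\)-movement coreset with movement rule \(\pi\). \emph{Step 1: all origin copies move to a single point.} Consider the partition \(\{\{0\}^M,\{2^i:i\in[L]\}\}\), whose cost is \(T\). If two origin copies had distinct images \(p\ne p'\), these would be distinct points of \(X\), at distance at least \(1\) apart. Let \(a\) and \(b\) be the numbers of origin copies mapped to \(p\) and to \(p'\), respectively, and note \(a+b\le M\). Summing over the at least \(M-1\) pairs of copies whose images differ (for instance, all pairs involving the smaller group), the moved cost of the first cluster is at least of order \(M\). This contradicts \(M\gg T\); I would do this counting with some care. Hence \(\pi(\{0\}^M)=\{p\}\) for a single point \(p\).

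\emph{Step 2: pin down \(|p-\pi(2^i)|\).} For each \(i\in[L]\), consider the partition \(C_1=\{0\}^M\cup\{2^i\}\) and \(C_2=\{2^j:j\ne i\}\). Its original cost is \(2M2^i+T_i\), where \(T_i\le T\le \frac{2}{L}M\le\frac{2}{L}M 2^i\). The moved cost is \(2M|p-\pi(2^i)|+T_i'\), where \(0\le T_i'\) and \(T_i'\le\sum_{j,j'\ne i}|\pi(2^j)-\pi(2^{j'})|\). The latter is at most \((L-1)^2\cdot 2\cdot 2^{L}\), since all points lie in \([0,2^L]\). So \(T_i'=O(M2^i/L)\) as well. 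The coreset guarantee then gives
\[
    |p-\pi(2^i)|\in(1\pm 1/50)\,2^i
\]
for \(L\) large enough.

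\emph{Step 3: conclude.} If \(\pi(2^i)=\pi(2^{i'})\) for some \(i<i'\), then Step 2 forces \(2^{i'}(1-1/50)\le 2^i(1+1/50)\). This is impossible since \(2^{i'}\ge 2\cdot 2^i\). Therefore the \(L\) points \(\pi(2^i)\) are pairwise distinct elements of \(S\), which gives \(|S|\ge L=\Omega(\log n)\).

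The main obstacle is Step 1, namely controlling the images of the origin copies. The movement rule is arbitrary, and the origin copies may in principle be scattered onto the far points. The argument above rules this out using only that distinct points of \(X\) are at distance at least \(1\), together with \(M\gg T\). If the constant bookkeeping fails there, I would increase \(M\) to a larger multiple of \(L^2 2^L\), which still keeps \(L=\Theta(\log n)\).
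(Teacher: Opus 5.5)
Your proof is correct and follows essentially the paper's argument. First, the partition that isolates the origin copies forces them all onto a single representative. Then the partitions $\{0\}^M\cup\{x_i\}$ versus the remaining points pin each $\dist(\pi(x_i),z)$ to within a constant factor of its geometrically spaced radius, which makes the $\pi(x_i)$ $L$ distinct representatives. The differences are cosmetic: you use increasing scales $2^i$ with $M=L^3 2^L$ (the one-dimensional version of the paper's $\Omega(\log\log n)$ JL example) rather than decreasing scales $2^{-i}$ with $L=\lfloor\frac12\log_2 M\rfloor$, and you conclude directly for every $i$ rather than by contradiction from $|S|<L$.
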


\begin{proof}
Let \(M\) be large, let
\[
    L=\left\lfloor \frac12\log_2 M\right\rfloor,
    \qquad
    r_i=2^{-i},
\]
and consider the multiset
\[
    X=\{0_d\}^M\cup\{x_i=r_i e_1:\ i\in[L]\}.
\]
This instance lies on a line, hence has constant doubling dimension, and its
size is \(M+L=\Theta(M)\).  We prove that no movement coreset \((S,w)\) with
\(|S|<L\) can preserve all \(2\)-partitions within a small constant factor.
Fix any movement rule \(\pi\) for such a coreset.

First, all copies of \(0_d\) must be mapped to the same representative.
Otherwise, the partition separating the zero copies from the nonzero points has
original zero cost on the zero cluster, but after applying \(\pi\) that cluster
has positive cost.  The cost of the other cluster is at most \(L^2\), whereas
the represented zero-cluster cost is at least \(2(M-1)2^{-L}\), so for large
\(M\) this already violates the movement guarantee.  Let \(z\in S\) be the
common representative of the zero copies.

We now look at the distance from each representative \(\pi(x_i)\) to \(z\).  We
claim that for some \(i\), this distance differs from the original distance
\(\dist(x_i,0_d)=r_i\) by more than \(r_i/4\).  Indeed, suppose otherwise.
Then for every \(i\),
\[
    \dist(\pi(x_i),z)\in\left[\frac34 r_i,\frac54 r_i\right].
\]
These intervals are disjoint: for consecutive radii,
\(\frac54 r_{i+1}=\frac58 r_i<\frac34 r_i\).  Hence the \(L\) numbers
\(\dist(\pi(x_i),z)\) are all distinct, and therefore the representatives
\(\pi(x_i)\) are \(L\) distinct points of \(S\).  This contradicts \(|S|<L\).
Thus there is some \(i\in[L]\) such that
\[
    \left|\dist(\pi(x_i),z)-r_i\right|>\frac14 r_i .
\]

Consider the \(2\)-partition \(\calP_i=\{A_i,B_i\}\), where
\[
    A_i=\{0_d\}^M\cup\{x_i\},
    \qquad
    B_i=X\setminus A_i .
\]
The contribution of \(A_i\) to the original cost is \(2Mr_i\), while its
contribution after applying \(\pi\) is \(2M\dist(\pi(x_i),z)\).  Therefore the
error from this part alone is larger than \(Mr_i/2\).  Both the original and the
projected costs of \(B_i\) are at most \(L^2\), since \(B_i\) has at most \(L\)
points and diameter at most \(1\).  Hence
\[
    \left|
    \cost(\calP_i)
    -
    \cost(\pi(\calP_i))
    \right|
    \ge
    \frac12 Mr_i-2L^2.
\]
As \(i\le L\), we have \(Mr_i\ge M2^{-L}\ge\sqrt M\), and thus for large enough
\(M\) the right-hand side is at least \(Mr_i/4\).  On the other hand,
\[
    \cost(\calP_i)\le 2Mr_i+L^2\le 3Mr_i .
\]
The relative error is therefore at least \(1/12\), contradicting the
\(1/100\)-movement guarantee.  Since \(L=\Theta(\log |X|)\), the claimed
lower bound follows.
\end{proof}

\subsection{Streaming Applications}

The coreset constructions also imply two streaming applications.  For insertion-only
streams, we combine the two-pass streaming algorithm of Czumaj and
Sohler~\cite{CzumajSohler2010} with the annular construction.  The annular
construction also gives a multipass dynamic streaming algorithm by repeatedly
improving a coarse reference solution and then building the final coreset.

\ThreePassAnnularStreaming*

\begin{proof}
Run the two-pass streaming algorithm of Czumaj and
Sohler~\cite{CzumajSohler2010}.  It returns a constant-factor solution for
balanced \(k\)-median, and hence, by the equivalence between balanced
\(k\)-median and min-sum used above, a constant-factor reference solution for
min-sum.  Let \(c_1,\ldots,c_k\) be the corresponding centers.  The cost of
this reference solution is at most \(c\OPT\), for a universal constant \(c>0\).

In a third pass, run the annular construction with this reference solution,
using the cost of the reference solution as the reference cost, and using the
same radius sequence, terminal level, and accuracy parameter, with \(c\) as the
approximation constant in \Cref{lem:annular-preservation}.  The
representatives and weights are maintained online; the induced online movement
rule is used only in the analysis of the value estimate.  For every center
\(c_i\) and annular level
\(j\), maintain the corresponding
\(\eta r_j\)-packing \(N_{i,j}\subseteq X\cap B(c_i,r_j)\), together with
weights on the representatives.  When a point \(x\) arrives, consider the
centers \(c_i\) whose outer ball contains \(x\), and let \(\ell_i(x)\) be the
annular level of \(x\) around such a center.  Scan the nets
\(N_{i,\ell_i(x)}\).  If for some scanned net \(x\) is not covered at its scale,
add \(x\) to the first such net in a fixed order, initialize its weight to
\(1\).  If all scanned nets cover \(x\), increase the weight of a nearest
representative \(s\) in their union.  Let \(S\) be the union of all maintained
nets, and denote these online weights by \(w'\).

For the analysis, let \(\pi':X\to S\) be the implicit movement rule induced by
these updates: a point that creates a representative is mapped to that
representative, and a point whose arrival increases the weight of \(s\) is
mapped to \(s\).  This map is not stored by the algorithm.  By construction,
\[
    \forall s\in S\qquad w'(s)=|(\pi')^{-1}(s)|.
\]

Under this implicit rule, a point that creates a representative incurs zero
movement, and every other point is sent to a representative that covers it at
the annular level of a closest reference center.  Hence the movement bound of
\Cref{lem:annular-movement} holds for \(\pi'\), and the proof of
\Cref{lem:annular-preservation} applies unchanged.  Thus \((S,w')\), with
movement rule \(\pi'\), is an \(\varepsilon\)-movement coreset.

The algorithm then computes the optimal weighted min-sum value on \((S,w')\)
and outputs this value as \(\widehat{\OPT}\).  We now show that this is a
\((1\pm\varepsilon)\)-estimate of the original optimum.  By
\Cref{cor:weighted-existence}, there is a partition of \(S\) whose weighted
cost is at most \((1+\varepsilon)\OPT\), and therefore
\[
    \widehat{\OPT}\le (1+\varepsilon)\OPT .
\]
Conversely, let \(\calQ\) be an optimal partition of \((S,w')\), and lift it
to a partition \(\calP\) of \(X\) using \(\pi'\).  Then
\(\cost_{S,w'}(\calQ)=\cost(\pi'(\calP))\), and the movement-coreset guarantee
gives
\[
    \widehat{\OPT}
    =
    \cost_{S,w'}(\calQ)
    \ge
    (1-\varepsilon)\cost(\calP)
    \ge
    (1-\varepsilon)\OPT .
\]

With one additional pass over the stream, keep the same representative set
\(S\) and recount the weights according to the final nearest-neighbor movement
rule \(\pi:X\to S\), namely set \(w(s)=|\pi^{-1}(s)|\).  Since \(\pi(x)\) is a
nearest representative in the final set \(S\), while \(\pi'(x)\in S\), we have
\(\dist(x,\pi(x))\le\dist(x,\pi'(x))\) for every \(x\in X\).  Therefore the
annular movement bound also holds for \(\pi\), and by \Cref{lem:nearest-neighbor-reweighting}, the bound in  \Cref{lem:annular-preservation} applies to the recounted weighted set
\((S,w)\).  This proves the stated coreset guarantee after the additional pass.

Since each \(N_{i,j}\) is still an \(\eta r_j\)-packing in \(B(c_i,r_j)\),
\Cref{lem:annular-size} gives
\[
    |S|
    \le
    \left(\frac{k}{\varepsilon}\right)^{O(\ddim)}\log n .
\]
Adding the \(\operatorname{poly}(k,\log n)\) space used by the Czumaj--Sohler streaming
algorithm gives the claimed space bound.  The guarantees above are conditioned
on the success event of the Czumaj--Sohler primitive, which holds with high
probability; conditional on this event, the remaining steps are deterministic.
\end{proof}
\paragraph{Grid version of the annular construction.}
For the dynamic and Euclidean setting we use the same annular construction, but replace the nets by fixed grids.  This makes the set of possible representatives independent
of the active point set, which is necessary when points may later be deleted.

Let
\[
    \calR=\{R_1,\ldots,R_k\}
\]
be a reference partition with centers \(c_1,\ldots,c_k\), and let
\(\widehat{\OPT}=\cost(\calR)\).  As before, set
\[
    r_0=20\widehat{\OPT},
    \qquad
    r_j=\frac{r_0}{2^j},
    \qquad
    j=0,\ldots,L,
\]
where \(L=\lceil \log_2(40n^{10})\rceil\).  For each \(i\in[k]\) and level
\(j\), take a grid around \(c_i\) with side length
\[
    \ell_j=\frac{\eta r_j}{\sqrt d}.
\]
Let \(S\) contain all grid points whose cells intersect \(B(c_i,r_j)\), over all
\(i\) and \(j\).  Let \(\pi:X\to S\) be the nearest-grid movement rule,
which maps each point \(x\in X\) to the closest grid point \(\pi(x)\in S\).

The same movement bound as in the annular construction holds.
Consequently, all pairwise movement estimates from the annular proof remain
unchanged.  The only difference is the size bound.  At each level and for each
center, the number of grid points is at most
\[
    \left(\frac{Ck\sqrt d}{\varepsilon}\right)^d
\]
for a universal constant \(C\).  Therefore,
\[
    |S|
    \le
    k\left(\frac{Ck\sqrt d}{\varepsilon}\right)^d\log n .
\]
The advantage of this grid version is that insertions and deletions only change
the weights of fixed representatives: an insertion increases the weight of
\(\pi(x)\), and a deletion decreases the weight of the same representative.
\MultiPassDynamicStreaming*
\begin{proof}
We first apply the dynamic \(k\)-median coreset construction
of~\cite{BravermanFLSY17} to the original stream on \([\Delta]^d\), with
constant accuracy. We compute a constant-factor \(k\)-median solution
\(C=\{c_1,\ldots,c_k\}\) from the resulting coreset. If \(D\) is its $k$-median cost
on \(X\), then \(D\le O(1)\OPT_{\mathrm{med}}\le O(1)\OPT\).

We now choose the Gaussian map \(f\) with target dimension
\[
    t=O\left(
            \varepsilon^{-2}\bigpar{
            \ddim\log(1/\varepsilon)+\log k+\log\log n}
    \right).
\]
For each \(x\in X\), let \(\sigma(x)\in C\) be the closest center to \(x\) in the
original metric.  Since \(C\) and \(\sigma\) are fixed before \(f\) is sampled,
a standard fixed-solution JL estimate gives, with high probability,
\[
    \sum_{x\in X}\|f(x)-f(\sigma(x))\|
    \le
    O(1)\sum_{x\in X}\|x-\sigma(x)\|
    \le
    O(nD).
\]
Thus the projected centers \(f(C)\) have \(k\)-median cost \(O(nD)\) for the
embedded instance.  If \(\calP_0\) is the partition of \(f(X)\) induced by the
nearest center in \(f(C)\), then by the triangle inequality,
\[
    \cost(\calP_0)
    \le
    2n\sum_{x\in X}\dist(f(x),f(C))
    \le
    O(n^2D)
    \le
    O(n^2)\OPT .
\]
By \Cref{thm:main-jl-ddim}, the optimum min-sum cost in the embedded instance is
within a constant factor of \(\OPT\).  Hence \(f(C)\) gives a sufficiently
coarse reference solution for the embedded instance.

We now run the grid-based annular construction on the embedded instance in
\(\mathbb R^t\).  The embedded points are not necessarily contained in a finite
integer grid, but this causes no problem: the dynamic \(k\)-median algorithm was
used only before the projection, on the original universe \([\Delta]^d\).  Once
the reference centers are fixed, the grid construction defines a finite set of
representatives in \(\mathbb R^t\), and subsequent passes over the original
stream only update the weights of these fixed representatives.

The movement inequalities imply an improvement recurrence of the form
\[
    \cost(\calP_{r+1})
    \le
    O(1)\OPT+\frac12\cost(\calP_r).
\]
After \(O(\log n)\) passes, this gives a constant-factor approximate reference
partition.

In one additional pass, run the same grid construction with movement parameter
\[
    \eta=\Theta\left(\frac{\varepsilon}{k}\right).
\]
The movement guarantee then gives, for every partition \(\calP\) of the final
active set \(X\) into at most \(k\) parts,
\[
    \left|
    \cost(\calP)
    -
    \cost(\pi(\calP))
    \right|
    \le
    \varepsilon\cost(\calP).
\]
Since the grid construction is now applied in dimension \(t\), it uses at most
\[
    \left(\frac{Ck\sqrt t}{\varepsilon}\right)^t
\]
representatives per center and per level.  There are \(k\) centers and
\(O(\log n)\) levels, giving total space
\[
    k\left(\frac{Ck\sqrt t}{\varepsilon}\right)^t\log n .
\]
After adjusting the constant hidden in the definition of \(t\), this is bounded
by
\[
    \left(\frac{k t}{\varepsilon}\right)^t\log n .
\]
Together with the initial constant-factor \(k\)-median computation, the total
number of passes is \(O(\log n)\), and the space is
\[
    \operatorname{poly}(d,k,\log(n))+\left(\frac{k t}{\varepsilon}\right)^t\log n .
\]
As in the insertion-only case, solving the final weighted instance gives a value
\(\widehat{\OPT}\in(1\pm\varepsilon)\OPT\).  Lifting an optimal weighted
partition back to \(X\) through \(\pi\) gives a \(k\)-partition
\(\widehat{\calP}\) with
\(\cost(\widehat{\calP})\le(1+O(\varepsilon))\OPT\), and rescaling
\(\varepsilon\) gives the stated bound.
\end{proof}

\section{Coreset Properties}
\label{sec:coreset-attributes}

Movement coresets compress the input while approximately preserving the cost of every
partition.  This raises several
questions beyond their construction:
whether optimization can be performed directly on the weighted instance,
whether the same constructions extend to related objectives, and whether
the guarantees remain valid when coresets or movement rules are combined. The results in this section demonstrate that movement-coresets
support substantially more than partition-cost preservation.

Our results in this section rely on two different coreset guarantees.
Some use only the fact that the cost of every partition is preserved,
whereas others require explicit control of the movement (distances between input points
and their representatives). Since a point \(x\) in a part \(A\) participates in
\(|A|-1\) pairwise distances, the effect of moving it is
naturally controlled by the quantity
\(|A|\sum_{x\in A}\dist(x,\pi(x))\).  This motivates the notion of a
\(\epsilon\)-bounded-movement coreset, defined below.  Our coreset constructions
satisfy this bounded-movement guarantee.  The triangle inequality converts this guarantee into the usual movement-coreset guarantee, while the explicit control of the
movements is useful for several additional applications.  We then show that this stronger guarantee remains useful across related
clustering settings and under natural coreset operations. 
We show that optimization does not require splitting representative weights
(\Cref{clm:unsplitting-representatives} and \Cref{cor:weighted-existence}), extend the
constructions to related objectives
(\Cref{lem:powered-density-coreset} and \Cref{prop:shifted-centers-balanced-kmeans}), and
determine when coresets can be combined or reweighted without losing their
guarantees
(\Cref{lem:color-balanced-mergeability}, \Cref{lem:movement-composition} and \Cref{lem:nearest-neighbor-reweighting}).
\begin{definition}[Bounded-movement coreset]
\label{def:bounded-movement-coreset}
A weighted set \((S,w)\), with \(S\subseteq X\), is a
\emph{\(\epsilon\)-bounded-movement coreset} if there is a map \(\pi:X\to S\),
consistent with the weights, such that every \(k\)-partition \(\calP\) of
\(X\) satisfies
\begin{gather}
    \sum_{A\in\calP}|A|\sum_{x\in A}\dist(x,\pi(x))
    \le
    \epsilon\cdot\cost(\calP). \label{eq:movement_cost_bound}
\end{gather}
It is sometimes useful to restrict this requirement to a specified set of feasible
partitions.
\end{definition}

This bound controls, for every partition, the distances between the input
points and their representatives relative to the cost of the partition.  By the
triangle inequality, every \(\varepsilon\)-bounded-movement coreset is a
\(2\varepsilon\)-movement coreset.

\subsection{Unsplitting and Optimization}

A \(k\)-partition of the multiset represented by \((S,w)\) may assign different
copies of the same representative to different clusters. We call a partition
\emph{consistent} if all copies of each \(s\in S\) belong to the same part.
Such a partition can be identified with a partition of \(S\).
The next claim shows that consistency can be imposed without increasing
the min-sum cost.

\begin{claim}[Unsplitting representatives]
\label{clm:unsplitting-representatives}
Let \((S,w)\) be a weighted set with integer weights. Suppose we partition
the multiset represented by \((S,w)\) into at most \(k\) parts, possibly
splitting copies of the same \(s\in S\) between different parts. Then there
is a consistent partition into at most \(k\) parts, of no larger cost.
\end{claim}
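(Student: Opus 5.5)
We argue by a local exchange step combined with a potential argument. For a (possibly split) partition, let $n_i(s)$ denote the number of copies of $s\in S$ in part $i$, so $\sum_i n_i(s)=w(s)$. The cost is
\[
    F(n)=\sum_{i}\sum_{s,t\in S} n_i(s)\,n_i(t)\,\dist(s,t).
\]
We fix one representative $s$ that is split, and freeze the assignment of all other copies. For each part $i$ we set $D_i(s)=\sum_{t\neq s} n_i(t)\dist(s,t)$, which is the total distance from a copy of $s$ to the other points of part $i$. Since $\dist(s,s)=0$, the copies of $s$ contribute nothing among themselves. Viewed as a function of the vector $(n_i(s))_i$ alone, the cost is
\[
    F=\mathrm{const}+2\sum_i n_i(s)\,D_i(s),
\]
which is linear in $(n_i(s))_i$. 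We therefore move all $w(s)$ copies of $s$ to a part $i^*$ minimizing $D_i(s)$ among the parts that currently contain copies of $s$. This does not increase the cost, since $\sum_i n_i(s)D_i(s)\ge w(s)\min_i D_i(s)$. It also introduces no new parts, so the number of parts stays at most $k$.

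After this step, $s$ is no longer split. The step changes only the counts $n_i(s)$, so representatives that were already consistent stay consistent. We then iterate over the representatives one at a time. Each step reduces the number of split representatives by one and never increases the cost. After at most $|S|$ steps the partition is consistent, with cost at most the original cost, and with at most $k$ nonempty parts. Parts that become empty are simply dropped, which is allowed because the claim requires only at most $k$ parts.

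The only point needing care is the identity that the cost is linear in the multiplicities of a single representative. This holds because the self-distance $\dist(s,s)=0$ kills the quadratic term $n_i(s)^2\dist(s,s)$. It is also important that the other points' counts are held fixed, so that each $D_i(s)$ is a constant during the step. Once this is checked, the rest is routine.
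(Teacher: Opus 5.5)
Your proof is correct and takes the same approach as the paper: move all copies of a split representative to the part minimizing their contribution, which is linear in the copy counts because $\dist(s,s)=0$, and repeat for each representative. You additionally make explicit the linearity identity and the termination count, which the paper leaves implicit.
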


\begin{proof}
If the copies of some \(s\in S\) are split between several parts, move all of
them to the part that minimizes their contribution to the cost. Since copies of
\(s\) are at distance zero from one another, this minimization is based on the
interaction of \(s\) with (copies of) other points in $S$.
Repeating this for every split representative gives the claim.
\end{proof}

Consequently, optimizing over consistent partitions of \((S,w)\) yields
a partition of \(X\) whose cost is comparable to the optimal min-sum \(k\)-clustering cost of \(X\).

\begin{corollary}[Optimization on a movement coreset]
\label{cor:weighted-existence}
Let \((S,w)\) be an \(\varepsilon\)-movement coreset with movement rule
\(\pi:X\to S\). There exists a partition \(\calQ^*\) of \(S\) into at most
\(k\) parts such that
\[
    \cost_{S,w}(\calQ^*)\le (1+\varepsilon)\OPT .
\]
\end{corollary}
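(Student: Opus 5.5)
Take an optimal partition $\calP^\star$ of $X$ with $\cost(\calP^\star)=\OPT$ and push it forward through the movement rule $\pi$. Applying the movement-coreset guarantee to $\calP^\star$ gives
\[
    \cost(\pi(\calP^\star))\le(1+\varepsilon)\cost(\calP^\star)=(1+\varepsilon)\OPT .
\]
Here $\pi(\calP^\star)$ is a partition of the multiset represented by $(S,w)$ into at most $k$ parts. The copies of a single representative $s\in S$ correspond to the points of $\pi^{-1}(s)$, and there are exactly $w(s)$ of them because $\pi$ is consistent with the weights. These points may lie in different parts of $\calP^\star$, so $\pi(\calP^\star)$ can split copies of $s$ across parts.

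We therefore invoke \Cref{clm:unsplitting-representatives}. It produces a consistent partition into at most $k$ parts whose min-sum cost on the multiset is no larger than $\cost(\pi(\calP^\star))$. A consistent partition is naturally identified with a partition $\calQ^*$ of $S$.

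It remains to check that the multiset cost of this consistent partition equals $\cost_{S,w}(\calQ^*)$. Grouping the ordered pairs of copies by their representatives, each pair of representatives $(s,t)$ in a common part $Q$ contributes exactly $w(s)w(t)\dist(s,t)$. Pairs of copies of the same $s$ contribute $0$, which matches the term $w(s)^2\dist(s,s)=0$. Chaining the inequalities yields
\[
    \cost_{S,w}(\calQ^*)\le\cost(\pi(\calP^\star))\le(1+\varepsilon)\OPT .
\]

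The argument is elementary, and no step is a genuine obstacle. The only points needing care are the bookkeeping identity between the multiset cost and the weighted cost $\cost_{S,w}$, and the fact that unsplitting does not increase the number of parts, which \Cref{clm:unsplitting-representatives} guarantees.
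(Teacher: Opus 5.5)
Your proof is correct and follows essentially the paper's argument: push an optimal partition of $X$ through $\pi$, apply the movement guarantee to get cost at most $(1+\varepsilon)\OPT$, then consolidate split copies via \Cref{clm:unsplitting-representatives}. Your explicit check that the multiset cost of a consistent partition equals $\cost_{S,w}$ is a step the paper leaves implicit.
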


\begin{proof}
Let \(\calP^*\) be an optimal partition of \(X\). The partition \(\pi(\calP^*)\) assigns the copies of each representative in \(S\) to at
most \(k\) parts, and  by the movement guarantee the cost of \(\pi(\calP^*)\) is at most
\((1+\varepsilon)\OPT\). By \Cref{clm:unsplitting-representatives}, we may
consolidate the copies of every representative without increasing the cost,
which gives the desired weighted partition of \(S\).
\end{proof}

\subsection{Costs with Powers of Distances}

Fix a constant \(z\ge1\).  For a set \(A\subseteq X\) and a partition \(\calP\) of $X$, define
\[
    \cost_z(A)=\sum_{x,y\in A}\dist(x,y)^z,
    \qquad
    \cost_z(\calP)=\sum_{A\in\calP}\cost_z(A),
     \qquad 
     \OPT_z=\min_{\calP}\cost_z(\calP).
\]
In this subsection, we use the power-$z$ analogue of
\Cref{def:bounded-movement-coreset} which is obtained by raising each movement
distance to power $z$ and replacing \(\cost\) by \(\cost_z\).

\begin{lemma}[Density-adaptive coresets for distance powers]
\label{lem:powered-density-coreset}
For every \(z\ge1\) and \(0<\varepsilon<1/2\), an adaptation of our density-adaptive construction produces a coreset of size
\[
    k\left(\frac{z}{\varepsilon}\right)^{O(\ddim)}\log n
\]
whose movement rule $\pi$ satisfies, for every \(k\)-partition \(\calP\),
\[
    \sum_{A\in\calP}|A|
    \sum_{x\in A}\dist(x,\pi(x))^z
    \le
    \left(\frac{\varepsilon}{16z}\right)^z\cost_z(\calP).
\]
Thus, the resulting coreset satisfies the power-\(z\) analogue of
\Cref{def:bounded-movement-coreset} with
\(\epsilon^\prime=(\varepsilon/(16z))^z\). Consequently, it preserves \(\cost_z\) within
a factor \(1\pm\varepsilon\) for every \(k\)-partition.
\end{lemma}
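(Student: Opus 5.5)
We will rerun the density-adaptive construction with every distance raised to the power $z$. Concretely, we define
\[
    \gamma^{(z)}_m(x)=\frac1m\sum_{v\in\Gamma_m(x)}\dist(x,v)^z,
    \qquad
    \rho_z(x)=\min_{2\le m\le n}\max\left\{\gamma^{(z)}_m(x),\frac{\widehat{\OPT}_z}{km^2}\right\},
\]
where $\OPT_z\le\widehat{\OPT}_z\le c\,\OPT_z$. Here $\rho_z(x)$ is a scale for the \emph{$z$-th power} of the movement. Points with $\rho_z(x)>\widehat{\OPT}_z$ form the set $H$ and are kept exactly. The remaining points are bucketed into rings $R_j=\{x:\ r_j\le\rho_z(x)<2r_j\}$ with $r_j=2^j\widehat{\OPT}_z/(kn^2)$. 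In each nonempty ring $R_j$ we take a $\tau r_j^{1/z}$-net, with $\tau=\varepsilon/(Cz(c+1)^{1/z})$ for a large constant $C$. With the nearest-neighbor rule this gives $\dist(x,\pi(x))^z\le\tau^z\rho_z(x)$.

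\textbf{Movement charge.} The analogue of \Cref{lem:movement-charge} goes through verbatim. Fix $A$ with $|A|=m$. Then
\[
    m\gamma^{(z)}_m(x)\le\sum_{y\in A}\dist(x,y)^z,
\]
so
\[
    m\sum_{x\in A}\dist(x,\pi(x))^z\le\tau^z\left(\cost_z(A)+\frac{\widehat{\OPT}_z}{k}\right).
\]
Summing over the at most $k$ parts and using $\widehat{\OPT}_z\le c\,\cost_z(\calP)$ bounds the total by $\tau^z(c+1)\cost_z(\calP)$. With the chosen $\tau$ this is at most $(\varepsilon/(16z))^z\cost_z(\calP)$.

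\textbf{Preservation.} To pass to preservation we use \eqref{eq:l2p-inequalities}. For a pair $x,y$ in the same part, write $\dist(x,y)\le\dist(\pi(x),\pi(y))+d_x+d_y$, where $d_x=\dist(x,\pi(x))$, and take $\gamma=\varepsilon/2$. This gives
\[
    \dist(x,y)^z\le(1+\varepsilon/2)\dist(\pi(x),\pi(y))^z+(4z/\varepsilon)^{z-1}2^{z-1}(d_x^z+d_y^z),
\]
and the symmetric inequality holds as well. Summing over ordered pairs of each part produces the error term
\[
    (8z/\varepsilon)^{z-1}\cdot 2\sum_A|A|\sum_{x\in A}d_x^z\le 2(8z/\varepsilon)^{z-1}(\varepsilon/16z)^z\cost_z(\calP)\le\frac{\varepsilon}{4}\cost_z(\calP).
\]
So the $\varepsilon/(16z)$ normalization is exactly what cancels the $(z/\varepsilon)^{z-1}$ blowup, and the $1\pm\varepsilon$ preservation follows.

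\textbf{Size.} $|H|\le k$ by the same argument as \Cref{lem:high-points}. A point of $H$ has $\gamma^{(z)}_2(x)>\widehat{\OPT}_z$, so two such points in one optimal cluster would contribute more than $\OPT_z$.

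The main obstacle is the single-ring bound, the analogue of \Cref{lem:one-ring}. Set $s=r_j^{1/z}$, so that the net is $\tau s$-separated. Take a $16\cdot 2^{1/z}s$-net $N_r$ of the ring's representatives; packing gives each cluster of representatives at most $(C/\tau)^{\ddim}=(z/\varepsilon)^{O(\ddim)}$ points. For $v\in N_r$, the conditions $\gamma^{(z)}_{m_v}(v)<2r$ and $\widehat{\OPT}_z/(km_v^2)<2r$ imply, by Markov, that more than half of $\Gamma_{m_v}(v)$ lies within distance $4^{1/z}s\le 4s$ of $v$. Take groups $G_v$ of size $h\approx\sqrt{\widehat{\OPT}_z/(8kr)}$ inside these balls. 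Points in different groups are at distance at least $8s$, so each cross pair costs at least $8^z r$. The same Cauchy--Schwarz count then yields
\[
    \OPT_z\ge 8^z r h^2\left(\frac{T^2}{k}-T\right),
\]
and hence $T=O(k)$. The delicate points are choosing the net radius so that the separation exceeds twice the ball radius in the $z$-th-root scale, and checking that the constants are independent of $z$. With $O(\log n)$ rings, the total size is $k(z/\varepsilon)^{O(\ddim)}\log n$, since $(c+1)^{1/z}$ is a constant.
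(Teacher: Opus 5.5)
Your proposal is correct and follows essentially the same route as the paper: the powered density scale, ring-wise nets at the $z$-th-root scale, the movement charge summed over at most $k$ parts, the convexity inequality \eqref{eq:l2p-inequalities} for preservation, and the powered analogue of the one-ring Cauchy--Schwarz count. Parametrizing by $\rho_z(x)=r(x)^z$ instead of the paper's $r(x)$ only changes how finely the rings are spaced. One detail that both you and the paper leave implicit is singleton parts, where the definition of $\rho_z$ (which only ranges over $m\ge2$) does not directly give the movement bound; there $\gamma_2^{(z)}(x)\le d_x^z/2$ forces $d_x^z\le\tau^z\widehat{\OPT}_z/(4k)$, exactly as in \Cref{lem:density-movement-charge}.
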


\begin{proof}
In the density-adaptive construction, replace the movement scale by
\[
    r(x)
    =
    \min_{2\le m\le n}
    \max\left\{
        \left(
            \frac1m\sum_{w\in\Gamma_m(x)}\dist(x,w)^z
        \right)^{1/z},
        \left(\frac{\widehat{\OPT}_z}{km^2}\right)^{1/z}
    \right\},
\]
where \(\OPT_z\le\)\(\widehat{\OPT}_z\le C\cdot\OPT_z\) for a constant \(C\ge1\).  Construct the same rings and nets at distance scale \(\eta r(x)\), where
\(\eta=\varepsilon/(32Cz)\). As before, let \(S_r\) denote the representatives chosen for the ring \(\{x\in X:r\le r(x)<2r\}\), and let \(S\) be the union of these representative sets.  If \(|A|=m\), the definition of \(r(x)\) gives
\[
    |A|\sum_{x\in A}\dist(x,\pi(x))^z
    \le
    \eta^z\cost_z(A)
    +
    \frac{\eta^z\widehat{\OPT}_z}{k}.
\]
Summing over the at most \(k\) parts of \(\calP\), and using
\(\widehat{\OPT}_z\le C\cdot\OPT_z\le C\cdot\cost_z(\calP)\), gives
\[
    \sum_{A\in\calP}|A|
    \sum_{x\in A}\dist(x,\pi(x))^z
    \le (1+C)\eta^z\cost_z(\calP)
    \le \left(\frac{\varepsilon}{16z}\right)^z\cost_z(\calP).
\]
This proves the displayed movement bound.

For all \(a,b\ge0\), by convexity we have
\begin{equation}
    \label{eq:convexity}
    (a+b)^z
    \le
    \left(1+\frac{\varepsilon}{2}\right)a^z
    +
    \left(1+\frac{4z}{\varepsilon}\right)^{z-1}b^z. 
\end{equation}
For every part \(A\in\calP\) and every pair points \(x,y\in A\), let
\[
    b_{x,y}
    =
    \dist(x,\pi(x))+\dist(y,\pi(y)).
\]
By the triangle inequality
\[
    |\dist(\pi(x),\pi(y))- \dist(x ,y)|
    \le
    b_{x,y}.
\]
Applying \eqref{eq:convexity} gives

\begin{gather}
    \dist(\pi(x),\pi(y))^z
     \le
    \left(1+\frac{\varepsilon}{2}\right)\dist(x,y)^z
    +
    \left(1+\frac{4z}{\varepsilon}\right)^{z-1}b_{x,y}^z ,\label{eq:first_powered_upperbound} 
    \\
    \dist(x,y)^z
     \le
    \left(1+\frac{\varepsilon}{2}\right)\dist(\pi(x),\pi(y))^z
    +
    \left(1+\frac{4z}{\varepsilon}\right)^{z-1}b_{x,y}^z. \label{eq:second_powered_upperbound}
\end{gather}
We next bound the second term in these inequalities. Since
\[
    b_{x,y}^z
    \le
    2^{z-1}\left(
        \dist(x,\pi(x))^z+\dist(y,\pi(y))^z
    \right),
\]
summing over all pairs of points in all parts and rearranging gives
\[
\begin{aligned}
    \sum_{A\in\calP}
    \sum_{\substack{x,y\in A}}b_{x,y}^z
    &\le
    2^z\sum_{A\in\calP}|A|
    \sum_{x\in A}\dist(x,\pi(x))^z,
\end{aligned}
\]
and since $1+4z/\varepsilon\leq5z/\varepsilon$, we get
\[
\begin{aligned}
    \left(1+\frac{4z}{\varepsilon}\right)^{z-1}
    \sum_{A\in\calP}
    \sum_{\substack{x,y\in A}}b_{x,y}^z
    \le
    2^z\left(\frac{5z}{\varepsilon}\right)^{z-1}
    \left(\frac{\varepsilon}{16z}\right)^z
    \cost_z(\calP)
    \le
    \frac{\varepsilon}{8}\cost_z(\calP),
\end{aligned}
\]

Summing \eqref{eq:first_powered_upperbound}  over all pairs gives
\[
    \cost_z(\pi(\calP))
    \le
    \left(1+\frac{\varepsilon}{2}
    +\frac{\varepsilon}{8}\right)\cost_z(\calP)
    \le
    (1+\varepsilon)\cost_z(\calP).
\]
Summing \eqref{eq:second_powered_upperbound} over all pairs gives
\[
    \cost_z(\pi(\calP))
    \ge
    \frac{1-\varepsilon/8}{1+\varepsilon/2}\cost_z(\calP)
    \ge
    (1-\varepsilon)\cost_z(\calP).
\]

It remains to bound the size of $S$.  Fix a radius \(r\), and let \(N_r\)
be a \((16r)\)-net of $S_r$.  The proof of
\Cref{lem:one-ring} applies with power-$z$ distances: for every \(v\in N_r\), more
than half of the \(m_v\) nearest points lie in \(B(v,4r)\), and
\(m_v>\sqrt{\widehat{\OPT}_z/(2^zkr^z)}\).  Taking
\(h=\lceil\sqrt{\widehat{\OPT}_z/(2^{z+2}kr^z)}\rceil\), The same pair count across distinct groups \(G_v\), now with every distance
raised to the \(z\)-th power, gives \(|N_r|<3k\). Each region contains at most
\((Cz/\varepsilon)^{O(\ddim)}\) representatives by packing. As in
\Cref{lem:high-points}, the set \(H\) has size at most \(k\), and there are
\(O(\log n)\) scales. This proves the stated size of the coreset.
\end{proof}

\begin{lemma}[Annular coresets for power-$z$ distances]
\label{lem:powered-annular-coreset}
For every \(z\ge1\) and \(0<\varepsilon<1/2\), an adaption of our annular
construction has size
\[
    \left(\frac{kz}{\varepsilon}\right)^{O(\ddim)}\log n
\]
and satisfies, for every \(k\)-partition \(\calP\),
\[
    \sum_{A\in\calP}|A|
    \sum_{x\in A}\dist(x,\pi(x))^z
    \le \left(\frac{\varepsilon}{16z}\right)^z\cost_z(\calP).
\]
Thus it satisfies the power-$z$ analogue of
\Cref{def:bounded-movement-coreset} with
\(\varepsilon^\prime=(\varepsilon/(16z))^z\).  Consequently, it preserves \(\cost_z\) within factor \(1\pm\varepsilon\) for every \(k\)-partition.
\end{lemma}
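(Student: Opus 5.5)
We follow the proof of \Cref{lem:powered-density-coreset}, replacing the density-adaptive movement scale by the annular one. We take the constant-factor reference partition $\calR$ with centers $c_1,\ldots,c_k$, now chosen for the power-$z$ objective. Concretely, $c_i$ is a point of $R_i$ that approximately minimizes $\sum_{x\in R_i}\dist(x,c_i)^z$. The reference value is $\widehat{\OPT}_z=\sum_i|R_i|\sum_{x\in R_i}\dist(x,c_i)^z$. Choosing $c_i$ as the best point of $R_i$ and averaging over $y'\in R_i$ gives $|R_i|\sum_{x\in R_i}\dist(x,c_i)^z\le\cost_z(R_i)$. Hence $\widehat{\OPT}_z\le C\cdot\OPT_z$ whenever $\calR$ is a $C$-approximation for $\cost_z$. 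This is the power-$z$ analogue of \Cref{lem:k-median-centers-cost-upper-bound}.

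The radii become $r_0=20\widehat{\OPT}_z^{1/z}$ and $r_j=r_0/2^j$, again with $L=O(\log n)$ levels. At each center and level we take an $\eta r_j$-packing of $X\cap B(c_i,r_j)$. As in \Cref{lem:annular-movement}, this gives
\[
\dist(x,\pi(x))\le 2\eta\max\left\{\min_i\dist(x,c_i),\ \widehat{\OPT}_z^{1/z}/n^{10}\right\}.
\]

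The key step is the power-$z$ movement-charge bound, mirroring \Cref{lem:annular-movement-charge}. Fix a partition $\calP$. For an ordered pair $x,y$ in a part $A$, orient it so that $y\in R_\ell$ has the larger reference cluster. Then
\[
\dist(x,\pi(x))^z\le(2\eta)^z\left(\dist(x,y)+\dist(y,c_\ell)+\widehat{\OPT}_z^{1/z}/n^{10}\right)^z\le(6\eta)^z\left(\dist(x,y)^z+\dist(y,c_\ell)^z+\widehat{\OPT}_z/n^{10z}\right),
\]
using $(a+b+c)^z\le3^{z-1}(a^z+b^z+c^z)$. The same bound holds for $\dist(y,\pi(y))^z$. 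Summing over pairs, each $y\in R_\ell$ is charged by at most $k|R_\ell|$ points, so the total $\dist(y,c_\ell)^z$ charge is at most $k\widehat{\OPT}_z$. This yields
\[
\sum_{A}(|A|-1)\sum_{x\in A}\dist(x,\pi(x))^z\le O(\eta)^z\left(\cost_z(\calP)+k\widehat{\OPT}_z\right).
\]
The remaining diagonal term $\sum_x\dist(x,\pi(x))^z\le O(\eta)^z\widehat{\OPT}_z$ is handled as before. With $\widehat{\OPT}_z\le C\cost_z(\calP)$, we obtain the total bound $O(\eta)^z Ck\cdot\cost_z(\calP)$.

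Choosing $\eta=\varepsilon/(c_0 z (Ck)^{1/z})$ for a large constant $c_0$ gives the claimed $(\varepsilon/(16z))^z$ bound. In particular $\eta\ge\varepsilon/(c_0Czk)$. From this bounded movement, cost preservation follows verbatim from the convexity part of the proof of \Cref{lem:powered-density-coreset}, namely \eqref{eq:convexity} and \eqref{eq:first_powered_upperbound}--\eqref{eq:second_powered_upperbound}. That argument used only the movement bound.

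The size follows as in \Cref{lem:annular-size}: there are $k(L+1)$ packings, each of size $\eta^{-O(\ddim)}$ by \Cref{lem:doubling-packing-bound}, giving $(kz/\varepsilon)^{O(\ddim)}\log n$.

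The main obstacle is bookkeeping the constants in $z$. The factor $3^{z-1}$ and the factor $k$ must both be absorbed into $\eta^z$. This is why the $k$ dependence enters only through $k^{1/z}\le k$ and stays inside the $(kz/\varepsilon)^{O(\ddim)}$ size bound.

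A secondary point is obtaining a constant-factor reference for $\cost_z$. For the existential statement, any $O(1)$-approximate partition suffices. For the implementation, the bootstrapping of \Cref{rem:fast-bootstrap} carries over with the relaxed triangle inequality, losing only $2^{O(z)}$ factors that are constant for fixed $z$.
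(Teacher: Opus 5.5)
Your proposal is correct and follows the paper's route. The paper reruns the annular construction, redoes the charging argument of \Cref{lem:annular-movement-charge} with distances raised to the $z$-th power, and then invokes the convexity step from the proof of \Cref{lem:powered-density-coreset}; you do the same, filling in details the paper leaves implicit, namely the power-$z$ one-median reference centers and the $\widehat{\OPT}_z^{1/z}$ radii, and taking $\eta=\varepsilon/(c_0z(Ck)^{1/z})$ rather than the paper's cruder $\eta=\varepsilon/(16C_0zk)$, a harmless sharpening that gives the same $(kz/\varepsilon)^{O(\ddim)}\log n$ size bound.
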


\begin{proof}
Run the annular construction with movement parameter
\(\eta=\varepsilon/(16C_0zk)\), where \(C_0>0\) is a sufficiently large absolute
constant.  The charging argument of \Cref{lem:annular-movement-charge}, with
every distance raised to the \(z\)-th power, bounds the power-$z$ movement cost by
\[
    \sum_{A\in\calP}|A|
    \sum_{x\in A}\dist(x,\pi(x))^z
    \le (C_0k\eta)^z\cost_z(\calP)
    = \left(\frac{\varepsilon}{16z}\right)^z\cost_z(\calP).
\]
As shown in the proof of \Cref{lem:powered-density-coreset}, this movement bound is sufficient to preserve \(\cost_z\) within factor
\(1\pm\varepsilon\).
\end{proof}

\subsection{Balanced Euclidean \texorpdfstring{\(k\)}{k}-Means}
The case \(z=2\) of the preceding subsection yields a guarantee also for balanced Euclidean \(k\)-means which holds for every possible choice of centers. To formulate this
extension, we need to define how a movement rule acts on the centers associated with a partition.

For a \(k\)-partition \(\calP\) of a Euclidean set $X$ and corresponding
centers \(\mathcal C=(c_A)_{A\in\calP}\), define the \emph{balanced \(k\)-means} cost by
\[
    \cost_{\mathrm{bkmeans}}(\calP, \mathcal{C})
    =
    \sum_{A\in\calP}|A|
    \sum_{x\in A}\dist(x,c_A)^2.
\]
For a set \(B\subseteq\R^d\), define its centroid by
\[
    \overline{x}_B
    :=
    \frac{1}{|B|}\sum_{x\in B}x.
\]

Let \(\pi:X\to S\) be a movement rule. For a partition \(\calP\) and
corresponding centers \(\mathcal C=(c_A)_{A\in\calP}\), define the movement of
the centers under \(\pi\) by
\[
    \pi(\mathcal C)
    :=
    \left(c_A+\overline{x}_{\pi(A)}-\overline{x}_A\right)_{A\in\calP}.
\]

We call \((S,w)\) an \emph{\(\varepsilon\)-movement coreset for balanced
\(k\)-means} if, for every \(k\)-partition \(\calP\) and corresponding
centers \(\mathcal C=(c_A)_{A\in\calP}\),
\[
    \cost_{\mathrm{bkmeans}}\bigl(\pi(\calP),\pi(\mathcal C)\bigr)
    \in
    (1\pm\varepsilon)
    \cost_{\mathrm{bkmeans}}(\calP,\mathcal C).
\]
\begin{proposition}[Preservation under shifted centers]
\label{prop:shifted-centers-balanced-kmeans}
Let \(X\subseteq\mathbb R^d\), let \(0<\varepsilon\le1\), and let \((S,w)\)
be an \(\varepsilon\)-movement coreset for \(\cost_2\), with movement rule
\(\pi:X\to S\). Then \((S,w)\) is also an \(\varepsilon\)-movement coreset for
balanced \(k\)-means.
\end{proposition}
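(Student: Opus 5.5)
The plan is to reduce the balanced \(k\)-means cost of a single cluster, with an arbitrary center, to its pairwise squared cost plus a center-offset term, and to check that the shifted-center rule leaves the offset term unchanged. For a finite multiset \(B\subseteq\R^d\) and a point \(c\), the bias--variance identity gives
\[
    \sum_{x\in B}\norm{x-c}^2=\sum_{x\in B}\norm{x-\overline{x}_B}^2+|B|\,\norm{\overline{x}_B-c}^2 ,
\]
and the classical identity \(\sum_{x,y\in B}\norm{x-y}^2=2|B|\sum_{x\in B}\norm{x-\overline{x}_B}^2\) holds for ordered pairs. Multiplying by \(|B|\), we get
\[
    |B|\sum_{x\in B}\norm{x-c}^2=\tfrac12\cost_2(B)+|B|^2\norm{\overline{x}_B-c}^2 .
\]
Thus for a partition \(\calP\) with centers \(\mathcal C\),
\[
    \cost_{\mathrm{bkmeans}}(\calP,\mathcal C)=\tfrac12\cost_2(\calP)+\sum_{A\in\calP}|A|^2\norm{\overline{x}_A-c_A}^2 .
\]

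Next, apply the same identity to the moved partition \(\pi(\calP)\), whose parts are the multisets \(\pi(A)\) with \(|\pi(A)|=|A|\), and to the moved centers \(c_A'=c_A+\overline{x}_{\pi(A)}-\overline{x}_A\). The shift is designed so that
\[
    \overline{x}_{\pi(A)}-c_A'=\overline{x}_A-c_A ,
\]
so each offset term \(|A|^2\norm{\overline{x}_A-c_A}^2\) is exactly invariant. Therefore
\[
    \cost_{\mathrm{bkmeans}}(\pi(\calP),\pi(\mathcal C))=\tfrac12\cost_2(\pi(\calP))+\sum_{A\in\calP}|A|^2\norm{\overline{x}_A-c_A}^2 .
\]

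Now use the hypothesis that \((S,w)\) is an \(\varepsilon\)-movement coreset for \(\cost_2\), so \(\cost_2(\pi(\calP))\in(1\pm\varepsilon)\cost_2(\calP)\). Write \(T\ge0\) for the common offset sum. The balanced \(k\)-means cost changes from \(\tfrac12\cost_2(\calP)+T\) to \(\tfrac12\cost_2(\pi(\calP))+T\). The difference is at most \(\tfrac{\varepsilon}{2}\cost_2(\calP)\le\varepsilon\bigl(\tfrac12\cost_2(\calP)+T\bigr)\), which gives the claimed \((1\pm\varepsilon)\) bound. Since the invariant term is nonnegative, it only helps the relative error.

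The only delicate step is bookkeeping. We need to confirm that the ordered-pair convention of \(\cost_2\) matches the factor \(2|B|\) in the pairwise identity, and that \(\pi(A)\) is treated as a multiset of size \(|A|\), so that the centroid identity and the coefficient \(|A|^2\) apply verbatim. No doubling or net structure is needed; the argument is purely algebraic given the \(\cost_2\) guarantee.
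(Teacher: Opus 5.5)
Your proposal is correct and follows essentially the same route as the paper: you apply the variance identity $|A|\sum_{x\in A}\norm{x-c_A}^2=\tfrac12\cost_2(A)+|A|^2\norm{\overline{x}_A-c_A}^2$ to both $A$ and the moved multiset $\pi(A)$, observe that the center shift makes the offset terms identical, and then invoke the $\cost_2$ movement-coreset guarantee. Your closing step, where the nonnegative invariant offset sum absorbs the relative error, just spells out what the paper compresses into summing over the parts and applying the guarantee.
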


\begin{proof}
For every \(A\in\calP\), the Euclidean variance identity gives
\[
    |A|\sum_{x\in A}\dist(x,c_A)^2
    =
    \frac12\sum_{x,y\in A}\dist(x,y)^2
    +
    |A|^2\dist(\overline{x}_A,c_A)^2,
\]
and similarly, for the moved multiset \(\pi(A)\),
\[
    |A|\sum_{x\in A}
    \dist\bigl(\pi(x),c_A+\overline{x}_{\pi(A)}-\overline{x}_A\bigr)^2
    =
    \frac12\sum_{x,y\in A}\dist(\pi(x),\pi(y))^2
    +
    |A|^2
    \dist\bigl(
        \overline{x}_{\pi(A)},
        c_A+\overline{x}_{\pi(A)}-\overline{x}_A
    \bigr)^2.
\]
Obviously,
\[
    \overline{x}_{\pi(A)}
    -
    \bigl(c_A+\overline{x}_{\pi(A)}-\overline{x}_A\bigr)
    =
    \overline{x}_A-c_A,
\]
and thus the last term in the two variance decompositions above has the same value. The proposition follows by summing over the parts and applying the movement-coreset guarantee for \(\cost_2\).
\end{proof}

\begin{corollary}[Density-adaptive coreset for balanced \(k\)-means]
\label{cor:balanced-kmeans-coreset-size}
For every Euclidean input \(X\), there is a density-adaptive \(\varepsilon\)-movement coreset for balanced
Euclidean \(k\)-means.  For every partition \(\calP\) and every corresponding
set of centers \(\mathcal C\), it preserves
\(\cost_{\mathrm{bkmeans}}(\calP,\mathcal C)\) within \(1\pm\varepsilon\)
after shifting the centers as in
\Cref{prop:shifted-centers-balanced-kmeans}, using $k\left(\frac{1}{\varepsilon}\right)^{O(\ddim)}\log n$ representatives.
\end{corollary}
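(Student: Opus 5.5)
The plan is to combine \Cref{prop:shifted-centers-balanced-kmeans} with the power-$z$ density-adaptive construction of \Cref{lem:powered-density-coreset} at $z=2$. Since $X\subseteq\R^d$ is Euclidean, it is a metric space of doubling dimension $\ddim=\ddim(X)$, so \Cref{lem:powered-density-coreset} applies directly with $\dist(x,y)=\norm{x-y}$ and $z=2$. It yields a weighted set $(S,w)$ with $S\subseteq X$, a nearest-neighbor-type movement rule $\pi:X\to S$ consistent with the weights, and size $k(2/\varepsilon)^{O(\ddim)}\log n=k(1/\varepsilon)^{O(\ddim)}\log n$. The lemma guarantees that $\cost_2(\pi(\calP))\in(1\pm\varepsilon)\cost_2(\calP)$ for every $k$-partition $\calP$; that is, $(S,w)$ is an $\varepsilon$-movement coreset for $\cost_2$.

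The first step is to make sure the hypothesis of \Cref{lem:powered-density-coreset} is satisfiable: it needs an estimate $\widehat{\OPT}_2$ with $\OPT_2\le\widehat{\OPT}_2\le C\,\OPT_2$ for a constant $C$. For an existential statement, I would simply take $\widehat{\OPT}_2=\OPT_2$. If a constructive version is desired, I would reuse the geometric-guessing trick from the proof of \Cref{thm:density-adaptive}: try $\widehat{\OPT}_2=A/2^i$ and keep the last run that meets the size bound. The movement bound only uses the upper bound on the estimate.

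Next I would invoke \Cref{prop:shifted-centers-balanced-kmeans} with this $(S,w)$ and $\pi$, which requires $\varepsilon\le1$; this holds since $\varepsilon<1/2$. The proposition gives, for every $k$-partition $\calP$ and every center tuple $\mathcal C$,
\[
\cost_{\mathrm{bkmeans}}(\pi(\calP),\pi(\mathcal C))\in(1\pm\varepsilon)\cost_{\mathrm{bkmeans}}(\calP,\mathcal C).
\]
A point to double-check is the factor $\tfrac12$ in the variance identity. The balanced cost equals $\tfrac12\cost_2(A)+|A|^2\norm{\overline{x}_A-c_A}^2$, and the second term is identical before and after the shift. So a multiplicative $(1\pm\varepsilon)$ error on the first term gives a $(1\pm\varepsilon)$ error on the sum, because both terms are nonnegative.

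The only mild obstacle is bookkeeping rather than mathematics. The power-$2$ size bound is $(z/\varepsilon)^{O(\ddim)}$ with $z=2$, which absorbs into $(1/\varepsilon)^{O(\ddim)}$. I also need to confirm that "density-adaptive" refers to the same rings-and-nets construction with the modified scale $r(x)$. I would finish by noting that the lifted partitions $\pi(\calP)$ are multisets and not necessarily consistent partitions of $S$. This does not matter, because the guarantee is stated for $\pi(\calP)$ exactly as in \Cref{def:movement-based-coresets}.
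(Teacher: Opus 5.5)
Your proposal is correct and matches the paper's proof: instantiate \Cref{lem:powered-density-coreset} with $z=2$ to obtain a $\cost_2$-movement coreset of size $k(1/\varepsilon)^{O(\ddim)}\log n$, then apply \Cref{prop:shifted-centers-balanced-kmeans}. Your extra checks are sound and simply spell out what the paper's two-line proof leaves implicit: the factor $\tfrac12$ in the variance identity, the nonnegativity of the unchanged centroid term, and taking $\widehat{\OPT}_2=\OPT_2$ for the existential claim.
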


\begin{proof}
The density-adaptive construction and analysis of
\Cref{lem:powered-density-coreset} give the displayed movement guarantee and
the stated size for balanced \(k\)-means.  Apply
\Cref{prop:shifted-centers-balanced-kmeans}.
\end{proof}

\subsection{Color-Balanced Mergeability}
A coreset construction is mergeable if coresets constructed independently
for two sets can be combined into a coreset for their union. In general,
movement coresets are not mergeable, because moving points from one set affects their distances to possibly many points from the other set. We show that
merging is possible when every cluster contains the same number of points
from the two sets. We call the points of the two sets red and blue and
refer to such partitions as color-balanced.

Let \(R\) and \(B\) be two disjoint sets of equal size, and let \(X\) be their union. We restrict attention to \(k\)-partitions of \(X\) in
which every part contains the same number of red and blue points.

\begin{lemma}[Color-balanced mergeability]
\label{lem:color-balanced-mergeability}
For each \(C\in\{R,B\}\), let \((S_C,w_C)\) be an
\(\varepsilon\)-bounded-movement coreset for \(C\), and let
\(\pi_C:C\to S_C\) be a the movement rule. Consider their weighted
union \((S,w)\), where  $w=w_R+w_B$.  Define \(\pi:X\to S\) by \(\pi|_C=\pi_C\) for each
\(C\in\{R,B\}\).  Then, for every \(k\)-partition \(\calP\) of \(X\)
satisfying for all $A\in \calP \quad |A\cap R|=|A\cap B|$
it holds that
\[
    \sum_{A\in\calP}|A|
    \sum_{x\in A}\dist(x,\pi(x))
    \le
    2\varepsilon\cost(\calP).
\]
\end{lemma}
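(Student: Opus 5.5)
The plan is to restrict a color-balanced partition to each color and apply the bounded-movement guarantee of each coreset separately. Then I will compare the cost of each restricted partition with the cost of the original one, losing a factor of two.

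Fix a color-balanced $k$-partition $\calP$ of $X$. For $C\in\{R,B\}$, let $\calP_C=\{A\cap C: A\in\calP\}$. This is a partition of $C$ into at most $k$ parts, so it is an admissible $k$-partition of $C$ (empty parts are harmless). Since $|A\cap R|=|A\cap B|$, we have $|A|=2|A\cap C|$ for each $C$. Splitting the movement sum by color gives
\[
    \sum_{A\in\calP}|A|\sum_{x\in A}\dist(x,\pi(x))
    =
    2\sum_{C\in\{R,B\}}\sum_{A\in\calP}|A\cap C|\sum_{x\in A\cap C}\dist(x,\pi_C(x)).
\]
The bounded-movement guarantee \eqref{eq:movement_cost_bound} for $(S_C,w_C)$, applied to $\calP_C$, bounds the inner expression for each $C$ by $\varepsilon\cost(\calP_C)$. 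Hence the left-hand side is at most $2\varepsilon\bigl(\cost(\calP_R)+\cost(\calP_B)\bigr)$.

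It remains to show $\cost(\calP_R)+\cost(\calP_B)\le\cost(\calP)$. This holds because $\cost(A)$ sums over all ordered pairs in $A$. The pairs inside $A\cap R$ and the pairs inside $A\cap B$ are disjoint subsets of these pairs, and all distances are nonnegative. Combining the two bounds gives the claimed $2\varepsilon\cost(\calP)$.

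I expect no real obstacle here. The only point needing care is that the factor $|A|$ equals exactly $2|A\cap C|$, which is where color balance is used. A weaker naive bound of $|A|\le 2\max(\cdot)$ would not suffice without balance.
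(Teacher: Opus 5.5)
Your proposal is correct and follows the paper's proof exactly: you split each part by color, use color balance to get $|A|=2|A\cap C|$, apply each color's bounded-movement guarantee to the induced partition $\calP_C$, and finish with $\cost(\calP_R)+\cost(\calP_B)\le\cost(\calP)$. Your extra remark that empty parts in $\calP_C$ are harmless is a small point the paper leaves implicit.
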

\begin{proof}
Let \(\calP\) be a \(k\)-partition of \(X\) satisfying
\[
    \forall A\in \calP \qquad |A\cap R|=|A\cap B|
\]
For each \(A\in\calP\), write
\[
    A_R=A\cap R,
    \qquad
    A_B=A\cap B.
\]
Color balance gives \(|A|=2|A_R|=2|A_B|\). Let $\calP_R=\{A_R| A\in \calP\}$ and $\calP_B=\{A_B| A\in \calP\}$ be the induced partitions on each color. For every \(A\in\calP\),
\[
        \cost(A_R)+\cost(A_B)\leq \cost(A)
\]
Consequently,
\[
    \cost(\calP_R)+\cost(\calP_B)\le\cost(\calP).
\]

Thus
\[
\begin{aligned}
    \sum_{A\in\calP}|A|\sum_{x\in A}\dist(x,\pi(x))
    &=
    2\sum_{A\in\calP}
    \left(
        |A_R|\sum_{x\in A_R}\dist(x,\pi_R(x))
        +
        |A_B|\sum_{x\in A_B}\dist(x,\pi_B(x))
    \right)
    \\
    &\le
    2\varepsilon
    \bigl(\cost(\calP_R)+\cost(\calP_B)\bigr)
    \\
    &\le
    2\varepsilon\cost(\calP).
\end{aligned}
\]
\end{proof}

\subsection{Composition}

Movement coresets may be applied repeatedly: a coreset of a coreset is also a coreset of the original input, under the composed movement rule.  The errors
accumulate additively, as the next lemma makes precise.

\begin{lemma}[Composition of movement coresets]
\label{lem:movement-composition}
Let \((S,w)\) be an \(\varepsilon_1\)-movement coreset of \(X\), with movement
rule \(\pi_1:X\to S\).  Let \((T,u)\) be an \(\varepsilon_2\)-movement coreset of
the weighted instance \((S,w)\), interpreted as a multiset, with movement rule
\(\pi_2:S\to T\) and weights
\[
    u(t)=\sum_{s:\pi_2(s)=t} w(s).
\]
Then \((T,u)\) is an
\((\varepsilon_1+\varepsilon_2+\varepsilon_1\varepsilon_2)\)-movement coreset
of \(X\) under the composed movement rule \(\pi_2\circ\pi_1:X\to T\).
\end{lemma}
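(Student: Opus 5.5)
The plan is to chain the two movement guarantees through an intermediate partition of the multiset $S^{(w)}$. I will first make precise how a $k$-partition of $X$ is transported along $\pi_1$ and then along $\pi_2$. Fix a $k$-partition $\calP=\{P_1,\ldots,P_k\}$ of $X$. Its image $\pi_1(\calP)=\{\pi_1(P_1),\ldots,\pi_1(P_k)\}$ is a $k$-partition of the multiset $S^{(w)}$, in which each $s\in S$ appears $w(s)=|\pi_1^{-1}(s)|$ times. Copies of $s$ may be split among parts, and this is allowed: the coreset guarantee for $(T,u)$ is stated for the weighted instance $(S,w)$ viewed as a multiset, so it applies to every $k$-partition of $S^{(w)}$, split or not.

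The key step is to interpret $\pi_2$ as acting on copies. Each copy of $s$ is sent to $\pi_2(s)$. Then the composed multiset partition satisfies $\pi_2(\pi_1(\calP))=(\pi_2\circ\pi_1)(\calP)$ as multisets part by part, because $\pi_2(\pi_1(P_i))=\{\pi_2(\pi_1(x)) : x\in P_i\}$. I will also check weight consistency, $u(t)=\sum_{s:\pi_2(s)=t}w(s)=|(\pi_2\circ\pi_1)^{-1}(t)|$, so $(T,u)$ with the rule $\pi_2\circ\pi_1$ fits \Cref{def:movement-based-coresets}.

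Next I apply the two guarantees in sequence. The guarantee for $(S,w)$ applied to $\calP$ gives $\cost(\pi_1(\calP))\in(1\pm\varepsilon_1)\cost(\calP)$. The guarantee for $(T,u)$ applied to the multiset partition $\pi_1(\calP)$ gives $\cost(\pi_2(\pi_1(\calP)))\in(1\pm\varepsilon_2)\cost(\pi_1(\calP))$. Multiplying the two ranges yields an upper bound of $(1+\varepsilon_1)(1+\varepsilon_2)=1+\varepsilon_1+\varepsilon_2+\varepsilon_1\varepsilon_2$. It yields a lower bound of $(1-\varepsilon_1)(1-\varepsilon_2)=1-\varepsilon_1-\varepsilon_2+\varepsilon_1\varepsilon_2\ge 1-(\varepsilon_1+\varepsilon_2+\varepsilon_1\varepsilon_2)$. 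Both fall within the claimed $1\pm(\varepsilon_1+\varepsilon_2+\varepsilon_1\varepsilon_2)$ window.

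The only delicate point is the middle step, which requires the $\varepsilon_2$-guarantee to apply to split partitions of $S^{(w)}$. If one insists that a coreset of $(S,w)$ be defined only against consistent partitions of $S$, then \Cref{clm:unsplitting-representatives} does not help directly, since it gives only an inequality and not cost preservation. In that case I would define the coreset of a weighted instance through its multiset $S^{(w)}$, as the lemma's phrase ``interpreted as a multiset'' indicates. Under that reading the movement rule acts on each copy, and the argument above goes through verbatim.
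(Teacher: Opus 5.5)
Your proposal is correct and matches the paper's proof essentially step for step: check that $u$ is consistent with $\pi_2\circ\pi_1$, apply the $\varepsilon_1$-guarantee to $\calP$, apply the $\varepsilon_2$-guarantee to the possibly split multiset partition $\calQ=\pi_1(\calP)$, and multiply the two ranges. Your remark that the middle step needs the second guarantee to cover split partitions of the multiset is exactly the reading the paper adopts through the phrase ``interpreted as a multiset.''
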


\begin{proof}
The weights are consistent with the composed rule because
\[
    u(t)
    =
    \sum_{s:\pi_2(s)=t}w(s)
    =
    \left|(\pi_2\circ\pi_1)^{-1}(t)\right|.
\]
Fix a \(k\)-partition \(\calP\) of \(X\), and let
\(\calQ=\pi_1(\calP)\) be the corresponding partition of the multiset
represented by \((S,w)\). The first coreset guarantee gives
\[
    \cost_{S,w}(\calQ)
    \in
    (1\pm\varepsilon_1)\cost(\calP).
\]
Since \(\calQ\) is a partition of this multiset, applying the second coreset
guarantee gives
\[
    \cost_{T,u}(\pi_2(\calQ))
    \in
    (1\pm\varepsilon_2)\cost_{S,w}(\calQ).
\]
Since \(\pi_2(\calQ)=(\pi_2\circ\pi_1)(\calP)\), the upper bound is at most
\[
    (1+\varepsilon_2)(1+\varepsilon_1)\cost(\calP)
    =
    (1+\varepsilon_1+\varepsilon_2+\varepsilon_1\varepsilon_2)\cost(\calP),
\]
and the lower bound is at least
\[
    (1-\varepsilon_2)(1-\varepsilon_1)\cost(\calP)
    \ge
    (1-\varepsilon_1-\varepsilon_2)\cost(\calP).
\]
This is stronger than the claimed symmetric error.
\end{proof}

The following lemma gives the corresponding composition result for
bounded-movement coresets.
\begin{lemma}[Composition of bounded-movement coresets]
\label{lem:bounded-movement-composition}
Let \((S,w)\) be both an \(\varepsilon_1\)-movement coreset and a
\(\delta_1\)-bounded-movement coreset of \(X\) under the movement rule
\(\pi_1:X\to S\). Let \((T,u)\) be a \(\delta_2\)-bounded-movement coreset of
the weighted instance \((S,w)\), interpreted as a multiset, under the movement
rule \(\pi_2:S\to T\), where
\[
    u(t)=\sum_{s:\pi_2(s)=t}w(s).
\]
Then \((T,u)\) is a $\bigl(\delta_1+(1+\varepsilon_1)\delta_2\bigr)$-bounded-movement coreset of \(X\) under the composed movement rule
\(\pi_2\circ\pi_1:X\to T\).
\end{lemma}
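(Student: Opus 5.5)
Fix a \(k\)-partition \(\calP\) of \(X\), write \(\phi=\pi_2\circ\pi_1\), and start from the triangle inequality
\[
    \dist(x,\phi(x))\le \dist(x,\pi_1(x))+\dist\bigl(\pi_1(x),\pi_2(\pi_1(x))\bigr).
\]
Multiplying by \(|A|\) and summing over \(x\in A\) and \(A\in\calP\) splits the target quantity \(\sum_{A\in\calP}|A|\sum_{x\in A}\dist(x,\phi(x))\) into two terms. The first term is \(\sum_{A}|A|\sum_{x\in A}\dist(x,\pi_1(x))\). By the \(\delta_1\)-bounded-movement guarantee of \(\pi_1\), it is at most \(\delta_1\cost(\calP)\) directly.

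For the second term, consider the partition \(\calQ=\pi_1(\calP)\) of the multiset represented by \((S,w)\). Its part corresponding to \(A\) is the multiset \(\pi_1(A)\), which has the same cardinality \(|A|\). Since weights are multiplicities, the sum
\[
    \sum_{A}|A|\sum_{x\in A}\dist\bigl(\pi_1(x),\pi_2(\pi_1(x))\bigr)
\]
is exactly the bounded-movement quantity of \(\pi_2\) evaluated on the \(k\)-partition \(\calQ\) of the multiset \((S,w)\). Each copy in \(\pi_1(A)\) moves by \(\dist(s,\pi_2(s))\), and the cluster size is \(|\pi_1(A)|=|A|\).

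One point needs care here: \(\calQ\) may split copies of the same \(s\in S\) among different parts. I will read the \(\delta_2\)-bounded-movement hypothesis on the multiset as applying to all \(k\)-partitions of that multiset, including split ones, exactly as the movement-coreset guarantee is used in \Cref{lem:movement-composition}. This also requires \(\pi_2\) to be defined per copy; since \(\pi_2:S\to T\), all copies of \(s\) move identically. With that reading, the \(\delta_2\) guarantee bounds the second term by \(\delta_2\cost_{S,w}(\calQ)=\delta_2\cost(\pi_1(\calP))\).

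Finally, the \(\varepsilon_1\)-movement coreset guarantee gives \(\cost(\pi_1(\calP))\le(1+\varepsilon_1)\cost(\calP)\). Adding the two terms yields the bound \((\delta_1+(1+\varepsilon_1)\delta_2)\cost(\calP)\). Weight consistency of \(\phi\) is verified as in \Cref{lem:movement-composition}. The main obstacle is the split-partition interpretation in the second step; everything else is a one-line triangle inequality.
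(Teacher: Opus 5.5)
Your proof is correct and follows the paper's argument essentially verbatim: the triangle inequality splits the composed movement into two terms, the first is bounded by the \(\delta_1\) guarantee, and the second by applying the \(\delta_2\) guarantee to \(\calQ=\pi_1(\calP)\) (whose parts have the same cardinalities as those of \(\calP\)) followed by \(\cost(\pi_1(\calP))\le(1+\varepsilon_1)\cost(\calP)\). You also note explicitly that the \(\delta_2\) hypothesis must cover partitions of the multiset that split copies of a representative; the paper relies on the same reading implicitly when it writes \(\cost_{S,w}(\calQ)\).
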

\begin{proof}
Fix a \(k\)-partition \(\calP\) of \(X\), and let
\(\calQ=\pi_1(\calP)\) be the corresponding partition of the multiset
represented by \((S,w)\).  By the triangle inequality,
\[
    \dist(x,\pi_2(\pi_1(x)))
    \le
    \dist(x,\pi_1(x))+\dist(\pi_1(x),\pi_2(\pi_1(x))).
\]
The first term is bounded by the assumed bound in $\eqref{eq:movement_cost_bound}$ for \(\pi_1\).  For
the second term, each cluster of \(\calQ\) has exactly the same number of copies
as the corresponding cluster of \(\calP\), so the assumed movement-charge bound
for \(\pi_2\) gives
\[
    \sum_{A\in\calP}|A|\sum_{x\in A}\dist(\pi_1(x),\pi_2(\pi_1(x)))
    \le
    \delta_2\cost_{S,w}(\calQ)\leq\delta_2(1+\varepsilon_1)\cost(\calP).
\]
Combining the two bounds proves the claim.
\end{proof}

\subsection{Movement Costs and Nearest-Neighbor Reweighting}

The constructions above sometimes analyze a convenient intermediate movement
rule before reweighting according to nearest neighbors. Since this reweighting
can only decrease the movement distance of each input point, the
bounded-movement guarantee is preserved, as formalized by the next lemma.

\begin{lemma}[Nearest-neighbor reweighting]
\label{lem:nearest-neighbor-reweighting}
Let \(S\subseteq X\), and let \((S,w_1)\) be an
\(\varepsilon\)-bounded-movement coreset under a movement rule
\(\pi_1:X\to S\). Let \(\pi_2:X\to S\) be the nearest-neighbor movement rule
with fixed tie-breaking, and let \(w_2\) be the corresponding weights. Then
\((S,w_2)\) is an \(\varepsilon\)-bounded-movement coreset under \(\pi_2\).
\end{lemma}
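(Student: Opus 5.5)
The argument is a pointwise comparison of movement distances. Both rules have the same image set $S$, and nothing else changes. Fix an arbitrary $k$-partition $\calP$ of $X$.

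\textbf{Step 1 (pointwise domination).} Since $\pi_1(x)\in S$ and $\pi_2(x)$ is a nearest point of $S$ to $x$, we have for every $x\in X$
\[
    \dist(x,\pi_2(x))\le\dist(x,\pi_1(x)).
\]

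\textbf{Step 2 (summation).} Multiply this inequality by $|A|$ for the part $A\in\calP$ containing $x$, and sum over all parts. This gives
\[
    \sum_{A\in\calP}|A|\sum_{x\in A}\dist(x,\pi_2(x))
    \le
    \sum_{A\in\calP}|A|\sum_{x\in A}\dist(x,\pi_1(x))
    \le
    \varepsilon\cost(\calP).
\]
The last inequality is the bounded-movement hypothesis \eqref{eq:movement_cost_bound} for $\pi_1$.

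\textbf{Step 3 (consistency of the weights).} It remains to check that $w_2$ is consistent with $\pi_2$ in the sense of \Cref{def:bounded-movement-coreset}. This holds by definition, since $w_2(s)=|\pi_2^{-1}(s)|$. Also $S\subseteq X$, as required. So $(S,w_2)$ is an $\varepsilon$-bounded-movement coreset under $\pi_2$.

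\textbf{Main obstacle.} The only subtlety is that the bounded-movement condition must hold for the same partition $\calP$ under both rules. This is fine, because the left-hand side of \eqref{eq:movement_cost_bound} depends on $\calP$ only through the part sizes $|A|$ and the individual movements, not on the weights. The right-hand side $\cost(\calP)$ is the original cost and does not depend on the movement rule. If the statement is restricted to a specified family of feasible partitions, the same argument applies partition by partition.

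By the remark following \Cref{def:bounded-movement-coreset}, the triangle inequality then also makes $(S,w_2)$ a $2\varepsilon$-movement coreset.
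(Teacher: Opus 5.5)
Your proof is correct and follows the paper's argument essentially verbatim. The steps are the pointwise domination $\dist(x,\pi_2(x))\le\dist(x,\pi_1(x))$, then summation weighted by $|A|$ against the bounded-movement hypothesis for $\pi_1$, then consistency of $w_2$ with $\pi_2$ by definition.
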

\begin{proof}
For every \(x\in X\), the nearest-neighbor property gives
\[
    \dist(x,\pi_2(x))
    \le
    \dist(x,\pi_1(x)).
\]
Therefore, for every \(k\)-partition \(\calP\) of \(X\),
\[
    \sum_{A\in\calP}|A|
    \sum_{x\in A}\dist(x,\pi_2(x))
    \le
    \sum_{A\in\calP}|A|
    \sum_{x\in A}\dist(x,\pi_1(x))
    \le
    \varepsilon\cost(\calP).
\]
Since the new weights are defined according to \(\pi_2\), they are consistent
with this movement rule. Thus the reweighted set is an
\(\varepsilon\)-bounded-movement coreset.
\end{proof}

\section{AI Disclosure}
The authors used ChatGPT (OpenAI) as a writing and research-development aid, 
both to improve the clarity of author-written text and to write up proof details from author-provided ideas, proof sketches, and mathematical guidance. All results, definitions, and proof strategies originate with the authors, who verified all AI-assisted content and are fully responsible for the integrity, accuracy, and originality of the entire paper.

{\small
\bibliographystyle{alphaurl}
\bibliography{references}
} 
 
\end{document}